\pdfoutput=1
\documentclass{article}
\usepackage{arxiv}

\usepackage{graphicx}
\usepackage{hyperref}
\usepackage[utf8]{inputenc}
\usepackage[T1]{fontenc}
\usepackage{url}
\usepackage{amsfonts}
\usepackage{microtype}
\usepackage{mathptmx}
\usepackage{amsmath}
\usepackage{amsthm}
\usepackage{amssymb}

\newcommand{\doiurl}[1]{\href{https://doi.org/#1}{\nolinkurl{https://doi.org/#1}}}

\newtheoremstyle{thm-break}
  {10pt}
  {3pt}
  {\normalfont}
  {}
  {\bfseries}
  {}
  {\newline}
  {\thmname{#1}\thmnumber{ #2}\thmnote{ (#3)}\par\vspace{0.5\baselineskip}}

\theoremstyle{thm-break}
\newtheorem{theorem}{Theorem}[section]
\theoremstyle{remark}
\newtheorem{remark}[theorem]{Remark}
\theoremstyle{thm-break}
\newtheorem{definition}[theorem]{Definition}
\theoremstyle{thm-break}
\newtheorem{proposition}[theorem]{Proposition}
\theoremstyle{thm-break}
\newtheorem{lemma}[theorem]{Lemma}
\theoremstyle{thm-break}
\newtheorem{corollary}[theorem]{Corollary}
\theoremstyle{remark}
\newtheorem{example}[theorem]{Example}
\theoremstyle{thm-break}

\renewenvironment{proof}[1][Proof]{\par\noindent\textbf{#1. }\normalfont}{\hfill$\square$\par}

\newcommand{\llbracket}{[\![}
\newcommand{\rrbracket}{]\!]}
\title{QBism Logic}
\author{Kenji Tokuo\\
Department of Information Engineering, Oita College\\
National Institute of Technology, Japan\\
\texttt{tokuo@oita-ct.ac.jp}}
\date{July 19, 2026}

\begin{document}
\maketitle

\begin{abstract}
QBism interprets quantum theory as a normative discipline for an agent's probability assignments and their revision across possible experience. This paper develops a logical formalization of that picture. A well-formed core datum consists of an admissible prior space, a finite family of actual measurements, Born kernels, and update kernels. For each such datum, we introduce a guarded dynamic language for histories and posterior states and prove a global reduction theorem. We next consider effectively semialgebraic data over an effectively presented real closed field. For data in this class, we translate the fragment without dynamic operators into first-order formulas in the corresponding language of ordered rings, thereby reducing validity to first-order reasoning over real closed fields. Together, the reduction and first-order translation yield a sound and complete recursive calculus and a decision procedure for validity. Finally, assuming a symmetric informationally complete (SIC) reference measurement, we show that quantum theory in finite dimensions realizes the framework through SIC coordinates, POVMs, and quantum instruments. We also prove that the corresponding SIC image satisfies the standard qplex geometry conditions, namely the consistency bounds and the lower polar condition, and that under explicit coefficient field hypotheses the resulting quantum datum is effectively semialgebraic.
\end{abstract}

\noindent\textbf{Keywords:} QBism, quantum logic, dynamic logic, SIC representation, qplex geometry, real closed fields

\section{Introduction}\label{sec:introduction}

\subsection{Background}\label{subsec:introduction-background}

QBism presents quantum theory as a normative framework for an agent's probability assignments concerning the consequences of prospective actions. From this perspective, a quantum state functions as a compact encoding of personal expectations for future experience, and the Born rule functions as an additional normative constraint on probability assignments beyond ordinary probabilistic coherence \cite{Fuchs2017,FuchsSchack2013}.

A particularly useful form of this idea appears in the \emph{symmetric informationally complete (SIC) representation}. Relative to a fixed reference measurement, the Born rule can be written as an affine relation between a prior distribution on the outcomes of the reference measurement and conditional probability assignments for actual measurements. This representation makes the normative character of the Born rule especially transparent and has motivated extensive research on the geometry of admissible prior spaces, including \emph{qplex geometry} \cite{ApplebyFuchsStaceyZhu2017,FuchsSchack2013}.

This line of work brings several structural ingredients into close contact. A fixed reference measurement supplies a coordinate system for probability assignments, actual measurements supply the interventions whose outcomes may be experienced, and the Born rule links these two levels through an affine constraint. The qplex perspective adds a geometric account of the admissible prior space inside the probability simplex, while the Hilbert space formalism supplies the standard quantum realization of this probabilistic structure \cite{ApplebyFuchsStaceyZhu2017}.

\subsection{Aim}\label{subsec:introduction-aim}

The aim of the present paper is to give a logical formalization of QBism in which the internal structure of the QBist picture becomes explicit and mathematically tractable.

The framework is built from a \emph{well-formed core datum} $(Q,\mathcal D,B,U)$, where $Q$ is an \emph{admissible prior space}, $\mathcal D$ is a finite family of \emph{actual measurements}, $B$ is the family of \emph{Born kernels}, and $U$ is the family of \emph{update kernels}. Once such a datum is fixed, it determines the intended state space, the interpretation of the nonlogical vocabulary, and the corresponding notion of validity. The paper first develops the general theory over arbitrary well-formed core data and then studies quantum realizations in SIC coordinates.

A logical formalization of QBism must do at least four things.

First, it must represent the asymmetry between a fixed reference measurement and the family of actual measurements whose outcomes can occur in experience. In QBist terms, the reference side supplies the coordinate system for present commitment, while the actual side supplies the interventions and recorded outcomes that determine subsequent updates \cite{Fuchs2023,FuchsSchack2013}.

Second, it must represent posterior commitment explicitly. The Born rule connects a current reference prior to a current distribution over actual outcomes. A logical framework for QBism must also represent the revision of commitments after an outcome occurs \cite{Fuchs2023}. This requires histories, posterior reference priors, and a semantics that distinguishes executable from nonexecutable branches.

Third, it must specify where geometric conditions become relevant. The SIC and qplex literature suggests that admissible prior spaces have structured geometry inside the simplex. A formal treatment should distinguish basic semantic stability conditions from stronger geometric conditions that appear in the QBist literature on SIC coordinates \cite{ApplebyFuchsStaceyZhu2017}.

Fourth, it must explain the relation between the resulting structure and standard quantum theory. In the QBist setting, density operators, POVMs, and quantum instruments provide a Hilbert space realization of the probabilistic framework.

\subsection{Contributions}\label{subsec:introduction-contributions}

The paper makes three logically distinct contributions.

At the semantic level, we define the notion of a well-formed core datum. Its definition requires the Born and update kernels to satisfy their positivity and normalization conditions, the affine Born expressions to produce genuine probability distributions on the admissible prior space, and updates following outcomes of positive probability to remain inside the same space. These conditions support the definitions of histories, posterior reference priors, and guarded dynamic operators.

At the metatheoretic level, the central structural result is a \emph{global reduction theorem} for the \emph{guarded dynamic language}. It shows that every formula with dynamic operators is equivalent to a formula in the \emph{fragment without dynamic operators}. For \emph{effectively semialgebraic data}, the fragment without dynamic operators then admits a direct first-order translation into the theory of real closed fields. This reduces the validity problem to effective first-order reasoning over real closed fields. The recursive soundness, completeness, and decidability results arise from that reduction and translation.

At the realization level, whenever a SIC exists in the chosen dimension, standard quantum theory in finite dimensions yields a well-formed core datum with concrete Born kernels, concrete update kernels, and an admissible prior space given by SIC coordinates of density operators. The effective bridge theorem establishes that, under explicit coefficient field hypotheses on the SIC, the POVMs, and the instruments, the resulting quantum datum is effectively semialgebraic. The same analysis proves that the SIC image satisfies the \emph{consistency bounds} and the \emph{lower polar condition} familiar from the qplex literature \cite{ApplebyFuchsStaceyZhu2017}.

\subsection{Guarded dynamic framework}\label{subsec:introduction-guarded}

A distinctive feature of the paper is the use of a guarded dynamic framework. This choice follows from the role played by histories and updating in the intended interpretation.

In the present setting, a history is a finite sequence of actual outcomes, and the corresponding updates are evaluated recursively relative to the current admissible prior. The key semantic issue is the executability of those updates at the state under consideration. Branches with positive probability generate posterior states, whereas a branch with zero probability prevents the update sequence from being extended along that branch. The use of guarded dynamic operators follows the dynamic logic tradition \cite{HarelKozenTiuryn2000,Pratt1976}, and probabilistic dynamic logic provides relevant background for combining modal structure with probabilistic semantics \cite{FeldmanHarel1984,Kozen1985}. The present framework adapts those ideas to histories of QBist updating.

\subsection{Organization}\label{subsec:introduction-organization}

Section~\ref{sec:preliminaries} introduces the geometric and dynamic preliminaries. Section~\ref{sec:semantic-data} defines well-formed core data. Section~\ref{sec:guarded-dynamic-language} develops histories, recursive updates, the language, and the semantics. Section~\ref{sec:metatheory} proves the global reduction theorem, translates the fragment without dynamic operators into first-order formulas over real closed fields, and derives the main effective metatheoretic consequences. Section~\ref{sec:quantum-realization} presents the quantum realization, establishes that the SIC image satisfies the qplex geometry conditions, and proves the effective bridge theorem. Finally, Section~\ref{sec:concluding-remarks} concludes the paper.

\section{Preliminaries}\label{sec:preliminaries}

\subsection{Consistency bounds on the probability simplex}\label{subsec:preliminaries-consistency-bounds}

Throughout the paper, we fix a dimension parameter $d \ge 2$ and write $I := \{1,\dots,d^2\}$. We also use the constants $L_d := 1/(d(d+1))$ and $U_d := 2/(d(d+1))$. The general framework fixes both a reference outcome set of size $d^2$ and this normalization so that the abstract semantics and the later SIC realization use the same coordinates. At this stage, $d$ serves as a fixed structural parameter. The existence of a SIC is required only in Section~\ref{sec:quantum-realization}.

The basic geometric arena is the \emph{probability simplex} on the reference outcome set $I$, namely $\Delta(I) := \{h = (h_i)_{i \in I} \in \mathbb{R}^{d^2} \mid h_i \ge 0 \text{ for all } i \in I \text{ and } \sum_{i \in I} h_i = 1\}$. Elements of $\Delta(I)$ will be called \emph{reference priors}. In the intended QBist interpretation, a point $h \in \Delta(I)$ represents an agent's prior probability assignment for the outcomes of a fixed reference measurement \cite{FuchsSchack2013}.

A central role is played by the \emph{consistency bounds} $L_d \le \langle h,s\rangle \le U_d$ for pairs of priors $h,s \in \Delta(I)$. Here $\langle h,s\rangle := \sum_{i \in I} h_i s_i$ is the standard Euclidean inner product on $\mathbb{R}^{d^2}$. In the SIC and qplex setting, these bounds express a characteristic compatibility constraint on probability vectors \cite{ApplebyFuchsStaceyZhu2017}. In Section~\ref{sec:quantum-realization}, the same bounds will reappear in the geometric description of the SIC image of quantum state space.

For later use, we also introduce the \emph{affine coordinates} $\alpha_i(h) := (d+1)h_i - 1/d$ for $h \in \Delta(I)$ and $i \in I$. These quantities may take negative values. They provide the coefficients that appear naturally in the SIC representation of the Born rule \cite{FuchsSchack2013}. This notation will be used systematically in the affine Born and update expressions, the reduction formulas, and the quantum realization formulas in Sections~\ref{sec:semantic-data}, \ref{sec:metatheory}, and \ref{sec:quantum-realization}.

\subsection{Lower polarity in qplex geometry}\label{subsec:preliminaries-lower-polarity}

Given a subset $Q \subseteq \Delta(I)$, we define its \emph{lower polar} by $Q^\sharp := \{x \in \Delta(I) \mid L_d \le \langle x,s\rangle \text{ for all } s \in Q\}$. The lower polar depends only on the lower consistency bound and is the polarity notion used in the qplex setting under the normalization adopted here \cite{ApplebyFuchsStaceyZhu2017}. Informally, a set satisfies the \emph{lower polar condition} when it already contains exactly those points of the simplex that remain compatible, in the sense of the lower bound, with all of its members.

In this paper, the lower polar condition is used to describe the geometry of the realization in SIC coordinates discussed in Section~\ref{sec:quantum-realization}. In that realization, this polarity condition expresses one of the geometric features satisfied by the SIC image.

\subsection{Dynamic notation for histories}\label{subsec:preliminaries-dynamic}

The dynamic component of the theory concerns finite sequences of actual measurement outcomes. Subsection~\ref{subsec:histories} gives the formal definitions of histories, the empty history $\varepsilon$, extension by an outcome $(D,j)$, and the last record map $\operatorname{last}$. The marker $\bot$ is reserved for the absence of a preceding outcome.

Later sections associate to each admissible prior $h$ and each history $\pi$ a history probability $r_\pi(h)$, a posterior reference prior $\operatorname{up}_{\pi}(h)$ when defined, posterior coordinates $h_i^\pi(h)$, and posterior outcome probabilities $q_j^{D,\pi}(h)$. These quantities are induced by recursive updating from the kernels introduced in Section~\ref{sec:semantic-data}.

The logical language also contains dynamic operators indexed by histories. The formula $[\pi]\varphi$ states that $\varphi$ holds after the history $\pi$. The dual possibility operator is defined by $\langle \pi \rangle \varphi := \neg[\pi]\neg\varphi$. The intended semantics is a \emph{guarded partial correctness semantics}, in the sense familiar from dynamic logic. On a history that is not executable at a given state, the corresponding formula $[\pi]\varphi$ holds trivially. Section~\ref{sec:guarded-dynamic-language} gives the exact semantic clauses.

\section{Semantic data}\label{sec:semantic-data}

\subsection{Measurement framework}\label{subsec:semantic-data-measurement-framework}

The nonlogical data begin with the reference outcome set $I=\{1,\dots,d^2\}$ from Section~\ref{sec:preliminaries}. All admissible priors are indexed by this set. The framework is therefore organized around a single reference measurement at the normative level, while the available actual measurements may vary.

The actual measurement side is specified as follows.

\begin{definition}[Measurement framework]\label{def:measurement-framework}
Let $\mathcal{D}$ be a finite family of actual measurements. Each $D \in \mathcal{D}$ is equipped with a finite nonempty outcome set $O_D$. The pair $(\mathcal{D},(O_D)_{D \in \mathcal{D}})$ is the \emph{measurement framework}.
\end{definition}

The measurement framework is treated abstractly before any Hilbert space realization is imposed. An element $D \in \mathcal{D}$ is an available actual measurement type, and an element $j \in O_D$ is one of its possible recorded outcomes.

The reference measurement serves as the fixed index system for the agent's prior probabilities. The actual measurements in $\mathcal{D}$ supply the interventions that may occur in histories. The framework thereby represents present commitments through the reference measurement and subsequent updating through actual experience.

\subsection{Kernel structure}\label{subsec:semantic-data-kernels}

For each actual measurement $D \in \mathcal{D}$, we specify a \emph{Born kernel} $B^D(j|i) \in \mathbb{R}$ for $i \in I$ and $j \in O_D$. The kernel values are required to satisfy $B^D(j|i) \ge 0$ and $\sum_{j \in O_D} B^D(j|i)=1$ for every $i \in I$. Thus, for each fixed $i$, the map $j \mapsto B^D(j|i)$ is a probability distribution on $O_D$. Intuitively, $B^D(j|i)$ expresses the conditional weight assigned to the actual outcome $j$ of $D$ relative to the reference index $i$. At the abstract level, these kernels are primitive data. The quantum realization in Section~\ref{sec:quantum-realization} will relate them to standard quantum measurement theory, for which Davies--Lewis instruments and Holevo's treatment provide background \cite{DaviesLewis1970,Holevo2012}.

The corresponding actual outcome probabilities are computed from a reference prior $h \in \Delta(I)$ by the \emph{affine Born expression} $q_j^D(h):=\sum_{i \in I}\alpha_i(h)B^D(j|i)$. The coefficients $\alpha_i(h)$ may be negative. The admissible prior space is therefore restricted to priors for which every $q_j^D(h)$ is nonnegative. This is a normative constraint on admissible priors.

For each pair $(D,j)$ with $D \in \mathcal{D}$ and $j \in O_D$, we also specify an \emph{update kernel} $U^{D,j}(m|i) \in \mathbb{R}$ for $i,m \in I$. The kernel values are required to satisfy $U^{D,j}(m|i) \ge 0$ for all $i,m \in I$, and $\sum_{m \in I}U^{D,j}(m|i)=B^D(j|i)$ for every $i \in I$. Thus, for fixed $i$, the map $m \mapsto U^{D,j}(m|i)$ is a subprobability distribution on the reference index set whose total weight is exactly the conditional weight of the actual outcome $j$. This makes $U^{D,j}$ the kernel that controls posterior reweighting following the outcome $(D,j)$. These update data also accord with recent QBist discussions of dynamics \cite{DeBrotaFuchsSchack2024}.

For each reference prior $h \in \Delta(I)$ such that $q_j^D(h)>0$, define the \emph{one-step update vector} $\operatorname{up}_{D,j}(h)$ by the formula $\operatorname{up}_{D,j}(h)_m:=(\sum_{i \in I}\alpha_i(h)U^{D,j}(m|i))/q_j^D(h)$. Here the numerator is affine in the reference prior, while division by $q_j^D(h)$ provides the normalization. The closure condition introduced below ensures that, for admissible priors, this vector belongs to the same admissible prior space and therefore serves as the posterior reference prior.

\subsection{Well-formed core data}\label{subsec:semantic-data-well-formed}

The remaining component is a designated set $Q \subseteq \Delta(I)$ of \emph{admissible priors}. The role of $Q$ is to collect those priors for which the affine Born expressions define genuine probability distributions for all actual measurements in $\mathcal{D}$ and for which the corresponding one-step updates remain inside the same state space. Thus the datum of the theory is the quadruple $(Q,\mathcal{D},B,U)$, where $B$ abbreviates the family of Born kernels and $U$ abbreviates the family of update kernels.

The following definition assembles these ingredients under the conditions required for histories and guarded dynamic operators.

\begin{definition}[Well-formed core datum]\label{def:well-formed-core-datum}
Let $(\mathcal{D},(O_D)_{D \in \mathcal{D}})$ be a measurement framework, and let $B$ and $U$ be families of Born and update kernels satisfying the conditions in Subsection~\ref{subsec:semantic-data-kernels}. The quadruple $(Q,\mathcal{D},B,U)$ is a \emph{well-formed core datum} if the following conditions hold.
\begin{enumerate}
\item $Q$ is a nonempty subset of $\Delta(I)$.
\item For every $h \in Q$ and every $D \in \mathcal{D}$, the numbers $q_j^D(h)$ form a probability distribution on $O_D$.
\item Whenever $h \in Q$, $D \in \mathcal{D}$, $j \in O_D$, and $q_j^D(h)>0$, the one-step update vector $\operatorname{up}_{D,j}(h)$ belongs to $Q$.
\end{enumerate}
\end{definition}

For $h \in \Delta(I)$, normalization in clause (ii) follows from the kernel normalization and the identity $\sum_{i \in I}\alpha_i(h)=1$. Thus clause (ii) requires the affine Born expressions to be nonnegative on $Q$ and states this requirement in the probability distribution form used by the later dynamic clauses.

\medskip

\begin{example}[Well-formed core datum on the full simplex]\label{ex:full-simplex-datum}
Let $Q:=\Delta(I)$. For each $D \in \mathcal D$, define $B^D(j|i):=1/|O_D|$ and $U^{D,j}(m|i):=1/(d^2|O_D|)$. Since $\sum_{i \in I}\alpha_i(h)=1$ for every $h \in \Delta(I)$, one has $q_j^D(h)=1/|O_D|>0$. Moreover, every update gives the uniform prior $\operatorname{up}_{D,j}(h)=(1/d^2,\dots,1/d^2)$. Hence $(\Delta(I),\mathcal D,B,U)$ is a well-formed core datum in which every outcome is executable at every prior. Its admissible prior space does not satisfy the lower consistency bound, because distinct vertices $e_i,e_k\in\Delta(I)$ have $\langle e_i,e_k\rangle=0<L_d$. This datum therefore illustrates well-formedness without the additional qplex geometry satisfied by the admissible prior space in the SIC quantum realization.
\end{example}

\medskip

\begin{remark}[Layers of assumptions]
Well-formedness imposes the stability conditions required by the dynamic semantics and the global reduction theorem. The effective metatheory adds the semialgebraic assumptions introduced in Subsection~\ref{subsec:effectively-semialgebraic-data}. The SIC quantum realization developed in Section~\ref{sec:quantum-realization} is shown to satisfy the qplex geometry conditions.
\end{remark}

\section{Guarded dynamic language}\label{sec:guarded-dynamic-language}

\subsection{Histories}\label{subsec:histories}

Histories connect admissible priors with recorded outcomes. The framework covers individual actual measurements and finite histories of their outcomes, together with the posterior reference priors they generate.

\begin{definition}[Histories]\label{def:histories}
A \emph{history} is a finite sequence of actual outcomes generated by the grammar $\pi ::= \varepsilon \mid \pi;(D,j)$, where $\varepsilon$ is the empty history, $D \in \mathcal{D}$, and $j \in O_D$.
\end{definition}

The expression $\pi;(D,j)$ denotes the history obtained by appending the outcome $j$ of measurement $D$ to $\pi$.

\begin{definition}[Last record map]\label{def:last-record-map}
The \emph{last record map} is specified by $\operatorname{last}(\varepsilon):=\bot$ and $\operatorname{last}(\pi;(D,j)):=(D,j)$.
\end{definition}

The distinguished marker $\bot$ represents the absence of any prior recorded outcome. The function $\operatorname{last}$ will be used later in the semantic clause for dynamic formulas and in the record component of logical states.

Executability is always relative to a chosen admissible prior. A history is \emph{executable} at that prior precisely when the successive one-step updates along the history are defined. For this reason, histories are introduced first as syntactic objects and are paired with history probabilities and posterior reference priors only after the recursive update mechanism has been defined. The distinction between a syntactic history and an executable history is central to the guarded semantics adopted below.

\subsection{Recursive updates}\label{subsec:recursive-updates}

Let $Q$ be the admissible prior space fixed by a well-formed core datum.

\begin{definition}[Recursive updates along histories]\label{def:recursive-updates}
The \emph{posterior reference prior} $\operatorname{up}_{\pi}(h)$ associated with an admissible prior $h \in Q$ and a history $\pi$ is defined by recursion on $\pi$ as follows.
\begin{itemize}
\item $\operatorname{up}_{\varepsilon}(h):=h$.
\item $\operatorname{up}_{\pi;(D,j)}(h):=\operatorname{up}_{D,j}(\operatorname{up}_{\pi}(h))$, when $\operatorname{up}_{\pi}(h)$ is defined and $q_j^D(\operatorname{up}_{\pi}(h))>0$.
\item $\operatorname{up}_{\pi;(D,j)}(h)$ is undefined otherwise.
\end{itemize}
\end{definition}

These recursive clauses reflect the intended QBist interpretation of updating across possible experience. A posterior reference prior after a history is obtained by iterating one-step updates along the history, provided that each successive branch is executable. The positivity condition is essential. Recursive updating proceeds along branches with positive probability, while guarded semantics handles branches with zero probability.

The closure requirement built into well-formedness guarantees that whenever a one-step update is defined, its value remains inside $Q$. The next lemma states the corresponding closure fact for recursively defined posterior reference priors along histories.

\begin{lemma}[Closure under recursive updates]\label{lem:history-closure}
Let $(Q,\mathcal D,B,U)$ be a well-formed core datum. For every admissible prior $h \in Q$ and every history $\pi$, if $\operatorname{up}_{\pi}(h)$ is defined, then $\operatorname{up}_{\pi}(h) \in Q$.
\end{lemma}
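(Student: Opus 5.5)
The plan is a structural induction on the history $\pi$, following the grammar $\pi ::= \varepsilon \mid \pi;(D,j)$ of Definition~\ref{def:histories} and the recursive clauses of Definition~\ref{def:recursive-updates}. We fix an admissible prior $h \in Q$ and prove, by induction on $\pi$, the statement ``if $\operatorname{up}_{\pi}(h)$ is defined, then $\operatorname{up}_{\pi}(h) \in Q$.''

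For the base case $\pi=\varepsilon$, the update is always defined and $\operatorname{up}_{\varepsilon}(h)=h$. Since $h \in Q$ by hypothesis, the claim holds.

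For the inductive step, consider $\pi;(D,j)$ and suppose $\operatorname{up}_{\pi;(D,j)}(h)$ is defined. Only the second clause of Definition~\ref{def:recursive-updates} produces a defined value. Inspecting that clause gives two facts:
\begin{itemize}
\item $\operatorname{up}_{\pi}(h)$ is defined;
\item $q_j^D(\operatorname{up}_{\pi}(h))>0$, and $\operatorname{up}_{\pi;(D,j)}(h)=\operatorname{up}_{D,j}(\operatorname{up}_{\pi}(h))$.
\end{itemize}
By the induction hypothesis, $h' := \operatorname{up}_{\pi}(h)$ lies in $Q$. We now have $h' \in Q$, $D \in \mathcal D$, $j \in O_D$ and $q_j^D(h')>0$, so clause (iii) of Definition~\ref{def:well-formed-core-datum} yields $\operatorname{up}_{D,j}(h') \in Q$. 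This completes the induction.

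There is no real obstacle here; the argument is routine. The one point needing care is to read off definedness correctly in the inductive step. The third clause of Definition~\ref{def:recursive-updates} makes $\operatorname{up}_{\pi;(D,j)}(h)$ undefined unless both conditions of the second clause hold, so definedness of the extended history gives definedness of the prefix and positivity of the branch probability. Both are needed to invoke the induction hypothesis and closure clause (iii).
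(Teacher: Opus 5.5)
Your proof is correct and follows the paper's argument: induction on $\pi$, with base case $\operatorname{up}_{\varepsilon}(h)=h\in Q$. In the inductive step you read off from the recursive clause that $\operatorname{up}_{\pi}(h)$ is defined and $q_j^D(\operatorname{up}_{\pi}(h))>0$, then apply the induction hypothesis and closure clause (iii) of well-formedness (the paper phrases this as induction on the length of $\pi$, which amounts to the same thing).
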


\begin{proof}
We argue by induction on the length of $\pi$.

Case $\pi \equiv \varepsilon$. One has $\operatorname{up}_{\varepsilon}(h)=h \in Q$.

Case $\pi \equiv \pi';(D,j)$. Assume the claim holds for $\pi'$ and that $\operatorname{up}_{\pi';(D,j)}(h)$ is defined. By definition of recursive updates, this implies that $\operatorname{up}_{\pi'}(h)$ is defined and that $q_j^D(\operatorname{up}_{\pi'}(h))>0$. By the induction hypothesis, $\operatorname{up}_{\pi'}(h) \in Q$. Since $(Q,\mathcal D,B,U)$ is well-formed, the closure condition for one-step updates applies to $\operatorname{up}_{\pi'}(h) \in Q$ and yields $\operatorname{up}_{D,j}(\operatorname{up}_{\pi'}(h)) \in Q$. By the recursive clause, $\operatorname{up}_{\pi';(D,j)}(h)=\operatorname{up}_{D,j}(\operatorname{up}_{\pi'}(h))$, so $\operatorname{up}_{\pi';(D,j)}(h) \in Q$.
\end{proof}

\subsection{History probabilities from posterior quantities}\label{subsec:history-probabilities}

\begin{definition}[Posterior quantities]\label{def:posterior-quantities}
Let $h \in Q$ be a prior and let $\pi$ be a history. For $i \in I$, the \emph{posterior coordinate} is $h_i^{\pi}(h):=(\operatorname{up}_{\pi}(h))_i$ if $\operatorname{up}_{\pi}(h)$ is defined, and $h_i^{\pi}(h):=0$ otherwise.

For $D \in \mathcal{D}$ and $j \in O_D$, the \emph{posterior outcome probability} is $q_j^{D,\pi}(h):=q_j^D(\operatorname{up}_{\pi}(h))$ if $\operatorname{up}_{\pi}(h)$ is defined, and $q_j^{D,\pi}(h):=0$ otherwise.
\end{definition}

The value $0$ in the undefined case is a convention that renders the term semantics total. The terms indexed by histories can therefore occur in formulas without an executability guard. When a formula is intended to express a posterior commitment on an executable branch, these terms may be combined with a suitable guard. The convention keeps all terms real-valued at every state while preserving the partial character of the update clauses.

\begin{definition}[History probability]\label{def:history-probability}
The \emph{history probability} $r_\pi(h)$ of a history $\pi$ at a prior $h \in Q$ is defined by recursion as follows.
\begin{itemize}
\item $r_{\varepsilon}(h):=1$.
\item $r_{\pi;(D,j)}(h):=r_{\pi}(h)\,q_j^{D,\pi}(h)$.
\end{itemize}
\end{definition}

An induction on $\pi$ shows that $r_\pi(h)\ge0$ for every $h\in Q$. At an executable prefix, the next factor is a Born probability evaluated at a prior in $Q$. At a nonexecutable prefix, it is $0$ by convention.

When $\operatorname{up}_{\pi}(h)$ is defined, the recursive clause for an extended history can be written as $r_{\pi;(D,j)}(h)=r_{\pi}(h)\,q_j^D(\operatorname{up}_{\pi}(h))$. In that case, the probability of the extended history is the probability of the preceding history times the conditional probability of the next recorded outcome, evaluated at the posterior reference prior after the preceding history.

The numerical history probability is total, whereas the recursive update is partial. Thus $r_{\pi;(D,j)}(h)$ is always defined as a real number, even when $\operatorname{up}_{\pi;(D,j)}(h)$ is not defined. The relation between zero history probability and nonexecutability is made precise in the next subsection.

\subsection{Executability criterion}\label{subsec:history-probability-criterion}

The following lemma relates the partial recursion for posterior reference priors to the total recursion for history probabilities. A history is executable precisely when the partial update is defined. The lemma shows that this condition is equivalent to positive history probability.

\begin{lemma}[Characterization of executability]\label{lem:definability}
For every admissible prior $h \in Q$ and every history $\pi$, the posterior reference prior $\operatorname{up}_{\pi}(h)$ is defined if and only if $r_{\pi}(h)>0$. Consequently, for every $D \in \mathcal{D}$ and every $j \in O_D$, the posterior reference prior $\operatorname{up}_{\pi;(D,j)}(h)$ is defined if and only if $r_{\pi;(D,j)}(h)>0$, and $r_{\pi;(D,j)}(h)>0$ if and only if both $r_{\pi}(h)>0$ and $q_j^{D,\pi}(h)>0$.
\end{lemma}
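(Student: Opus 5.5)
The plan is to argue by induction on the length of $\pi$, proving the main biconditional ($\operatorname{up}_{\pi}(h)$ defined iff $r_\pi(h)>0$) for all $h \in Q$ simultaneously. The two ``consequently'' clauses then follow by specializing to $\pi;(D,j)$ and reading off the inductive step. Besides the recursive clauses, the only ingredients are the nonnegativity facts already available: $r_\pi(h)\ge 0$ for $h\in Q$, noted after Definition~\ref{def:history-probability}, and $q_j^D(g)\ge 0$ for $g\in Q$, from clause (ii) of Definition~\ref{def:well-formed-core-datum}. These connect to the posterior prior through Lemma~\ref{lem:history-closure}.

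The base case $\pi\equiv\varepsilon$ is immediate. One has $\operatorname{up}_\varepsilon(h)=h$, which is defined, and $r_\varepsilon(h)=1>0$.

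For the inductive step, take $\pi;(D,j)$ and assume the claim for $\pi$. I will split on whether $\operatorname{up}_\pi(h)$ is defined.
\begin{itemize}
\item If $\operatorname{up}_\pi(h)$ is undefined, then $\operatorname{up}_{\pi;(D,j)}(h)$ is undefined by Definition~\ref{def:recursive-updates}. Also $q_j^{D,\pi}(h)=0$ by the convention of Definition~\ref{def:posterior-quantities}, so $r_{\pi;(D,j)}(h)=0$. Both sides of the biconditional are therefore false.
\item If $\operatorname{up}_\pi(h)$ is defined, the induction hypothesis gives $r_\pi(h)>0$. In addition, $q_j^{D,\pi}(h)=q_j^D(\operatorname{up}_\pi(h))$. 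By definition, $\operatorname{up}_{\pi;(D,j)}(h)$ is defined iff $q_j^D(\operatorname{up}_\pi(h))>0$. Since $r_\pi(h)>0$, this holds iff the product $r_\pi(h)\,q_j^{D,\pi}(h)=r_{\pi;(D,j)}(h)$ is positive.
\end{itemize}
This closes the induction.

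For the final equivalence, $r_{\pi;(D,j)}(h)>0$ iff ($r_\pi(h)>0$ and $q_j^{D,\pi}(h)>0$), I will use that both factors are nonnegative. Nonnegativity of $r_\pi(h)$ is stated in the paper. For $q_j^{D,\pi}(h)$: when $\operatorname{up}_\pi(h)$ is defined it lies in $Q$ by Lemma~\ref{lem:history-closure}, so clause (ii) gives $q_j^{D,\pi}(h)\ge0$; otherwise the value is $0$ by convention. A product of two nonnegative reals is positive iff both factors are positive. The middle claim is simply the main biconditional applied to the history $\pi;(D,j)$.

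I expect no real obstacle. The only point needing care is the sign issue: Born expressions are affine with possibly negative coefficients $\alpha_i(h)$, so their nonnegativity must be secured by invoking Lemma~\ref{lem:history-closure} together with clause (ii), rather than assumed for arbitrary points of $\Delta(I)$.
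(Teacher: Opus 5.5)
Your proposal is correct and is essentially the paper's proof: induction on the length of $\pi$, unfolding the recursive clauses for $\operatorname{up}_\pi$ and $r_\pi$, with the totalization convention and the nonnegativity of $r_\pi(h)$ and $q_j^{D,\pi}(h)$ (via Lemma~\ref{lem:history-closure} and clause (ii)) supplying the product criterion. The only difference is bookkeeping: your case split on whether $\operatorname{up}_\pi(h)$ is defined lets the main biconditional go through without nonnegativity, which you then use only for the final clause, whereas the paper chains the equivalences in one pass and reads the final clause off that chain.
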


\begin{proof}
The proof is by induction on the length of $\pi$.

Case $\pi \equiv \varepsilon$. The posterior reference prior $\operatorname{up}_{\varepsilon}(h)=h$ is always defined and $r_{\varepsilon}(h)=1>0$, so the claim holds.

Case $\pi \equiv \pi';(D,j)$. Assume the claim holds for $\pi'$. By definition, $\operatorname{up}_{\pi';(D,j)}(h)$ is defined if and only if $\operatorname{up}_{\pi'}(h)$ is defined and $q_j^D(\operatorname{up}_{\pi'}(h))>0$. By the induction hypothesis, this is equivalent to $r_{\pi'}(h)>0$ and, because $\operatorname{up}_{\pi'}(h)$ is then defined, to $r_{\pi'}(h)>0$ and $q_j^{D,\pi'}(h)=q_j^D(\operatorname{up}_{\pi'}(h))>0$. Since $r_{\pi';(D,j)}(h)=r_{\pi'}(h)\,q_j^{D,\pi'}(h)$ and both factors are nonnegative, this is equivalent to $r_{\pi';(D,j)}(h)>0$.

This proves the first statement for $\pi';(D,j)$. The final equivalence stated in the lemma is the condition just obtained.
\end{proof}

Combining Lemmas~\ref{lem:history-closure} and~\ref{lem:definability}, we obtain that whenever $r_{\pi}(h)>0$, the posterior reference prior $\operatorname{up}_{\pi}(h)$ belongs to $Q$. Lemma~\ref{lem:definability} also allows executability and posterior evaluation to be expressed uniformly in terms of $r_{\pi}(h)$ and $q_j^{D,\pi}(h)$.

\subsection{Syntax}\label{subsec:syntax}

The \emph{object language} describes reference priors, actual outcome probabilities, history probabilities, posterior coordinates, and guarded dynamic evolution along histories.

The language is built over a fixed well-formed core datum. In particular, the measurements in $\mathcal D$, their outcome sets $O_D$, and the reference index set $I$ determine the indexed symbols of the language, while the fixed kernel families $B$ and $U$ determine their semantic interpretation. In addition, the language contains constant symbols used in numerical comparisons. It includes symbols for all rational numbers, interpreted by their usual values, and may include additional symbols naming real numbers. The permitted additional constants differ between the unrestricted semantic setting and the effective setting. See Subsection~\ref{subsec:semantic-settings}.

\begin{definition}[Object language]\label{def:object-language}
Terms are generated by the grammar $t ::= a \mid h_i \mid q_j^D \mid r_\pi \mid h_i^\pi \mid q_j^{D,\pi} \mid (t+t) \mid (-t) \mid (t \cdot t)$, where $a$ ranges over the chosen constant symbols, $i \in I$, $D \in \mathcal{D}$, $j \in O_D$, and $\pi$ ranges over histories.

Formulas are generated by the grammar $\varphi ::= \mathsf{Last}(D,j) \mid (t \ge 0) \mid \neg \varphi \mid (\varphi \wedge \psi) \mid [\pi]\varphi$. The remaining notation consists of abbreviations. For terms, subtraction, equality, and strict inequality are introduced by $(t-s):=(t+(-s))$, $(t=s):=(((t-s) \ge 0) \wedge ((s-t) \ge 0))$, and $(t>0):=\neg((-t) \ge 0)$. For Boolean notation, we use $\varphi \vee \psi := \neg(\neg \varphi \wedge \neg \psi)$, $\varphi \to \psi := (\neg \varphi \vee \psi)$, $\varphi \leftrightarrow \psi := ((\varphi \to \psi) \wedge (\psi \to \varphi))$, and $\top := (0 \ge 0)$. The dual possibility operator is $\langle \pi \rangle \varphi := \neg [\pi]\neg \varphi$. Finally, the one-step box notation is $[D:j]\varphi := [\varepsilon;(D,j)]\varphi$.
\end{definition}

The basic terms describe current and posterior probabilistic quantities. For a current reference prior, $h_i$ is its $i$th coordinate, $q_j^D$ the probability assigned to outcome $j$ of measurement $D$, and $r_\pi$ the probability of history $\pi$. The terms $h_i^\pi$ and $q_j^{D,\pi}$, which are indexed by histories, express the corresponding posterior quantities after $\pi$.

We use $\mathcal L$ for the \emph{full language}, including the dynamic operators $[\pi]$, and $\mathcal L_0$ for the \emph{fragment without dynamic operators}, obtained by omitting the clause $[\pi]\varphi$ from the formation rules while retaining the terms indexed by histories. The dynamic operators provide a direct formulation of the guarded dynamic semantics, while the global reduction theorem in Section~\ref{sec:metatheory} shows that every formula of $\mathcal L$ is equivalent to a formula in $\mathcal L_0$.

\subsection{Truth conditions for logical states}\label{subsec:truth-conditions}

\begin{definition}[Logical states]\label{def:logical-states}
A \emph{logical state} is a pair $\sigma=(h,l)$, where $h \in Q$ is an admissible prior and $l$ is a \emph{record marker}. The marker is either $\bot$ or a pair $(D,j)$ with $D \in \mathcal{D}$ and $j \in O_D$.
\end{definition}

The component $h$ carries the current normative probabilistic commitment in reference coordinates. The component $l$ carries the most recent recorded outcome, with $\bot$ marking its absence.

\begin{definition}[Satisfaction relation]\label{def:satisfaction-relation}
The denotation of terms at a logical state $\sigma=(h,l)$ is defined recursively from the basic semantic quantities introduced in Subsections~\ref{subsec:preliminaries-consistency-bounds}, \ref{subsec:semantic-data-kernels}, and \ref{subsec:history-probabilities} as follows.
\begin{itemize}
\item $\llbracket h_i \rrbracket_\sigma := h_i$.
\item $\llbracket q_j^D \rrbracket_\sigma := q_j^D(h)$.
\item $\llbracket r_\pi \rrbracket_\sigma := r_\pi(h)$.
\item $\llbracket h_i^\pi \rrbracket_\sigma := h_i^\pi(h)$.
\item $\llbracket q_j^{D,\pi} \rrbracket_\sigma := q_j^{D,\pi}(h)$.
\item $\llbracket a \rrbracket_\sigma := a$, where the occurrence of $a$ on the right denotes the real number named by the constant symbol.
\item $\llbracket t+u \rrbracket_\sigma := \llbracket t \rrbracket_\sigma+\llbracket u \rrbracket_\sigma$.
\item $\llbracket -t \rrbracket_\sigma := -\llbracket t \rrbracket_\sigma$.
\item $\llbracket t\cdot u \rrbracket_\sigma := \llbracket t \rrbracket_\sigma\cdot\llbracket u \rrbracket_\sigma$.
\end{itemize}

Truth of formulas is defined by recursion on formula complexity as follows.
\begin{itemize}
\item $\sigma \models \mathsf{Last}(D,j)$ if and only if $l=(D,j)$.
\item $\sigma \models (t \ge 0)$ if and only if $\llbracket t \rrbracket_\sigma \ge 0$.
\item $\sigma \models \neg\varphi$ if and only if $\sigma \not\models \varphi$.
\item $\sigma \models \varphi \wedge \psi$ if and only if $\sigma \models \varphi$ and $\sigma \models \psi$.
\item $\sigma \models [\varepsilon]\varphi$ if and only if $\sigma \models \varphi$.
\item $\sigma \models [\pi]\varphi$ if and only if either $r_\pi(h)=0$ or $(\operatorname{up}_{\pi}(h),\operatorname{last}(\pi)) \models \varphi$, for nonempty $\pi$.
\end{itemize}
\end{definition}

Let $\mathrm{Rec}:=\{\bot\} \cup \{(D,j) : D \in \mathcal{D},\ j \in O_D\}$ be the finite set of record markers.

\begin{definition}[Validity over a well-formed core datum]\label{def:validity-over-well-formed-core-datum}
A formula $\varphi \in \mathcal L$ is \emph{valid over} $(Q,\mathcal D,B,U)$, written $\models_{(Q,\mathcal D,B,U)}\varphi$, if $(h,l) \models \varphi$ for every $h \in Q$ and every $l \in \mathrm{Rec}$.
\end{definition}

\medskip

\begin{remark}[Reachability of logical states]\label{rem:logical-states-reachability}
Validity is taken over every logical state $(h,l)$ with $h \in Q$ and $l \in \mathrm{Rec}$, independently of reachability from an initial marker by executable histories. This choice keeps the semantics uniform and makes validity independent of a chosen initial record. Formulas containing $\mathsf{Last}$ are therefore tested at every possible record marker.
\end{remark}

This is the guarded partial correctness interpretation familiar from dynamic logic. The guard $r_\pi(h)=0$ covers nonexecutable histories, while Lemma~\ref{lem:definability} ensures that the posterior state is well-defined whenever $r_\pi(h)>0$.

The term language contains both current quantities and posterior quantities indexed by histories, while truth is evaluated at a single logical state. Here the dynamic operators are evaluated at recursively generated posterior states, so posterior commitment depends on the current state through guarded executability.

\subsection{Examples of guarded QBist formulas}\label{subsec:guarded-examples}

The preceding clauses define the semantics abstractly. The following examples show the kind of QBist assertions expressed by the language before any Hilbert space realization is imposed. Their truth values depend on the chosen well-formed core datum and the current admissible prior. Throughout these examples, let $\sigma=(h,l)$ be a logical state. We write $\langle D:j\rangle\varphi$ as an abbreviation for $\langle \varepsilon;(D,j)\rangle\varphi$.

\medskip

\begin{example}[Posterior commitment on an executable branch]\label{ex:guarded-executability-posterior}
Let $D,E\in\mathcal D$, $j\in O_D$, $k\in O_E$, and let $c$ be a constant symbol of the language.

The formula $\langle D:j\rangle\top$ says exactly that the actual outcome $j$ of $D$ is executable at the current prior. One has $\sigma\models \langle D:j\rangle\top$ if and only if $q_j^D(h)>0$. Thus the Born probability attached to the current reference prior determines the truth of the corresponding possibility formula.

The formula $\langle D:j\rangle(q_k^E-c\ge 0)$ is stronger. It requires the outcome $(D,j)$ to be executable and the posterior probability assigned to outcome $k$ of the further actual measurement $E$ to be at least $c$ after the corresponding update. Equivalently, $q_j^D(h)>0$ and $q_k^E(\operatorname{up}_{D,j}(h))\ge c$. This formula therefore states a posterior probability constraint on an executable branch.

The boxed formula $[D:j](q_k^E-c\ge0)$ expresses a different condition. It says that either $q_j^D(h)=0$, so the branch is nonexecutable, or else the posterior inequality $q_k^E(\operatorname{up}_{D,j}(h))\ge c$ holds. Thus $[D:j]\varphi$ expresses a conditional normative assessment, whereas $\langle D:j\rangle\varphi$ also requires the branch to be executable.
\end{example}

\medskip

\begin{example}[Branches with zero probability]\label{ex:guarded-zero-branches}
Let $D\in\mathcal D$ and $j\in O_D$. For every formula $\varphi$, both $(q_j^D=0)\to [D:j]\varphi$ and $(q_j^D=0)\to \neg\langle D:j\rangle\top$ are valid over every well-formed core datum. Thus, when the current Born probability of a branch is zero, every corresponding boxed formula is true vacuously and $\langle D:j\rangle\top$ is false. A nonvacuous posterior assertion can be written as $\langle D:j\rangle\top\wedge [D:j]\varphi$, or as $\langle D:j\rangle\varphi$ when the intended assertion is $\varphi$ at the posterior state.
\end{example}

\medskip

\begin{example}[Sequential availability]\label{ex:guarded-sequential-availability}
Let $D,E\in\mathcal D$, $j\in O_D$, and $k\in O_E$, and put $\pi:=\varepsilon;(D,j);(E,k)$. The formula $\langle D:j\rangle\langle E:k\rangle\top$ says that the two-step history $\pi$ is executable at the current prior. By Lemma~\ref{lem:definability}, this is equivalent at $\sigma$ to $r_\pi(h)>0$, and hence to the conjunction of $q_j^D(h)>0$ and $q_k^E(\operatorname{up}_{D,j}(h))>0$.

By contrast, $[D:j]\langle E:k\rangle\top$ says that either the first branch $(D,j)$ is nonexecutable, or, if it is executable, the further outcome $(E,k)$ remains available after the corresponding update. This formula is true at $\sigma$ if either $q_j^D(h)=0$ or $q_k^{E,\varepsilon;(D,j)}(h)>0$. It therefore expresses a stability condition on possible future experiences after a specified actual experience, while preserving the vacuous truth behavior of the guarded box on nonexecutable first branches.

The same formulas also illustrate the global reduction theorem of Subsection~\ref{subsec:global-reduction}. Up to the abbreviations introduced above, $[D:j]\langle E:k\rangle\top$ is represented in $\mathcal L_0$ by $(q_j^D=0)\vee(q_k^{E,\varepsilon;(D,j)}>0)$, while $\langle D:j\rangle\langle E:k\rangle\top$ is represented by $(r_{\varepsilon;(D,j);(E,k)}>0)$. Although the dynamic formulas are reducible to numerical conditions in $\mathcal L_0$, they display the sequence of updates directly.
\end{example}

\medskip

\begin{example}[Dependence of posterior commitments on order]\label{ex:guarded-order-dependence}
Let $D,E\in\mathcal D$, let $j\in O_D$ and $k\in O_E$, and define $\pi_0:=\varepsilon;(D,j);(E,k)$ and $\pi_1:=\varepsilon;(E,k);(D,j)$. The formula $\langle \pi_0\rangle\top\wedge \langle \pi_1\rangle\top\wedge (q_j^{D,\pi_1}-q_j^{D,\pi_0}>0)$ says that both histories are executable and that the posterior probability assigned to outcome $j$ in a subsequent performance of $D$ is greater after $\pi_1$ than after $\pi_0$.

The language evaluates claims of this kind relative to a chosen well-formed core datum and a chosen current prior. It compares two posterior commitments generated from the same current reference prior by two different histories. The formulas of the guarded language provide comparisons between posterior commitments produced by alternative actual histories, while a static description of the admissible prior space $Q$ specifies the allowed priors.
\end{example}

\medskip

\begin{example}[Comparison of reference coordinates with actual outcome probabilities]\label{ex:guarded-reference-actual-comparison}
The language can also express mixed assertions involving reference coordinates and actual outcome probabilities. For a history $\pi$, an index $i\in I$, a measurement $D\in\mathcal D$, and an outcome $j\in O_D$, the formula $\langle \pi\rangle\top\wedge (h_i^\pi-h_i>0)\wedge (q_j^D-q_j^{D,\pi}>0)$ says that the history is executable, that the posterior probability assigned to reference outcome $i$ is larger than its current probability, and that the posterior probability of the actual outcome $(D,j)$ is smaller than its current probability.

Such a formula is useful because the reference side and the actual side play different roles. The term $h_i^\pi$ concerns the posterior coordinate relative to the fixed reference measurement, while $q_j^{D,\pi}$ concerns the posterior probability of a possible actual outcome. The formula therefore compares the change in a reference coordinate with the change in an actual outcome probability.
\end{example}

\medskip

\begin{example}[Record markers in guarded updates]\label{ex:guarded-record-marker}
Let $D,E\in\mathcal D$, $j\in O_D$, $k\in O_E$, and let $c$ be a constant symbol of the language. The formula $[D:j]\mathsf{Last}(D,j)$ is valid over every well-formed core datum and displays the record component of the semantics. If the branch $(D,j)$ is executable, the updated logical state has record marker $(D,j)$. If the branch is not executable, the box is true vacuously.

A formula such as $\langle D:j\rangle\bigl(\mathsf{Last}(D,j)\wedge(q_k^E-c\ge0)\bigr)$ combines the updated marker with a posterior numerical assertion. It says that the outcome $(D,j)$ is executable, that the resulting record marker is $(D,j)$, and that the posterior probability of the later outcome $(E,k)$ is at least $c$. The $\mathsf{Last}(D,j)$ conjunct makes the updated marker explicit alongside the numerical condition.
\end{example}

\subsection{Two semantic settings}\label{subsec:semantic-settings}

The abstract language can be interpreted in two settings, depending on the role assigned to the theory.

In the \emph{unrestricted semantic setting}, the language may contain constant symbols naming arbitrary real numbers, each interpreted by its designated value. The background arithmetic is the ordered real field expanded by these constants. This setting is appropriate for presenting the semantics and proving validity statements over real-valued models. It is also used for the quantum realization in Section~\ref{sec:quantum-realization}, since the structure constants arising there need not be given effectively.

In the \emph{effective setting}, one works with an \emph{effectively presented real closed field} $K \subseteq \mathbb{R}$ containing all structure constants of the chosen well-formed core datum. The object language is restricted to constant symbols naming elements of $K$, and each such symbol is supplied with the code of its designated element in the fixed effective presentation of $K$. The background arithmetic is the first-order theory of real closed fields in the ordered ring signature expanded by constant symbols for elements of $K$, as in the standard model-theoretic treatment of real closed fields \cite{Marker2002}.

The two settings use the same formation rules, the same intended semantics of histories and updates, and the same guarded dynamic interpretation. They differ in the metatheoretic resources available for studying validity. The effective results of Section~\ref{sec:metatheory} are proved for effectively semialgebraic data in the effective setting.

\section{Metatheory}\label{sec:metatheory}

The global reduction theorem requires only well-formedness. Throughout Subsections~\ref{subsec:global-reduction} and~\ref{subsec:reduction-correctness}, fix an arbitrary well-formed core datum $(Q,\mathcal D,B,U)$. The effective assumptions needed for the first-order translation are introduced afterward.

\subsection{Global reduction}\label{subsec:global-reduction}

We now prove the global reduction theorem for the guarded dynamic language. Recall that $\mathcal L$ is the full language, whereas $\mathcal L_0$ is its fragment without dynamic operators, obtained by omitting the clause $[\pi]\varphi$ from the formation rules while retaining the terms indexed by histories.

The reduction proof uses two preliminary constructions, posterior states and concatenation of histories.

\begin{definition}[Posterior state]\label{def:posterior-state}
Let $\sigma=(h,l)$ be a logical state. The \emph{posterior state} after a history $\pi$ is denoted by $\sigma_\pi$. Set $\sigma_\varepsilon := \sigma$. If $\pi$ is nonempty and $r_\pi(h)>0$, set $\sigma_\pi := (\operatorname{up}_{\pi}(h),\operatorname{last}(\pi))$.
\end{definition}

The state $\sigma_\pi$ is defined if and only if either $\pi \equiv \varepsilon$ or the history $\pi$ is executable at $\sigma$.

To analyze nested dynamic operators, we also need an operation that appends one history to another. Since histories are built by appending one outcome at a time, concatenation follows the same recursive pattern.

\begin{definition}[Concatenation of histories]\label{def:history-concatenation}
If $\pi_0$ and $\pi_1$ are histories, their \emph{concatenation} $\pi_0;\pi_1$ is defined by recursion on $\pi_1$ as follows.
\begin{itemize}
\item $\pi_0;\varepsilon := \pi_0$.
\item $\pi_0;(\pi_1;(D,j)) := (\pi_0;\pi_1);(D,j)$.
\end{itemize}
\end{definition}

The first lemma describes the behavior of probabilities and posterior reference priors under concatenation. It is the formal backbone of the reduction argument, because nested dynamic operators will eventually be analyzed by splitting a long history into an initial segment and a remainder.

\begin{lemma}[Concatenation identities]\label{lem:concatenation-identities}
Let $h \in Q$, and let $\pi_0$ and $\pi_1$ be histories.
\begin{enumerate}
\item If $\operatorname{up}_{\pi_0}(h)$ is defined, then $r_{\pi_0;\pi_1}(h)=r_{\pi_0}(h)\,r_{\pi_1}(\operatorname{up}_{\pi_0}(h))$.
\item If $\operatorname{up}_{\pi_0}(h)$ is defined, then $\operatorname{up}_{\pi_0;\pi_1}(h)$ is defined if and only if $\operatorname{up}_{\pi_1}(\operatorname{up}_{\pi_0}(h))$ is defined. Whenever these are defined, they are equal.
\end{enumerate}
\end{lemma}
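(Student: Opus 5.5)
We fix $h \in Q$ and $\pi_0$ with $\operatorname{up}_{\pi_0}(h)$ defined, write $g := \operatorname{up}_{\pi_0}(h)$, and proceed by induction on the length of $\pi_1$, following the recursion of Definition~\ref{def:history-concatenation}. By Lemma~\ref{lem:history-closure}, $g \in Q$, so $r_{\pi_1}(g)$ and $\operatorname{up}_{\pi_1}(g)$ are meaningful. We prove (ii) first, because the inductive step for (i) uses the identification of the posterior quantity $q_j^{D,\pi_0;\pi_1}(h)$ with $q_j^{D,\pi_1}(g)$.

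For the base case $\pi_1 \equiv \varepsilon$, we have $\pi_0;\varepsilon = \pi_0$. Then (ii) reads: $\operatorname{up}_{\pi_0}(h)$ is defined iff $\operatorname{up}_\varepsilon(g)=g$ is defined, and the two are equal. Both hold trivially. For (i), $r_{\pi_0}(h)=r_{\pi_0}(h)\cdot r_\varepsilon(g)$ since $r_\varepsilon(g)=1$.

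In the inductive step $\pi_1 \equiv \pi_1';(D,j)$, we have $\pi_0;\pi_1 = (\pi_0;\pi_1');(D,j)$.
\begin{itemize}
\item For (ii): by Definition~\ref{def:recursive-updates}, $\operatorname{up}_{(\pi_0;\pi_1');(D,j)}(h)$ is defined iff $\operatorname{up}_{\pi_0;\pi_1'}(h)$ is defined and $q_j^D$ of it is positive. By the induction hypothesis, the first condition is equivalent to $\operatorname{up}_{\pi_1'}(g)$ being defined, with equal values. So the whole condition is equivalent to $\operatorname{up}_{\pi_1'}(g)$ being defined and $q_j^D(\operatorname{up}_{\pi_1'}(g))>0$, which is exactly definedness of $\operatorname{up}_{\pi_1}(g)$. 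Equality of values then follows by applying $\operatorname{up}_{D,j}$ to equal arguments.
\item For (i): from the recursion, $r_{\pi_0;\pi_1}(h)=r_{\pi_0;\pi_1'}(h)\,q_j^{D,\pi_0;\pi_1'}(h)$, and by induction this equals $r_{\pi_0}(h)\,r_{\pi_1'}(g)\,q_j^{D,\pi_0;\pi_1'}(h)$. Comparing with $r_{\pi_1}(g)=r_{\pi_1'}(g)\,q_j^{D,\pi_1'}(g)$, it suffices to show $q_j^{D,\pi_0;\pi_1'}(h)=q_j^{D,\pi_1'}(g)$. This follows from (ii) for $\pi_1'$ together with Definition~\ref{def:posterior-quantities}. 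If both updates are defined, they are equal and we apply $q_j^D$. If both are undefined, both quantities are $0$ by convention.
\end{itemize}

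The only delicate point is this last case split: the convention that undefined posteriors give value $0$ must be matched on both sides. Part (ii) supplies exactly the simultaneous definedness needed, which is why (ii) is established at each stage before (i). Nothing beyond the recursive definitions and Lemma~\ref{lem:history-closure} is required. In particular, Lemma~\ref{lem:definability} is not needed here.
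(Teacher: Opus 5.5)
Your proof is correct and follows the same scheme as the paper: a simultaneous induction on the length of $\pi_1$, with an essentially identical argument for (ii). The one difference is in the inductive step for (i). The paper splits on whether $r_{\pi_1'}(\operatorname{up}_{\pi_0}(h))$ is zero or positive and invokes Lemma~\ref{lem:definability} to obtain definedness. You instead split on definedness directly, using the induction hypothesis for (ii) at $\pi_1'$ and the zero convention of Definition~\ref{def:posterior-quantities}. This removes the dependence on Lemma~\ref{lem:definability}, and your appeal to Lemma~\ref{lem:history-closure} for $g\in Q$ makes explicit a point the paper leaves tacit.
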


\begin{proof}
We prove both statements simultaneously by induction on the length of $\pi_1$.

Base case $\pi_1 \equiv \varepsilon$. One has $\pi_0;\varepsilon=\pi_0$, $r_\varepsilon(\operatorname{up}_{\pi_0}(h))=1$, and $\operatorname{up}_{\varepsilon}(\operatorname{up}_{\pi_0}(h))=\operatorname{up}_{\pi_0}(h)$. Therefore $r_{\pi_0;\varepsilon}(h)=r_{\pi_0}(h)=r_{\pi_0}(h)\,r_\varepsilon(\operatorname{up}_{\pi_0}(h))$, which proves part (i), and $\operatorname{up}_{\pi_0;\varepsilon}(h)=\operatorname{up}_{\pi_0}(h)=\operatorname{up}_{\varepsilon}(\operatorname{up}_{\pi_0}(h))$, which proves part (ii).

Induction step. Assume that $\pi_1$ is nonempty, so $\pi_1 \equiv \pi';(D,j)$, and that both statements hold for $\pi'$.

We first prove part (i) for $\pi';(D,j)$. By definition of concatenation and the recursive clause for history probabilities, $r_{\pi_0;(\pi';(D,j))}(h)=r_{(\pi_0;\pi');(D,j)}(h)=r_{\pi_0;\pi'}(h)\,q_j^{D,\pi_0;\pi'}(h)$. By the induction hypothesis for part (i), $r_{\pi_0;\pi'}(h)=r_{\pi_0}(h)\,r_{\pi'}(\operatorname{up}_{\pi_0}(h))$. Hence $r_{\pi_0;(\pi';(D,j))}(h)=r_{\pi_0}(h)\,r_{\pi'}(\operatorname{up}_{\pi_0}(h))\,q_j^{D,\pi_0;\pi'}(h)$.

Subcase $r_{\pi'}(\operatorname{up}_{\pi_0}(h))=0$. The right hand side is $0$. On the other hand, by the induction hypothesis for part (i), $r_{\pi_0;\pi'}(h)=0$, so the recursive definition also gives $r_{\pi_0;(\pi';(D,j))}(h)=0$. Since $r_{\pi';(D,j)}(\operatorname{up}_{\pi_0}(h))=r_{\pi'}(\operatorname{up}_{\pi_0}(h))\,q_j^{D,\pi'}(\operatorname{up}_{\pi_0}(h))=0$, we conclude that $r_{\pi_0;(\pi';(D,j))}(h)=r_{\pi_0}(h)\,r_{\pi';(D,j)}(\operatorname{up}_{\pi_0}(h))$.

Subcase $r_{\pi'}(\operatorname{up}_{\pi_0}(h))>0$. By Lemma~\ref{lem:definability}, the posterior reference prior $\operatorname{up}_{\pi'}(\operatorname{up}_{\pi_0}(h))$ is defined. Since $\operatorname{up}_{\pi_0}(h)$ is defined by assumption, the induction hypothesis for part (ii) yields that $\operatorname{up}_{\pi_0;\pi'}(h)$ is defined and $\operatorname{up}_{\pi_0;\pi'}(h)=\operatorname{up}_{\pi'}(\operatorname{up}_{\pi_0}(h))$. Therefore $q_j^{D,\pi_0;\pi'}(h)=q_j^D(\operatorname{up}_{\pi_0;\pi'}(h))=q_j^D(\operatorname{up}_{\pi'}(\operatorname{up}_{\pi_0}(h)))=q_j^{D,\pi'}(\operatorname{up}_{\pi_0}(h))$. Hence
$r_{\pi_0;(\pi';(D,j))}(h)=r_{\pi_0}(h)\,r_{\pi'}(\operatorname{up}_{\pi_0}(h))\,q_j^{D,\pi'}(\operatorname{up}_{\pi_0}(h))=r_{\pi_0}(h)\,r_{\pi';(D,j)}(\operatorname{up}_{\pi_0}(h))$.
This proves part (i) for $\pi';(D,j)$.

We next prove part (ii) for $\pi';(D,j)$. By definition of recursive updates, $\operatorname{up}_{\pi_0;(\pi';(D,j))}(h)$ is defined if and only if both $\operatorname{up}_{\pi_0;\pi'}(h)$ is defined and $q_j^D(\operatorname{up}_{\pi_0;\pi'}(h))>0$. By the induction hypothesis for part (ii), this is equivalent to $\operatorname{up}_{\pi'}(\operatorname{up}_{\pi_0}(h))$ being defined and $q_j^D(\operatorname{up}_{\pi'}(\operatorname{up}_{\pi_0}(h)))>0$. Applying the recursive definition of updates once more, this is equivalent to $\operatorname{up}_{\pi';(D,j)}(\operatorname{up}_{\pi_0}(h))$ being defined.

Assume these equivalent conditions hold. Then, by the recursive clause and the induction hypothesis for part (ii), $\operatorname{up}_{\pi_0;(\pi';(D,j))}(h)=\operatorname{up}_{D,j}(\operatorname{up}_{\pi_0;\pi'}(h))=\operatorname{up}_{D,j}(\operatorname{up}_{\pi'}(\operatorname{up}_{\pi_0}(h)))=\operatorname{up}_{\pi';(D,j)}(\operatorname{up}_{\pi_0}(h))$. Thus part (ii) also holds for $\pi';(D,j)$.

By induction, both statements hold for every history $\pi_1$.
\end{proof}

The following definition collects the operations used in the reduction. When a formula is evaluated after a history $\pi$, the term translation replaces each current quantity by the corresponding posterior quantity, and every further dynamic operator is rewritten as a static executability condition together with evaluation after the concatenated history.

\begin{definition}[Reduction operations]\label{def:reduction-operations}
The reduction operations are specified as follows.

\begin{enumerate}
\item \emph{Auxiliary terms.}
Let $\pi$ be a nonempty history. For each history $\pi_1$, the auxiliary term $R_{\pi_1}^\pi$ is defined by recursion on $\pi_1$ as follows.
\begin{itemize}
\item $R_\varepsilon^\pi := 1$.
\item $R_{\pi_1;(D,j)}^\pi := R_{\pi_1}^\pi \cdot q_j^{D,\pi;\pi_1}$.
\end{itemize}
\item \emph{Term translation.}
For each history $\pi$, the term translation associated with $\pi$ is the map $t \mapsto T_\pi(t)$. For the empty history, set $T_\varepsilon(t) := t$ for every term $t$. For a nonempty history $\pi$, define $T_\pi(t)$ by recursion on the structure of $t$ as follows.
\begin{itemize}
\item $T_\pi(a) := a$.
\item $T_\pi(h_i) := h_i^\pi$.
\item $T_\pi(q_j^D) := q_j^{D,\pi}$.
\item $T_\pi(r_{\pi_1}) := R_{\pi_1}^\pi$.
\item $T_\pi(h_i^{\pi_1}) := h_i^{\pi;\pi_1}$.
\item $T_\pi(q_j^{D,\pi_1}) := q_j^{D,\pi;\pi_1}$.
\item $T_\pi(u+v) := T_\pi(u)+T_\pi(v)$.
\item $T_\pi(-u) := -T_\pi(u)$.
\item $T_\pi(u\cdot v) := T_\pi(u)\cdot T_\pi(v)$.
\end{itemize}

\item \emph{Formula translation.}
For each history $\pi$, the formula translation associated with $\pi$ is the map $\varphi \mapsto \operatorname{Red}_\pi(\varphi)$, defined by recursion on formula complexity as follows.
\begin{itemize}
\item $\operatorname{Red}_\pi(t \ge 0) := (T_\pi(t) \ge 0)$.
\item $\operatorname{Red}_\varepsilon(\mathsf{Last}(D,j)) := \mathsf{Last}(D,j)$.
\item $\operatorname{Red}_\pi(\mathsf{Last}(D,j)) := \top$, for nonempty $\pi$ with $\operatorname{last}(\pi)=(D,j)$.
\item $\operatorname{Red}_\pi(\mathsf{Last}(D,j)) := \neg\top$, for nonempty $\pi$ with $\operatorname{last}(\pi)\ne(D,j)$.
\item $\operatorname{Red}_\pi(\neg\psi) := \neg\operatorname{Red}_\pi(\psi)$.
\item $\operatorname{Red}_\pi(\psi \wedge \theta) := \operatorname{Red}_\pi(\psi) \wedge \operatorname{Red}_\pi(\theta)$.
\item $\operatorname{Red}_\pi([\varepsilon]\psi) := \operatorname{Red}_\pi(\psi)$.
\item $\operatorname{Red}_\pi([\pi_1]\psi) := \bigl(T_\pi(r_{\pi_1})=0\bigr) \vee \operatorname{Red}_{\pi;\pi_1}(\psi)$, for nonempty $\pi_1$.
\end{itemize}

\item \emph{Global reduction.}
The global reduction of a formula $\varphi$ is
$\operatorname{Red}(\varphi) := \operatorname{Red}_\varepsilon(\varphi)$.
\end{enumerate}
\end{definition}

For a nonempty history $\pi$, the auxiliary term $R_{\pi_1}^\pi$ represents the probability of $\pi_1$ evaluated at the posterior state obtained after $\pi$. Example~\ref{ex:guarded-sequential-availability} illustrates these operations for a two-step history.

Before turning to semantics, we state the purely syntactic fact that the reduction eliminates all dynamic operators. This lets the later argument split cleanly into a syntactic elimination step and a semantic correctness step.

\begin{lemma}[Reduction to the fragment without dynamic operators]\label{lem:static-fragment-reduction}
For every history $\pi$ and every formula $\varphi \in \mathcal L$, the formula $\operatorname{Red}_\pi(\varphi)$ belongs to $\mathcal L_0$.
\end{lemma}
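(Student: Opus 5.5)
We argue by induction on the structure of $\varphi$, proving the claim simultaneously for all histories $\pi$. The induction hypothesis therefore quantifies over every history. This matters because the clause for $[\pi_1]\psi$ calls $\operatorname{Red}_{\pi;\pi_1}(\psi)$ at a different history index. Since $\psi$ is a proper subformula of $[\pi_1]\psi$, induction on formula complexity (not on $\pi$) is the right measure.

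As a preliminary step, we check that $T_\pi(t)$ is a term of the language for every history $\pi$ and every term $t$. For $\pi\equiv\varepsilon$ this is immediate. For nonempty $\pi$ we use induction on $t$. The base cases produce $a$, $h_i^\pi$, $q_j^{D,\pi}$, $h_i^{\pi;\pi_1}$, or $q_j^{D,\pi;\pi_1}$, all of which are basic terms, because concatenation of histories (Definition~\ref{def:history-concatenation}) again yields a history. The remaining base case produces $R_{\pi_1}^\pi$. A separate induction on $\pi_1$ shows that $R_{\pi_1}^\pi$ is a term: $R_\varepsilon^\pi=1$ is a rational constant, and $R_{\pi_1;(D,j)}^\pi$ is a product of a term with the basic term $q_j^{D,\pi;\pi_1}$. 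The compound cases follow because term formation is closed under $+$, $-$, and $\cdot$. Terms contain no dynamic operators, so everything produced here lies in the common term language of $\mathcal L$ and $\mathcal L_0$.

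Next comes the formula induction. The clause for $t\ge0$ gives the atomic formula $T_\pi(t)\ge0$, which lies in $\mathcal L_0$. The clauses for $\mathsf{Last}(D,j)$ give $\mathsf{Last}(D,j)$, $\top$, or $\neg\top$; here $\top=(0\ge0)$ is an $\mathcal L_0$ formula. The clauses for negation and conjunction follow directly from the induction hypothesis, since $\mathcal L_0$ is closed under $\neg$ and $\wedge$. The clause for $[\varepsilon]\psi$ is handled by the induction hypothesis for $\psi$. For nonempty $\pi_1$, the clause for $[\pi_1]\psi$ gives the formula $(T_\pi(r_{\pi_1})=0)\vee\operatorname{Red}_{\pi;\pi_1}(\psi)$. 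Its first disjunct unfolds via the abbreviations for $=$ and $\vee$ into $\ge$, $\neg$, and $\wedge$ applied to terms, so it lies in $\mathcal L_0$. Its second disjunct lies in $\mathcal L_0$ by the induction hypothesis for $\psi$, applied at the history $\pi;\pi_1$.

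The only real subtlety, and hence the expected obstacle, is making sure the recursion in Definition~\ref{def:reduction-operations} is well founded. The history index grows from $\pi$ to $\pi;\pi_1$, so it does not decrease. The formula argument, however, strictly decreases at every recursive call, so defining and proving the claim by induction on formula complexity with $\pi$ universally quantified resolves this. A small additional point concerns $T_\varepsilon(t)=t$ when $t$ is itself of the form $r_{\pi_1}$: no separate treatment is needed, since $r_{\pi_1}$ is already a term.
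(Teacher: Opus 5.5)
Your proposal is correct and follows the paper's own proof: induction on formula complexity, with the dynamic clause settled by applying the induction hypothesis to $\psi$ at the extended history $\pi;\pi_1$. You also spell out two points the paper leaves implicit, namely that the induction hypothesis is quantified over all histories and that $T_\pi(t)$, including the auxiliary terms $R^\pi_{\pi_1}$, is a genuine term of the language.
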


\begin{proof}
We argue by induction on the complexity of $\varphi$.

Case $\varphi \equiv (t \ge 0)$. By definition, $\operatorname{Red}_\pi(\varphi)=(T_\pi(t)\ge 0)$, which is a formula of $\mathcal L_0$ because $\mathcal L_0$ contains all arithmetic atoms and no dynamic operator appears.

Case $\varphi \equiv \mathsf{Last}(D,j)$. There are two subcases.

Subcase $\pi \equiv \varepsilon$. $\operatorname{Red}_\varepsilon(\mathsf{Last}(D,j))=\mathsf{Last}(D,j)$.

Subcase $\pi \not\equiv \varepsilon$. $\operatorname{Red}_\pi(\mathsf{Last}(D,j))$ is either $\top$ or $\neg\top$. In each subcase the result lies in $\mathcal L_0$.

Case $\varphi \equiv \neg\psi$. By definition, $\operatorname{Red}_\pi(\neg\psi)=\neg\operatorname{Red}_\pi(\psi)$. By the induction hypothesis, $\operatorname{Red}_\pi(\psi)$ belongs to $\mathcal L_0$. Since $\mathcal L_0$ is closed under Boolean connectives, $\neg\operatorname{Red}_\pi(\psi)$ also belongs to $\mathcal L_0$.

Case $\varphi \equiv \psi \wedge \theta$. Then $\operatorname{Red}_\pi(\psi \wedge \theta)=\operatorname{Red}_\pi(\psi)\wedge\operatorname{Red}_\pi(\theta)$. By the induction hypothesis, both conjuncts belong to $\mathcal L_0$, hence so does their conjunction.

Case $\varphi \equiv [\varepsilon]\psi$. Then $\operatorname{Red}_\pi([\varepsilon]\psi)=\operatorname{Red}_\pi(\psi)$, which belongs to $\mathcal L_0$ by the induction hypothesis.

Case $\varphi \equiv [\pi_1]\psi$, where $\pi_1$ is nonempty. Then $\operatorname{Red}_\pi([\pi_1]\psi)=\bigl(T_\pi(r_{\pi_1})=0\bigr)\vee\operatorname{Red}_{\pi;\pi_1}(\psi)$. The first disjunct is an arithmetic formula, hence a formula of $\mathcal L_0$. By the induction hypothesis, the second disjunct also belongs to $\mathcal L_0$. Since $\mathcal L_0$ is closed under Boolean connectives, the whole disjunction belongs to $\mathcal L_0$.

This completes the induction.
\end{proof}

\subsection{Reduction correctness}\label{subsec:reduction-correctness}

The correctness proof begins with the term translation. The proof for formulas then combines this result with the Boolean and guarded dynamic clauses.

\begin{lemma}[Correctness of the term translation]\label{lem:term-translation-correctness}
Let $\sigma$ be a logical state and let $\pi$ be a history such that $\sigma_\pi$ is defined. Then for every term $t$, $\llbracket T_\pi(t) \rrbracket_\sigma = \llbracket t \rrbracket_{\sigma_\pi}$.
\end{lemma}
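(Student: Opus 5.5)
The plan is a structural induction on the term $t$, after first disposing of the empty history. If $\pi\equiv\varepsilon$, then $T_\varepsilon(t)=t$ and $\sigma_\varepsilon=\sigma$, so there is nothing to prove. So assume $\pi$ is nonempty. Since $\sigma_\pi$ is defined, $r_\pi(h)>0$, and by Lemma~\ref{lem:definability} the prior $h':=\operatorname{up}_\pi(h)$ is defined. By Lemma~\ref{lem:history-closure}, $h'\in Q$, and $\sigma_\pi=(h',\operatorname{last}(\pi))$. The compound cases $u+v$, $-u$, $u\cdot v$ and the constant case follow immediately from the homomorphic clauses of $T_\pi$ and of the denotation, so the work lies in the basic terms.

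For $h_i$ and $q_j^D$, the translations are $h_i^\pi$ and $q_j^{D,\pi}$. Because $\operatorname{up}_\pi(h)$ is defined, Definition~\ref{def:posterior-quantities} gives $h_i^\pi(h)=h'_i$ and $q_j^{D,\pi}(h)=q_j^D(h')$, and these are exactly $\llbracket h_i\rrbracket_{\sigma_\pi}$ and $\llbracket q_j^D\rrbracket_{\sigma_\pi}$.

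For $h_i^{\pi_1}$ and $q_j^{D,\pi_1}$, the translations are $h_i^{\pi;\pi_1}$ and $q_j^{D,\pi;\pi_1}$. I would apply Lemma~\ref{lem:concatenation-identities}(ii) with $\pi_0:=\pi$: $\operatorname{up}_{\pi;\pi_1}(h)$ is defined iff $\operatorname{up}_{\pi_1}(h')$ is defined, and the two agree when defined. In the defined case both sides equal the corresponding coordinate, or the Born value, of the same vector. In the undefined case both sides are $0$ by the convention in Definition~\ref{def:posterior-quantities}.

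The main obstacle is the case $r_{\pi_1}$, whose translation is the auxiliary term $R^\pi_{\pi_1}$ rather than a primitive term. I would prove by an inner induction on $\pi_1$ that $\llbracket R^\pi_{\pi_1}\rrbracket_\sigma=r_{\pi_1}(h')$. The base case is $1=r_\varepsilon(h')$. For the step $\pi_1;(D,j)$, we have $\llbracket R^\pi_{\pi_1;(D,j)}\rrbracket_\sigma=\llbracket R^\pi_{\pi_1}\rrbracket_\sigma\cdot q_j^{D,\pi;\pi_1}(h)$. The first factor equals $r_{\pi_1}(h')$ by the inner hypothesis. The second factor equals $q_j^{D,\pi_1}(h')$ by the argument of the previous paragraph, which uses Lemma~\ref{lem:concatenation-identities}(ii) and the zero convention. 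The product is then $r_{\pi_1;(D,j)}(h')$ by Definition~\ref{def:history-probability}. This step avoids Lemma~\ref{lem:concatenation-identities}(i) and any division by $r_\pi(h)$, which is why I would define the comparison directly at $h'$ instead of through the ratio $r_{\pi;\pi_1}(h)/r_\pi(h)$.
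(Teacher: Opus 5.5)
Your proposal is correct and follows essentially the same route as the paper. Both proofs use structural induction on $t$ after disposing of $\pi\equiv\varepsilon$, an inner induction on $\pi_1$ showing $\llbracket R^\pi_{\pi_1}\rrbracket_\sigma=r_{\pi_1}(\operatorname{up}_\pi(h))$, and Lemma~\ref{lem:concatenation-identities} for the history-indexed terms; the only difference is that you handle undefined branches uniformly via part (ii) and the zero convention, whereas the paper splits on whether $r_{\pi_1}(\operatorname{up}_\pi(h))$ vanishes and uses part (i) with Lemma~\ref{lem:definability} in the zero subcase.
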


\begin{proof}
We argue by induction on the formation of $t$.

If $\pi \equiv \varepsilon$, then $\sigma_\varepsilon=\sigma$ and $T_\varepsilon(t)=t$, so the claim is immediate. For the remainder of the proof, assume that $\pi$ is nonempty.

Case $t \equiv a$. The claim is immediate since $T_\pi(a)=a$.

Case $t \equiv h_i$. We have $\llbracket T_\pi(h_i) \rrbracket_\sigma=\llbracket h_i^\pi \rrbracket_\sigma=h_i^\pi(h)$. Since $\sigma_\pi$ is defined, Lemma~\ref{lem:definability} implies that $\operatorname{up}_{\pi}(h)$ is defined, and by definition of posterior coordinates, $h_i^\pi(h) = (\operatorname{up}_{\pi}(h))_i = \llbracket h_i \rrbracket_{\sigma_\pi}$.

Case $t \equiv q_j^D$. We similarly obtain $\llbracket T_\pi(q_j^D) \rrbracket_\sigma=\llbracket q_j^{D,\pi} \rrbracket_\sigma=q_j^{D,\pi}(h)=q_j^D(\operatorname{up}_{\pi}(h))=\llbracket q_j^D \rrbracket_{\sigma_\pi}$.

Case $t \equiv r_{\pi_1}$. We prove $\llbracket R_{\pi_1}^\pi \rrbracket_\sigma = r_{\pi_1}(\operatorname{up}_{\pi}(h))$ by induction on the length of $\pi_1$.

Base case $\pi_1 \equiv \varepsilon$. Then $\llbracket R_\varepsilon^\pi \rrbracket_\sigma = 1 = r_\varepsilon(\operatorname{up}_{\pi}(h))$.

Induction step. Assume that $\pi_1$ is nonempty, so $\pi_1 \equiv \pi';(D,j)$, and that the claim holds for $\pi'$. Then $\llbracket R_{\pi';(D,j)}^\pi \rrbracket_\sigma=\llbracket R_{\pi'}^\pi \cdot q_j^{D,\pi;\pi'} \rrbracket_\sigma=\llbracket R_{\pi'}^\pi \rrbracket_\sigma \cdot \llbracket q_j^{D,\pi;\pi'} \rrbracket_\sigma$. By the induction hypothesis, the first factor is $r_{\pi'}(\operatorname{up}_{\pi}(h))$.

Subcase $r_{\pi'}(\operatorname{up}_{\pi}(h))=0$. The product is $0$, which equals $r_{\pi';(D,j)}(\operatorname{up}_{\pi}(h))$.

Subcase $r_{\pi'}(\operatorname{up}_{\pi}(h))>0$. By Lemma~\ref{lem:definability}, applying $\pi'$ to $\operatorname{up}_{\pi}(h)$ yields a defined posterior reference prior. Lemma~\ref{lem:concatenation-identities} then gives $\operatorname{up}_{\pi;\pi'}(h)=\operatorname{up}_{\pi'}(\operatorname{up}_{\pi}(h))$. Hence $\llbracket q_j^{D,\pi;\pi'} \rrbracket_\sigma=q_j^D(\operatorname{up}_{\pi;\pi'}(h))$ and $q_j^D(\operatorname{up}_{\pi;\pi'}(h))=q_j^{D,\pi'}(\operatorname{up}_{\pi}(h))$. Therefore $\llbracket R_{\pi';(D,j)}^\pi \rrbracket_\sigma=r_{\pi'}(\operatorname{up}_{\pi}(h)) \cdot q_j^{D,\pi'}(\operatorname{up}_{\pi}(h))=r_{\pi';(D,j)}(\operatorname{up}_{\pi}(h))$. This proves the claim for $r_{\pi_1}$.

Case $t \equiv h_i^{\pi_1}$. We have $\llbracket T_\pi(h_i^{\pi_1}) \rrbracket_\sigma=\llbracket h_i^{\pi;\pi_1} \rrbracket_\sigma=h_i^{\pi;\pi_1}(h)$.

Subcase $r_{\pi_1}(\operatorname{up}_{\pi}(h))=0$. By part (i) of Lemma~\ref{lem:concatenation-identities}, one has $r_{\pi;\pi_1}(h)=0$, so $\operatorname{up}_{\pi;\pi_1}(h)$ is undefined by Lemma~\ref{lem:definability}. Hence $h_i^{\pi;\pi_1}(h)=0$ by definition. On the other hand, $\operatorname{up}_{\pi_1}(\operatorname{up}_{\pi}(h))$ is undefined, so $\llbracket h_i^{\pi_1} \rrbracket_{\sigma_\pi}=0$ as well.

Subcase $r_{\pi_1}(\operatorname{up}_{\pi}(h))>0$. Lemma~\ref{lem:concatenation-identities} yields $\operatorname{up}_{\pi;\pi_1}(h)=\operatorname{up}_{\pi_1}(\operatorname{up}_{\pi}(h))$, and therefore $h_i^{\pi;\pi_1}(h)=(\operatorname{up}_{\pi;\pi_1}(h))_i=(\operatorname{up}_{\pi_1}(\operatorname{up}_{\pi}(h)))_i=\llbracket h_i^{\pi_1} \rrbracket_{\sigma_\pi}$.

Case $t \equiv q_j^{D,\pi_1}$. The argument is analogous.

For $t \equiv u+v$, $t \equiv -u$, and $t \equiv u \cdot v$, the claim follows directly from the induction hypothesis and the semantic clauses for the arithmetic operations.
\end{proof}

Lemma~\ref{lem:term-translation-correctness} handles the atomic formulas. The Boolean cases follow from the induction hypothesis, and the guarded dynamic clause supplies the remaining case.

\begin{proposition}[Correctness of the reduction]\label{prop:reduction-correctness}
Let $\sigma$ be a logical state and let $\pi$ be a history such that $\sigma_\pi$ is defined. Then for every formula $\varphi \in \mathcal L$, $\sigma \models \operatorname{Red}_\pi(\varphi)$ if and only if $\sigma_\pi \models \varphi$.
\end{proposition}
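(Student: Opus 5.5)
We argue by induction on the complexity of $\varphi$, simultaneously for all histories $\pi$ with $\sigma_\pi$ defined. Write $\sigma=(h,l)$. The induction hypothesis must range over all histories, since the dynamic clause shifts $\pi$ to $\pi;\pi_1$.

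\emph{Atomic and Boolean cases.} For $\varphi\equiv(t\ge0)$, we have $\sigma\models\operatorname{Red}_\pi(t\ge0)$ if and only if $\llbracket T_\pi(t)\rrbracket_\sigma\ge0$. By Lemma~\ref{lem:term-translation-correctness}, this holds if and only if $\llbracket t\rrbracket_{\sigma_\pi}\ge0$, that is, $\sigma_\pi\models(t\ge0)$.

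For $\varphi\equiv\mathsf{Last}(D,j)$ there are two cases. If $\pi\equiv\varepsilon$, then $\sigma_\varepsilon=\sigma$ and the reduction is the identity. If $\pi$ is nonempty, the record marker of $\sigma_\pi$ is $\operatorname{last}(\pi)$. So $\sigma_\pi\models\mathsf{Last}(D,j)$ holds exactly when $\operatorname{last}(\pi)=(D,j)$, which is exactly when $\operatorname{Red}_\pi$ returns $\top$ (true everywhere) rather than $\neg\top$.

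Negation and conjunction follow directly from the induction hypothesis. The case $[\varepsilon]\psi$ holds because both sides reduce to the case of $\psi$.

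\emph{Dynamic case $\varphi\equiv[\pi_1]\psi$ with $\pi_1$ nonempty.} This is the main step. By definition, $\sigma\models\operatorname{Red}_\pi([\pi_1]\psi)$ if and only if $\llbracket T_\pi(r_{\pi_1})\rrbracket_\sigma=0$ or $\sigma\models\operatorname{Red}_{\pi;\pi_1}(\psi)$. Lemma~\ref{lem:term-translation-correctness} gives $\llbracket T_\pi(r_{\pi_1})\rrbracket_\sigma=r_{\pi_1}(h')$, where $h'$ is the prior of $\sigma_\pi$. So the first disjunct is exactly the guard in the clause for $\sigma_\pi\models[\pi_1]\psi$, and we split on it.

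If $r_{\pi_1}(h')=0$, both sides hold.

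If $r_{\pi_1}(h')>0$, the hard part is to show that $\sigma_{\pi;\pi_1}$ is defined and equals $(\sigma_\pi)_{\pi_1}$, so that the induction hypothesis applies to $\psi$ at history $\pi;\pi_1$. We check this as follows.
\begin{itemize}
\item Lemma~\ref{lem:definability} gives that $\operatorname{up}_{\pi_1}(h')$ is defined.
\item $\operatorname{up}_\pi(h)$ is defined: if $\pi\equiv\varepsilon$ trivially, and otherwise by Lemma~\ref{lem:definability}, since $\sigma_\pi$ is defined.
\item Then $h'=\operatorname{up}_\pi(h)$, and part (ii) of Lemma~\ref{lem:concatenation-identities} gives that $\operatorname{up}_{\pi;\pi_1}(h)$ is defined and equals $\operatorname{up}_{\pi_1}(h')$.
\item Part (i) of the same lemma, together with $r_\pi(h)>0$ (or $r_\varepsilon=1$ when $\pi\equiv\varepsilon$), gives $r_{\pi;\pi_1}(h)>0$.
\item The concatenation $\pi;\pi_1$ is nonempty, and $\operatorname{last}(\pi;\pi_1)=\operatorname{last}(\pi_1)$ by a short induction on $\pi_1$ using Definition~\ref{def:history-concatenation}.
\end{itemize}
Hence $\sigma_{\pi;\pi_1}=(\operatorname{up}_{\pi_1}(h'),\operatorname{last}(\pi_1))=(\sigma_\pi)_{\pi_1}$.

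The induction hypothesis then gives that $\sigma\models\operatorname{Red}_{\pi;\pi_1}(\psi)$ if and only if $(\sigma_\pi)_{\pi_1}\models\psi$. By the semantic clause for $[\pi_1]$, this is equivalent to $\sigma_\pi\models[\pi_1]\psi$, using that $\sigma_\pi$ has prior $h'\in Q$ by Lemma~\ref{lem:history-closure}. This completes the induction.
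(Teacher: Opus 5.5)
Your proposal is correct and follows the paper's proof essentially step for step. Both argue by induction on $\varphi$ uniformly over all histories $\pi$ with $\sigma_\pi$ defined, and both use the term-translation lemma for arithmetic atoms. In the dynamic case both split on $r_{\pi_1}(\operatorname{up}_\pi(h))$ and identify the state reached from $\sigma_\pi$ along $\pi_1$ with $\sigma_{\pi;\pi_1}$ via the executability characterization and the concatenation identities. You additionally make explicit two points the paper leaves implicit: that $r_{\pi;\pi_1}(h)>0$, and that the prior of $\sigma_\pi$ lies in $Q$.
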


\begin{proof}
We prove the statement by induction on the complexity of $\varphi$, simultaneously for all logical states $\sigma$ and all histories $\pi$ such that $\sigma_\pi$ is defined. Thus, in the dynamic case, the induction hypothesis may be applied with the longer history $\pi;\pi_1$ whenever the corresponding posterior state is defined.

Case $\varphi \equiv (t \ge 0)$. By the semantic clause for arithmetic atoms, $\sigma \models \operatorname{Red}_\pi(t \ge 0)$ if and only if $\llbracket T_\pi(t) \rrbracket_\sigma \ge 0$. Lemma~\ref{lem:term-translation-correctness} makes this condition equivalent to $\llbracket t \rrbracket_{\sigma_\pi} \ge 0$, which holds if and only if $\sigma_\pi \models (t \ge 0)$.

Case $\varphi \equiv \mathsf{Last}(D,j)$. There are two subcases.

Subcase $\pi \equiv \varepsilon$. $\sigma_\varepsilon=\sigma$ and $\operatorname{Red}_\varepsilon(\mathsf{Last}(D,j))=\mathsf{Last}(D,j)$, so the claim is immediate.

Subcase $\pi \not\equiv \varepsilon$. The record component of $\sigma_\pi$ is $\operatorname{last}(\pi)$ by definition. Hence $\sigma_\pi \models \mathsf{Last}(D,j)$ if and only if $\operatorname{last}(\pi)=(D,j)$, which is the condition under which $\operatorname{Red}_\pi(\mathsf{Last}(D,j))$ is $\top$. It is $\neg\top$ otherwise.

Cases $\varphi \equiv \neg\psi$ and $\varphi \equiv \psi \wedge \theta$. These cases are immediate from the induction hypothesis and the classical truth clauses.

It remains to consider the dynamic case $\varphi \equiv [\pi_1]\psi$.

Subcase $\pi_1 \equiv \varepsilon$. Then by definition, $\operatorname{Red}_\pi([\varepsilon]\psi)=\operatorname{Red}_\pi(\psi)$. By the induction hypothesis, $\sigma \models \operatorname{Red}_\pi(\psi)$ if and only if $\sigma_\pi \models \psi$. Since $[\varepsilon]\psi$ is equivalent by semantics to $\psi$, this is equivalent to $\sigma_\pi \models [\varepsilon]\psi$. For the remaining subcases, assume that $\pi_1$ is nonempty.

By definition, $\operatorname{Red}_\pi([\pi_1]\psi)=\bigl(T_\pi(r_{\pi_1})=0\bigr)\vee\operatorname{Red}_{\pi;\pi_1}(\psi)$. By Lemma~\ref{lem:term-translation-correctness}, $\sigma \models \bigl(T_\pi(r_{\pi_1})=0\bigr)$ if and only if $r_{\pi_1}(\operatorname{up}_{\pi}(h))=0$. By part (i) of Lemma~\ref{lem:concatenation-identities}, this is equivalent to $r_{\pi;\pi_1}(h)=0$ because $\sigma_\pi$ is defined and hence $r_\pi(h)>0$.

Subcase $r_{\pi_1}(\operatorname{up}_{\pi}(h))=0$. The first disjunct in $\operatorname{Red}_\pi([\pi_1]\psi)$ is true by the equivalence established above, so $\sigma \models \operatorname{Red}_\pi([\pi_1]\psi)$. The semantic clause for the dynamic operator also gives $\sigma_\pi \models [\pi_1]\psi$.

Subcase $r_{\pi_1}(\operatorname{up}_{\pi}(h))>0$. By Lemma~\ref{lem:definability}, the state obtained from $\sigma_\pi$ by executing $\pi_1$ is defined. Part (ii) of Lemma~\ref{lem:concatenation-identities} shows that its prior component is $\operatorname{up}_{\pi;\pi_1}(h)$. Since $\pi_1$ is nonempty, its record component is $\operatorname{last}(\pi_1)=\operatorname{last}(\pi;\pi_1)$. Thus this state is $\sigma_{\pi;\pi_1}$. In this subcase, the semantic clause gives $\sigma_\pi \models [\pi_1]\psi$ if and only if $\sigma_{\pi;\pi_1} \models \psi$. By the induction hypothesis, this is equivalent to $\sigma \models \operatorname{Red}_{\pi;\pi_1}(\psi)$. The first disjunct in $\operatorname{Red}_\pi([\pi_1]\psi)$ is false, so the last condition is equivalent to $\sigma \models \operatorname{Red}_\pi([\pi_1]\psi)$.

This completes the induction.
\end{proof}

The global reduction theorem is the special case for the empty history. It provides the bridge from $\mathcal L$ to $\mathcal L_0$.

\begin{theorem}[Global reduction theorem]\label{thm:global-reduction}
For every formula $\varphi \in \mathcal L$ and every logical state $\sigma$ over the fixed well-formed core datum $(Q,\mathcal D,B,U)$, $\sigma \models \varphi$ if and only if $\sigma \models \operatorname{Red}(\varphi)$. Consequently, $\models_{(Q,\mathcal D,B,U)} \varphi$ if and only if $\models_{(Q,\mathcal D,B,U)} \operatorname{Red}(\varphi)$.
\end{theorem}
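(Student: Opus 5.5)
The theorem is the special case $\pi \equiv \varepsilon$ of Proposition~\ref{prop:reduction-correctness}, so the plan is to instantiate that proposition and then pass from truth at a state to validity.

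First, fix a formula $\varphi \in \mathcal L$ and a logical state $\sigma=(h,l)$ with $h \in Q$ and $l \in \mathrm{Rec}$. By Definition~\ref{def:posterior-state}, $\sigma_\varepsilon := \sigma$ is always defined, with no executability condition. So the hypothesis of Proposition~\ref{prop:reduction-correctness} holds for $\pi \equiv \varepsilon$. The proposition then gives $\sigma \models \operatorname{Red}_\varepsilon(\varphi)$ if and only if $\sigma_\varepsilon \models \varphi$. Substituting $\operatorname{Red}(\varphi)=\operatorname{Red}_\varepsilon(\varphi)$ (clause (iv) of Definition~\ref{def:reduction-operations}) and $\sigma_\varepsilon=\sigma$ yields the first assertion, $\sigma \models \varphi$ if and only if $\sigma \models \operatorname{Red}(\varphi)$.

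Second, for the validity statement, recall from Definition~\ref{def:validity-over-well-formed-core-datum} that validity over $(Q,\mathcal D,B,U)$ means truth at every pair $(h,l)$ with $h\in Q$ and $l\in\mathrm{Rec}$. Since the pointwise equivalence holds at each such state, quantifying over all of them gives $\models_{(Q,\mathcal D,B,U)}\varphi$ if and only if $\models_{(Q,\mathcal D,B,U)}\operatorname{Red}(\varphi)$. I would also note that $\operatorname{Red}(\varphi)\in\mathcal L_0$ by Lemma~\ref{lem:static-fragment-reduction}; this is what makes the theorem a reduction to the fragment without dynamic operators.

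There is no real obstacle here, since all the work was done in Proposition~\ref{prop:reduction-correctness}. The one point to check is that the empty history falls under that proposition: $\sigma_\varepsilon$ is defined unconditionally, and $T_\varepsilon$ and $\operatorname{Red}_\varepsilon$ behave as the identity on atoms and on $\mathsf{Last}$. The proposition's proof already treats the case $\pi\equiv\varepsilon$ separately, so the argument goes through without modification.
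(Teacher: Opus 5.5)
Your proposal is correct and is the paper's own proof: instantiate Proposition~\ref{prop:reduction-correctness} at $\pi\equiv\varepsilon$, using $\sigma_\varepsilon=\sigma$ and $\operatorname{Red}(\varphi)=\operatorname{Red}_\varepsilon(\varphi)$, and then quantify over all states $(h,l)$ with $h\in Q$ and $l\in\mathrm{Rec}$ to obtain the validity claim. Your remark that $\operatorname{Red}(\varphi)\in\mathcal L_0$ by Lemma~\ref{lem:static-fragment-reduction} is not needed for the statement itself, but it correctly explains why the theorem counts as a reduction.
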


\begin{proof}
Since $\sigma_\varepsilon=\sigma$, Proposition~\ref{prop:reduction-correctness} with $\pi \equiv \varepsilon$ gives $\sigma \models \operatorname{Red}_\varepsilon(\varphi)$ if and only if $\sigma \models \varphi$. By definition, $\operatorname{Red}(\varphi)=\operatorname{Red}_\varepsilon(\varphi)$. The second statement follows immediately by quantifying over all logical states.
\end{proof}

\subsection{Effectively semialgebraic data}\label{subsec:effectively-semialgebraic-data}

The first-order analysis applies to a class of well-formed core data for which validity can be treated by effective first-order methods. We use standard terminology from real algebraic geometry for semialgebraic presentations and effective quantifier elimination \cite{BasuPollackRoy2006}. The definitions below separate the ambient ordered field, the presentation of the admissible prior space, and the combined notion of an effective datum used in the subsequent translation.

\begin{definition}[Effectively presented real closed field]\label{def:effectively-presented-rcf}
An \emph{effectively presented real closed field} is a real closed ordered field $K$ whose underlying set is represented by a recursive subset of $\mathbb{N}$ and for which the field operations and the order relation are recursive on that coding.
\end{definition}

The field of real algebraic numbers with its standard effective presentation is a basic example. In what follows, let $K \subseteq \mathbb{R}$ be an effectively presented real closed field, and let $\Sigma_K$ be the ordered ring signature expanded by constant symbols for all elements of $K$. In the remainder of this section, a first-order formula means a formula over $\Sigma_K$, interpreted in $\mathbb{R}$ unless otherwise stated.

\begin{definition}[Effective semialgebraic presentation of the admissible prior space]\label{def:effective-semialgebraic-presentation}
Let $\bar{x}:=(x_i)_{i\in I}$. An \emph{effective semialgebraic presentation} of an admissible prior space $Q \subseteq \mathbb{R}^{d^2}$ over $K$ is a quantifier-free first-order formula $\Theta_Q(\bar{x})$ over $\Sigma_K$ such that $Q=\{h \in \mathbb{R}^{d^2} \mid \mathbb{R} \models \Theta_Q(h)\}$ and every $h \in \mathbb{R}^{d^2}$ satisfying $\Theta_Q$ belongs to $\Delta(I)$.
\end{definition}

\medskip

\begin{example}[Simplex presentation]\label{ex:effective-simplex-presentation}
The full probability simplex $\Delta(I)$ has an effective semialgebraic presentation over any effectively presented real closed field $K$. Such a presentation is a finite algebraic description of the admissible prior space by equations and inequalities over the chosen coefficient field. It is given by the quantifier-free formula $\Theta_\Delta(\bar{x}):=\bigl(\bigwedge_{i\in I}x_i\ge 0\bigr)\wedge\bigl(\sum_{i\in I}x_i=1\bigr)$. Both the coefficients occurring in this formula and the kernel values used in Example~\ref{ex:full-simplex-datum} are rational, so their codes can be computed in the fixed effective presentation of $K$. Together with Example~\ref{ex:full-simplex-datum}, this presentation therefore gives an effectively semialgebraic datum over $K$ when the constant symbols of the language name elements of $K$ and are supplied with their codes in the fixed effective presentation of $K$.
\end{example}

\begin{definition}[Effectively semialgebraic datum]\label{def:effective-semialgebraic-datum}
A well-formed core datum $(Q,\mathcal{D},B,U)$ is \emph{effectively semialgebraic} over $K$ if the following hold.
\begin{enumerate}
\item Every kernel value $B^D(j|i)$ and $U^{D,j}(m|i)$ belongs to $K$ and is supplied with a code in the fixed effective presentation of $K$.
\item Every constant symbol of the object language names an element of $K$ and is supplied with the code of that element in the fixed effective presentation of $K$.
\item The set $Q$ admits an effective semialgebraic presentation $\Theta_Q$ over $K$.
\end{enumerate}
\end{definition}

This restriction specifies the class for which validity in $\mathcal L_0$ can be analyzed by the standard proof theory of real closed fields.

For the remainder of Section~\ref{sec:metatheory}, fix an effectively semialgebraic well-formed core datum $(Q,\mathcal D,B,U)$ over $K$ and a defining formula $\Theta_Q$. Write $\mathfrak M := (Q,\mathcal D,B,U,\Theta_Q,K)$. The first-order translation and the subsequent metatheoretic arguments are relative to $\mathfrak M$.

The later decidability result also requires the reduction to be effectively computable.

\begin{proposition}[Effective constructibility of the reduction]\label{prop:reduction-effective}
In the effective setting, there is an effective procedure that, given a formula $\varphi \in \mathcal L$, computes the reduced formula $\operatorname{Red}(\varphi)$.
\end{proposition}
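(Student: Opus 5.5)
The plan is to observe that $\operatorname{Red}$ is a primitive recursive syntactic transformation on finitely coded objects. It never consults the semantics, so the proof is a matter of coding plus structural recursion.

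First, fix a Gödel numbering of histories, terms, and formulas of $\mathcal L$. Constant symbols are coded by their codes in the fixed effective presentation of $K$. The index sets $I$, $\mathcal D$, and each $O_D$ are finite and fixed, so the remaining symbols $h_i$, $q_j^D$, $r_\pi$, $h_i^\pi$, $q_j^{D,\pi}$, $\mathsf{Last}(D,j)$, and $[\pi]$ are coded by finite tuples built from these index sets and the code of a history. Under this coding, the following are all recursive: recognizing well-formed terms and formulas, testing whether a history is empty, decomposing $\pi;(D,j)$, computing $\operatorname{last}(\pi)$, and deciding whether $\operatorname{last}(\pi)=(D,j)$. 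Equality of record markers is decidable because $\mathrm{Rec}$ is a fixed finite set.

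Next, show in order that each clause of Definition~\ref{def:reduction-operations} is computable.
\begin{enumerate}
\item Concatenation $\pi_0;\pi_1$ is computable by recursion on the length of $\pi_1$, following Definition~\ref{def:history-concatenation}.
\item The auxiliary term $R^\pi_{\pi_1}$ is computable by recursion on the length of $\pi_1$. Each step uses only concatenation and the formation of a product term.
\item $T_\pi(t)$ is computable by recursion on the structure of $t$. Each clause is an explicit substitution using the previous two items, and constants are copied together with their codes.
\item $\operatorname{Red}_\pi(\varphi)$ is computed by recursion on formula complexity, simultaneously for all histories $\pi$. Each clause calls $\operatorname{Red}$ only on immediate subformulas. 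The history argument may grow, to $\pi;\pi_1$, but the formula argument strictly shrinks.
\end{enumerate}
So the recursion is well founded on the formula and terminates. It is a course-of-values recursion that takes the history as a parameter. The abbreviations $=$, $\vee$, $\top$, and $\neg\top$ in the output expand to fixed formulas, so the output is again a coded formula. Setting $\pi=\varepsilon$ yields $\operatorname{Red}(\varphi)$, and by Lemma~\ref{lem:static-fragment-reduction} the result lies in $\mathcal L_0$.

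The only step needing care is the recursion for $\operatorname{Red}$, because the history parameter grows. The argument is that termination is measured by the complexity of $\varphi$ alone. One can also bound the output size: the number of recursive calls is bounded by the size of $\varphi$, and every history created is a concatenation of histories occurring in $\varphi$. This gives an explicit elementary bound on computation length. I would note that the effective setting matters only through the codes of constants. No arithmetic over $K$ is performed, so only the codes are carried along unchanged.
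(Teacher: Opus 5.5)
Your proposal is correct and follows essentially the same route as the paper: code histories as finite strings so that concatenation is computable, compute $T_\pi$ by structural recursion on terms, compute $\operatorname{Red}_\pi$ by recursion on formulas with the dynamic clause calling only concatenation and $T_\pi(r_{\pi_1})$, and then specialize to $\pi=\varepsilon$. Your remarks on termination despite the growing history parameter and on bounding the output are accurate refinements that the paper leaves implicit.
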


\begin{proof}
Histories are finite strings over the finite alphabet of outcome pairs $(D,j)$, and $\varepsilon$ is the empty string. Hence histories admit an effective coding, and concatenation $(\pi,\pi_1)\mapsto \pi;\pi_1$ is computable from that coding.

The map $(\pi,t)\mapsto T_\pi(t)$ is computed by primitive recursion on the formation of terms. The atomic clauses are explicit, and the algebraic clauses are obtained by effective composition. Using this term translation, the map $(\pi,\varphi)\mapsto\operatorname{Red}_\pi(\varphi)$ is computed by primitive recursion on the formation of formulas. The atomic and Boolean clauses are explicit, and the dynamic clause for $[\pi_1]\psi$ uses only the already computable operations $(\pi,\pi_1)\mapsto T_\pi(r_{\pi_1})$ and $(\pi,\pi_1)\mapsto\pi;\pi_1$.

Finally, $\operatorname{Red}(\varphi)$ is by definition $\operatorname{Red}_\varepsilon(\varphi)$. Hence the whole construction is effective.
\end{proof}

\subsection{First-order translation of the fragment without dynamic operators}\label{subsec:first-order-translation}

This subsection translates $\mathcal L_0$ into first-order formulas over real closed fields. Relative to admissible inputs in $Q$, a first-order formula $\Gamma(\bar{x},\bar{y})$ defines the graph of a possibly partial map $f$ when, for every $h\in Q$ and every real tuple $\bar{b}$ of the appropriate arity, $\mathbb{R}\models\Gamma(h,\bar{b})$ if and only if $f(h)$ is defined and $\bar{b}=f(h)$. We use such formulas for posterior reference priors, history probabilities, and term values, and then translate formulas of the object language into ordinary first-order formulas over the ordered field.

Let $\bar{x}=(x_i)_{i \in I}$, $\bar{u}=(u_i)_{i \in I}$, and $\bar{v}=(v_i)_{i \in I}$ be tuples of first-order variables. We write $\Theta_Q(\bar{x})$ for the quantifier-free formula defining $Q$ from Definition~\ref{def:effective-semialgebraic-presentation}.

\begin{definition}[Ordered ring terms for the effective datum]\label{def:ordered-ring-effective-terms}
The ordered ring terms used below are
\begin{itemize}
\item $\operatorname{Aff}_i(\bar{x}) := (d+1)x_i - 1/d$, for $i \in I$.
\item $\operatorname{Born}_{D,j}(\bar{u}) := \sum_{i \in I}\operatorname{Aff}_i(\bar{u})\,B^D(j|i)$, for $D \in \mathcal D$ and $j \in O_D$.
\item $\operatorname{Num}^{D,j}_m(\bar{u}) := \sum_{i \in I}\operatorname{Aff}_i(\bar{u})\,U^{D,j}(m|i)$, for $D \in \mathcal D$, $j \in O_D$, and $m \in I$.
\end{itemize}
\end{definition}

The terms $\operatorname{Aff}_i(\bar{x})$, $\operatorname{Born}_{D,j}(\bar{u})$, and $\operatorname{Num}^{D,j}_m(\bar{u})$ represent, respectively, the affine coordinate $\alpha_i$, the current Born probability $q_j^D(u)$, and the numerator in the $m$th coordinate of the update by $(D,j)$, where $\bar{u}$ consists of the coordinates of an admissible prior $u \in Q$. Consequently, whenever $\operatorname{Born}_{D,j}(\bar{u})>0$, the equality $\operatorname{up}_{D,j}(u)_m=\operatorname{Num}^{D,j}_m(\bar{u})/\operatorname{Born}_{D,j}(\bar{u})$ holds. These terms express the affine coefficients, current Born probabilities, and update numerators inside ordered ring formulas without taking division as a primitive operation.

Using these terms, we define a first-order formula for a one-step update.

\begin{definition}[First-order formula for a one-step update]\label{def:one-step-update-formula}
For $D \in \mathcal D$ and $j \in O_D$, set
$\operatorname{Step}_{D,j}(\bar{u},\bar{v}) :=
\Theta_Q(\bar{u})
\wedge (\operatorname{Born}_{D,j}(\bar{u})>0)
\wedge \Theta_Q(\bar{v})
\wedge \bigwedge_{m \in I}(v_m \cdot \operatorname{Born}_{D,j}(\bar{u})=\operatorname{Num}^{D,j}_m(\bar{u}))$.
\end{definition}

The formula $\operatorname{Step}_{D,j}(\bar{u},\bar{v})$ expresses that $\bar{u}$ is an admissible prior, that the outcome $(D,j)$ is executable at $\bar{u}$, and that $\bar{v}$ is the resulting admissible posterior reference prior. The coordinate equations express the normalized update without division.

\begin{definition}[First-order graph formulas for histories]\label{def:graph-formulas-for-histories}
For each history $\pi$, the formulas $\operatorname{Post}_\pi(\bar{x},\bar{u})$ and $\operatorname{Prob}_\pi(\bar{x},z)$ over $\Sigma_K$ are defined by recursion on $\pi$ as follows.

The formulas $\operatorname{Post}_\pi$ are given by
\begin{itemize}
\item $\operatorname{Post}_\varepsilon(\bar{x},\bar{u}) := \bigwedge_{i \in I}(u_i=x_i)$.
\item $\operatorname{Post}_{\pi;(D,j)}(\bar{x},\bar{v}) := \exists \bar{u}\bigl(\operatorname{Post}_\pi(\bar{x},\bar{u}) \wedge \operatorname{Step}_{D,j}(\bar{u},\bar{v})\bigr)$.
\end{itemize}

The formulas $\operatorname{Prob}_\pi$ are given by
\begin{itemize}
\item $\operatorname{Prob}_\varepsilon(\bar{x},z) := (z=1)$.
\item $\operatorname{Prob}_{\pi;(D,j)}(\bar{x},z') := \exists z\bigl(\operatorname{Prob}_\pi(\bar{x},z) \wedge \bigl((z=0 \wedge z'=0) \vee (z>0 \wedge \exists \bar{u}(\operatorname{Post}_\pi(\bar{x},\bar{u}) \wedge ((\operatorname{Born}_{D,j}(\bar{u})=0 \wedge z'=0) \vee (\operatorname{Born}_{D,j}(\bar{u})>0 \wedge z'=z \cdot \operatorname{Born}_{D,j}(\bar{u})))))\bigr)\bigr)$.
\end{itemize}
\end{definition}

These formulas encode the posterior and probabilistic effects of $\pi$ on admissible inputs. The formula $\operatorname{Post}_\pi(\bar{x},\bar{u})$ relates an admissible prior $\bar{x}$ to the posterior reference prior $\bar{u}$ obtained after $\pi$, while $\operatorname{Prob}_\pi(\bar{x},z)$ associates $z$ with the probability of $\pi$ at $\bar{x}$. In the recursive clause for $\operatorname{Prob}_{\pi;(D,j)}$, a zero probability for the preceding history or the next outcome forces the extended history probability to be $0$. When both quantities are positive, the extended history probability is the product of the probability of $\pi$ and the Born probability of the next outcome at the posterior reference prior after $\pi$.

\begin{proposition}[Effective constructibility of graph formulas for histories]\label{prop:graph-formulas-for-histories-effective}
There is an effective procedure that, given a history $\pi$, computes the formulas $\operatorname{Post}_\pi$ and $\operatorname{Prob}_\pi$.
\end{proposition}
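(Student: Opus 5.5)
We build the formulas by primitive recursion on histories, in the same style as the proof of Proposition~\ref{prop:reduction-effective}. Histories are finite strings over the finite alphabet of outcome pairs $(D,j)$, so they admit an effective coding. Given such a code, we can decide whether $\pi$ is $\varepsilon$, and otherwise decompose it effectively as $\pi';(D,j)$. The procedure then follows the clauses of Definition~\ref{def:graph-formulas-for-histories}. It outputs the fixed formulas $\operatorname{Post}_\varepsilon$ and $\operatorname{Prob}_\varepsilon$ at the base. At the step for $\pi';(D,j)$, it calls itself on $\pi'$ and assembles the displayed formula around the returned subformulas.

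The first ingredient is that the building blocks are computable from the finite datum $\mathfrak M$. The terms $\operatorname{Aff}_i$, $\operatorname{Born}_{D,j}$ and $\operatorname{Num}^{D,j}_m$ of Definition~\ref{def:ordered-ring-effective-terms} are finite sums over the finite index set $I$. Their coefficients are the rationals $d+1$ and $1/d$ together with the kernel values. By clause (i) of Definition~\ref{def:effective-semialgebraic-datum}, the kernel values are elements of $K$ supplied with codes, so the constant symbols of $\Sigma_K$ naming them can be written down. The formula $\Theta_Q$ is a fixed, given quantifier-free formula. Hence each $\operatorname{Step}_{D,j}$ of Definition~\ref{def:one-step-update-formula} is a fixed finite formula for each of the finitely many pairs $(D,j)$. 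These can be precomputed once and stored in a finite table.

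The second ingredient concerns the bound variables. In the recursive clauses, the variables $\bar u$, $z$ and $z'$ are reused at every level. To obtain well-formed output whose intended reading is unambiguous, we rename bound variables at each recursion depth using fresh indexed variables, for example $\bar u^{(n)}$ and $z^{(n)}$ at depth $n$. Substituting variables into the fixed formulas and forming $\exists$, $\wedge$ and $\vee$ are all computable syntactic operations. Termination holds because the length of $\pi$ decreases at each call. Both $\operatorname{Post}_\pi$ and $\operatorname{Prob}_\pi$ are therefore produced by a total recursive procedure. The clause for $\operatorname{Prob}_{\pi;(D,j)}$ also uses $\operatorname{Post}_\pi$, so the natural implementation computes the pair $(\operatorname{Post}_\pi,\operatorname{Prob}_\pi)$ simultaneously.

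The only point needing care is this bookkeeping of fresh variables and of constant codes. It is routine rather than a genuine obstacle, since everything involved is finite and explicitly given.
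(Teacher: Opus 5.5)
Your proposal is correct and follows the paper's own argument: the paper likewise computes $\operatorname{Post}_\pi$ and $\operatorname{Prob}_\pi$ by primitive recursion on the length of $\pi$ from the explicit clauses of Definition~\ref{def:graph-formulas-for-histories}, noting that all symbols and coefficients there are effectively given. Your added details are sound refinements of points the paper leaves implicit, namely the renaming of bound variables to avoid capture, the precomputed table of $\operatorname{Step}_{D,j}$ formulas, and the joint computation of the pair $(\operatorname{Post}_\pi,\operatorname{Prob}_\pi)$.
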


\begin{proof}
The formulas are computed by primitive recursion on the length of $\pi$ from the explicit clauses of Definition~\ref{def:graph-formulas-for-histories}. All symbols and coefficients used in those clauses are effectively given.
\end{proof}

These graph formulas represent posterior reference priors and history probabilities on the admissible prior space.

\begin{lemma}[Correctness of graph formulas for histories]\label{lem:graph-formulas-for-histories-correctness}
Let $(Q,\mathcal D,B,U)$ be the fixed effectively semialgebraic well-formed core datum. For every admissible prior $h \in Q$ and every history $\pi$, the following hold.
\begin{enumerate}
\item $\mathbb{R} \models \operatorname{Post}_\pi(h,\bar{u})$ if and only if $\operatorname{up}_{\pi}(h)$ is defined and $\bar{u}=\operatorname{up}_{\pi}(h)$.
\item $\mathbb{R} \models \operatorname{Prob}_\pi(h,z)$ if and only if $z=r_\pi(h)$.
\end{enumerate}
\end{lemma}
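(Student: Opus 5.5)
We argue by induction on the length of $\pi$, proving part (i) first for all histories and then part (ii), since the recursive clause for $\operatorname{Prob}_{\pi;(D,j)}$ refers to $\operatorname{Post}_\pi$. Throughout, $h\in Q$ is fixed, so $\Theta_Q(h)$ holds in $\mathbb{R}$ by Definition~\ref{def:effective-semialgebraic-presentation}.

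For part (i), the base case is immediate: $\operatorname{Post}_\varepsilon(h,\bar u)$ holds exactly when $\bar u=h=\operatorname{up}_\varepsilon(h)$. For the step $\pi;(D,j)$, we unfold $\operatorname{Post}_{\pi;(D,j)}(h,\bar v)$ as the existence of $\bar u$ with $\operatorname{Post}_\pi(h,\bar u)$ and $\operatorname{Step}_{D,j}(\bar u,\bar v)$. By the induction hypothesis, the only possible witness is $\bar u=\operatorname{up}_\pi(h)$, and it is available precisely when this posterior is defined. In that case $\bar u\in Q$ by Lemma~\ref{lem:history-closure}, so $\Theta_Q(\bar u)$ holds automatically. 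The conjunct $\operatorname{Born}_{D,j}(\bar u)>0$ expresses $q_j^D(\operatorname{up}_\pi(h))>0$. Given positivity, the coordinate equations $v_m\operatorname{Born}_{D,j}(\bar u)=\operatorname{Num}^{D,j}_m(\bar u)$ determine $\bar v$ uniquely as $\operatorname{up}_{D,j}(\bar u)$, because we may divide by the positive Born value. The remaining conjunct $\Theta_Q(\bar v)$ then holds by the closure clause (iii) of Definition~\ref{def:well-formed-core-datum}. Hence the formula holds iff $\operatorname{up}_{\pi;(D,j)}(h)$ is defined and equals $\bar v$, by Definition~\ref{def:recursive-updates}.

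For part (ii), the base case is again trivial: $\operatorname{Prob}_\varepsilon(h,z)$ holds iff $z=1=r_\varepsilon(h)$. For the step, the induction hypothesis makes $z=r_\pi(h)$ the unique witness for the outer existential. We then split into cases. If $r_\pi(h)=0$, the second disjunct fails because of $z>0$, and the first forces $z'=0$. This agrees with $r_{\pi;(D,j)}(h)=r_\pi(h)q_j^{D,\pi}(h)=0$. If $r_\pi(h)>0$, the first disjunct fails. By Lemma~\ref{lem:definability}, $\operatorname{up}_\pi(h)$ is defined, so part (i) makes $\bar u=\operatorname{up}_\pi(h)$ the unique witness, and $\operatorname{Born}_{D,j}(\bar u)=q_j^{D,\pi}(h)\ge 0$ by clause (ii) of well-formedness together with Lemma~\ref{lem:history-closure}. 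The two inner subcases then force $z'=0$ or $z'=z\,q_j^{D,\pi}(h)$, respectively. Both equal $r_\pi(h)q_j^{D,\pi}(h)=r_{\pi;(D,j)}(h)$.

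The main obstacle is bookkeeping rather than mathematics. In part (i) we must check that every conjunct of $\operatorname{Step}_{D,j}$ is automatically satisfied by the intended witness, in particular $\Theta_Q(\bar v)$ via the closure condition, and that the division-free equations pin down $\bar v$ uniquely. In part (ii) the key point is that the nonnegativity of $\operatorname{Born}_{D,j}$ at admissible posteriors makes the case split exhaustive. Otherwise a negative Born value could leave $\operatorname{Prob}$ with no witness $z'$.
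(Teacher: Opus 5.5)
Your proposal is correct and follows essentially the paper's route: induction on the length of $\pi$, checking each conjunct of $\operatorname{Step}_{D,j}$ at the witness $\operatorname{up}_\pi(h)$ (with $\Theta_Q$ secured by Lemma~\ref{lem:history-closure} and the closure clause), and for $\operatorname{Prob}$ a case split on $r_\pi(h)$ and on the sign of $\operatorname{Born}_{D,j}$. The only difference is cosmetic: you prove part (i) for all histories before part (ii), whereas the paper runs a simultaneous induction, and this is harmless because $\operatorname{Post}_\pi$ never refers to $\operatorname{Prob}$.
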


\begin{proof}
We prove parts (i) and (ii) simultaneously by induction on the length of $\pi$.

Base case $\pi \equiv \varepsilon$. Part (i) follows immediately from $\operatorname{up}_{\varepsilon}(h)=h$ and the definition of $\operatorname{Post}_\varepsilon$. Part (ii) follows immediately from $r_\varepsilon(h)=1$ and the definition of $\operatorname{Prob}_\varepsilon$.

Induction step. Assume that $\pi$ is nonempty, so $\pi \equiv \pi';(D,j)$, and that parts (i) and (ii) hold for $\pi'$.

\medskip

\noindent\textit{Proof of part (i).}

\smallskip

\noindent\textit{Forward implication.}
Suppose that $\mathbb{R} \models \operatorname{Post}_{\pi';(D,j)}(h,\bar{v})$. Then there exists $\bar{u}$ such that $\mathbb{R} \models \operatorname{Post}_{\pi'}(h,\bar{u})$ and $\mathbb{R} \models \operatorname{Step}_{D,j}(\bar{u},\bar{v})$. Unfolding the definition of $\operatorname{Step}_{D,j}$, we have $\Theta_Q(\bar{u})$, $\operatorname{Born}_{D,j}(\bar{u})>0$, $\Theta_Q(\bar{v})$, and $v_m \cdot \operatorname{Born}_{D,j}(\bar{u})=\operatorname{Num}^{D,j}_m(\bar{u})$ for every $m \in I$. By the induction hypothesis for part (i), $\operatorname{up}_{\pi'}(h)$ is defined and $\bar{u}=\operatorname{up}_{\pi'}(h)$. Hence $\operatorname{Born}_{D,j}(\bar{u})=q_j^D(\operatorname{up}_{\pi'}(h))>0$, so $\operatorname{up}_{\pi';(D,j)}(h)$ is defined. Since $\operatorname{Born}_{D,j}(\bar{u})>0$, the coordinate equations in the definition of $\operatorname{Step}_{D,j}$ yield
$v_m=\operatorname{Num}^{D,j}_m(\bar{u})/\operatorname{Born}_{D,j}(\bar{u})$
for every $m\in I$. By the definition of the one-step update and the equality $\bar{u}=\operatorname{up}_{\pi'}(h)$, it follows that
$v_m=\operatorname{up}_{D,j}(\operatorname{up}_{\pi'}(h))_m$
for every $m\in I$. Therefore $\bar{v}=\operatorname{up}_{\pi';(D,j)}(h)$.
\smallskip

\noindent\textit{Reverse implication.}
Suppose that $\operatorname{up}_{\pi';(D,j)}(h)$ is defined. Then $\operatorname{up}_{\pi'}(h)$ is defined and $q_j^D(\operatorname{up}_{\pi'}(h))>0$. By the induction hypothesis for part (i), $\mathbb{R} \models \operatorname{Post}_{\pi'}(h,\operatorname{up}_{\pi'}(h))$. By Lemma~\ref{lem:history-closure}, $\operatorname{up}_{\pi'}(h) \in Q$, so $\mathbb{R} \models \Theta_Q(\operatorname{up}_{\pi'}(h))$. Since the datum is well-formed, $\operatorname{up}_{\pi';(D,j)}(h) \in Q$, so $\mathbb{R} \models \Theta_Q(\operatorname{up}_{\pi';(D,j)}(h))$. By the recursive definition of updates, $\operatorname{up}_{\pi';(D,j)}(h)=\operatorname{up}_{D,j}(\operatorname{up}_{\pi'}(h))$. For every $m \in I$, the definition of the one-step update gives $\operatorname{up}_{\pi';(D,j)}(h)_m=\bigl(\sum_{i \in I}\alpha_i(\operatorname{up}_{\pi'}(h))U^{D,j}(m|i)\bigr)/q_j^D(\operatorname{up}_{\pi'}(h))$. Since $q_j^D(\operatorname{up}_{\pi'}(h))>0$, it follows that $\operatorname{up}_{\pi';(D,j)}(h)_m \cdot q_j^D(\operatorname{up}_{\pi'}(h))=\sum_{i \in I}\alpha_i(\operatorname{up}_{\pi'}(h))U^{D,j}(m|i)$. This is exactly the coordinate condition in $\operatorname{Step}_{D,j}$. Hence $\mathbb{R} \models \operatorname{Post}_{\pi';(D,j)}(h,\operatorname{up}_{\pi';(D,j)}(h))$.

\medskip

\noindent\textit{Proof of part (ii).}

\smallskip

\noindent\textit{Forward implication.}
Suppose that $\mathbb{R} \models \operatorname{Prob}_{\pi';(D,j)}(h,z')$. Then there exists $z$ such that $\mathbb{R} \models \operatorname{Prob}_{\pi'}(h,z)$ and the disjunction in the recursive clause for $\operatorname{Prob}_{\pi';(D,j)}$ holds. By the induction hypothesis for part (ii), $z=r_{\pi'}(h)$. If $z=0$, then $r_{\pi'}(h)=0$. By Lemma~\ref{lem:definability}, the history $\pi'$ is not executable. Hence $r_{\pi';(D,j)}(h)=0$, and the formula forces $z'=0$. Suppose instead that $z>0$. The second disjunct supplies a tuple $\bar{u}$ such that $\mathbb{R} \models \operatorname{Post}_{\pi'}(h,\bar{u})$. By the induction hypothesis for part (i), $\bar{u}=\operatorname{up}_{\pi'}(h)$. The inner disjunction says either $\operatorname{Born}_{D,j}(\bar{u})=0$ and $z'=0$, or $\operatorname{Born}_{D,j}(\bar{u})>0$ and $z'=z\cdot\operatorname{Born}_{D,j}(\bar{u})$. Since $\operatorname{Born}_{D,j}(\operatorname{up}_{\pi'}(h))=q_j^{D,\pi'}(h)$, these alternatives give exactly $z'=r_{\pi';(D,j)}(h)$.

\smallskip

\noindent\textit{Reverse implication.}
Suppose that $z'=r_{\pi';(D,j)}(h)$, and set $z:=r_{\pi'}(h)$. By the induction hypothesis for part (ii), $\mathbb{R} \models \operatorname{Prob}_{\pi'}(h,z)$. If $z=0$, then $z'=0$, and the first disjunct in the recursive clause for $\operatorname{Prob}_{\pi';(D,j)}$ applies. Suppose instead that $z>0$. By Lemma~\ref{lem:definability}, $\operatorname{up}_{\pi'}(h)$ is defined. The induction hypothesis for part (i) gives $\mathbb{R} \models \operatorname{Post}_{\pi'}(h,\operatorname{up}_{\pi'}(h))$. By Lemma~\ref{lem:history-closure}, $\operatorname{up}_{\pi'}(h)\in Q$, so $\operatorname{Born}_{D,j}(\operatorname{up}_{\pi'}(h))\ge0$. If $\operatorname{Born}_{D,j}(\operatorname{up}_{\pi'}(h))=0$, then $z'=0$, and the first alternative in the inner disjunction applies. If $\operatorname{Born}_{D,j}(\operatorname{up}_{\pi'}(h))>0$, then $z'=z\cdot\operatorname{Born}_{D,j}(\operatorname{up}_{\pi'}(h))$, and the second alternative applies. Therefore $\mathbb{R} \models \operatorname{Prob}_{\pi';(D,j)}(h,z')$.

This completes the simultaneous induction.
\end{proof}

After defining the graph formulas for histories, we turn to the values of terms in $\mathcal L_0$. We first introduce formulas for executability and for the totalized values of posterior coordinates and posterior outcome probabilities.

\begin{definition}[Executability and posterior value formulas]\label{def:executability-posterior-value-formulas}
For each history $\pi$, define
\begin{itemize}
\item $\operatorname{Exec}_\pi(\bar{x}) := \exists \bar{u}\,\operatorname{Post}_\pi(\bar{x},\bar{u})$.
\item $\operatorname{Coord}_i^\pi(\bar{x},y) := \bigl(\exists \bar{u}(\operatorname{Post}_\pi(\bar{x},\bar{u}) \wedge y=u_i)\bigr) \vee \bigl(\neg \operatorname{Exec}_\pi(\bar{x}) \wedge y=0\bigr)$, for $i \in I$.
\item $\operatorname{Out}_{D,j}^\pi(\bar{x},y) := \bigl(\exists \bar{u}(\operatorname{Post}_\pi(\bar{x},\bar{u}) \wedge y=\operatorname{Born}_{D,j}(\bar{u}))\bigr) \vee \bigl(\neg \operatorname{Exec}_\pi(\bar{x}) \wedge y=0\bigr)$, for $D \in \mathcal D$ and $j \in O_D$.
\end{itemize}
\end{definition}

The formula $\operatorname{Exec}_\pi(\bar{x})$ expresses that the history $\pi$ is executable at the prior $\bar{x}$. If $\pi$ is executable at $\bar{x}$, $\operatorname{Coord}_i^\pi(\bar{x},y)$ defines $y$ as the $i$th coordinate of the posterior reference prior, and $\operatorname{Out}_{D,j}^\pi(\bar{x},y)$ defines $y$ as the Born probability of outcome $j$ for $D$ evaluated at that posterior reference prior. If $\pi$ is not executable at $\bar{x}$, both formulas define $y$ as $0$, in accordance with the totalization convention in Definition~\ref{def:posterior-quantities}.

The following definition associates with each term $t$ of $\mathcal L_0$ a first-order graph formula $\operatorname{Val}_t(\bar{x},y)$ for the function that maps an admissible prior $\bar{x}$ to the value denoted by $t$.

\begin{definition}[First-order graph formulas for terms]\label{def:graph-formulas-for-terms}
For each term $t$ of $\mathcal L_0$, the formula $\operatorname{Val}_t(\bar{x},y)$ over $\Sigma_K$ is defined by recursion on the formation of $t$ as follows.
\begin{itemize}
\item $\operatorname{Val}_a(\bar{x},y) := (y=a)$, for each constant symbol $a$.
\item $\operatorname{Val}_{h_i}(\bar{x},y) := (y=x_i)$.
\item $\operatorname{Val}_{q_j^D}(\bar{x},y) := (y=\operatorname{Born}_{D,j}(\bar{x}))$.
\item $\operatorname{Val}_{r_\pi}(\bar{x},y) := \operatorname{Prob}_\pi(\bar{x},y)$.
\item $\operatorname{Val}_{h_i^\pi}(\bar{x},y) := \operatorname{Coord}_i^\pi(\bar{x},y)$.
\item $\operatorname{Val}_{q_j^{D,\pi}}(\bar{x},y) := \operatorname{Out}_{D,j}^\pi(\bar{x},y)$.
\item $\operatorname{Val}_{t+u}(\bar{x},y) := \exists y_1 \exists y_2 \bigl(\operatorname{Val}_t(\bar{x},y_1) \wedge \operatorname{Val}_u(\bar{x},y_2) \wedge y=y_1+y_2\bigr)$.
\item $\operatorname{Val}_{-t}(\bar{x},y) := \exists y_1 \bigl(\operatorname{Val}_t(\bar{x},y_1) \wedge y=-y_1\bigr)$.
\item $\operatorname{Val}_{t \cdot u}(\bar{x},y) := \exists y_1 \exists y_2 \bigl(\operatorname{Val}_t(\bar{x},y_1) \wedge \operatorname{Val}_u(\bar{x},y_2) \wedge y=y_1 \cdot y_2\bigr)$.
\end{itemize}
\end{definition}

The formula $\operatorname{Val}_t(\bar{x},y)$ states that $y$ is the value denoted by $t$ at a logical state whose prior component is $\bar{x}$. The record marker does not appear because the denotation of a term depends only on the prior component of a logical state.

\begin{proposition}[Effective constructibility of graph formulas for terms]\label{prop:graph-formulas-for-terms-effective}
There is an effective procedure that, given a term $t$ of $\mathcal L_0$, computes the formula $\operatorname{Val}_t$.
\end{proposition}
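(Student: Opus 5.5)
The plan is to define the procedure by primitive recursion on the formation of the term $t$, following the clauses of Definition~\ref{def:graph-formulas-for-terms} one at a time. This mirrors Proposition~\ref{prop:reduction-effective} and Proposition~\ref{prop:graph-formulas-for-histories-effective}. First, I will fix an effective coding of terms of $\mathcal L_0$ as finite strings. The alphabet consists of the finitely many indexed symbols $h_i$ and $q_j^D$, codes of histories, and constant symbols, where each constant is paired with its code in the fixed presentation of $K$. The function symbols $+$, $-$, $\cdot$ complete the alphabet. I will also fix an effective coding of first-order formulas over $\Sigma_K$, in which each constant symbol for an element of $K$ is represented by its code. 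Parsing a term into its outermost constructor and immediate subterms is then computable.

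Second, I will handle the atomic clauses. For a constant $a$, output $(y=a)$ using the supplied code of $a$; this uses clause (ii) of Definition~\ref{def:effective-semialgebraic-datum}. For $h_i$, output $(y=x_i)$. For $q_j^D$, output $(y=\operatorname{Born}_{D,j}(\bar{x}))$. This term is built from the explicit terms $\operatorname{Aff}_i$ and the kernel values $B^D(j|i)$, whose codes are supplied by clause (i) of Definition~\ref{def:effective-semialgebraic-datum}. The rational constants $d+1$ and $1/d$ also have computable codes in $K$. For $r_\pi$, $h_i^\pi$ and $q_j^{D,\pi}$, first decode $\pi$ and apply the procedure of Proposition~\ref{prop:graph-formulas-for-histories-effective} to obtain $\operatorname{Post}_\pi$ and $\operatorname{Prob}_\pi$. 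Then assemble $\operatorname{Exec}_\pi$, $\operatorname{Coord}_i^\pi$ and $\operatorname{Out}_{D,j}^\pi$ by the explicit syntactic operations of Definition~\ref{def:executability-posterior-value-formulas}. These operations are conjunction, disjunction, negation, existential quantification over a fixed tuple of fresh variables, and substitution of the term $\operatorname{Born}_{D,j}(\bar{u})$.

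Third, I will handle the composite clauses $t+u$, $-t$ and $t\cdot u$. Here the procedure recursively computes $\operatorname{Val}_t$ and $\operatorname{Val}_u$ and then wraps them with existential quantifiers and the arithmetic equation, as in the definition.

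The only point requiring care, which I expect to be the main (mild) obstacle, is variable hygiene. The bound variables $y_1,y_2$ and the bound tuples $\bar{u}$ must be chosen fresh so that $\operatorname{Val}_t(\bar{x},y_1)$ is a correct instance rather than one where capture occurs. I will settle this by a computable renaming policy: at each recursive call, choose variables with indices larger than any occurring in the subformulas already produced. Capture-avoiding substitution on coded formulas is primitive recursive. With that fixed, every clause is an explicit computable operation on codes, and the whole map $t\mapsto\operatorname{Val}_t$ is computable by primitive recursion on term formation.
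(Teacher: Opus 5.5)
Your proposal is correct and takes essentially the same approach as the paper: primitive recursion on the formation of $t$ following Definition~\ref{def:graph-formulas-for-terms}, with Proposition~\ref{prop:graph-formulas-for-histories-effective} and Definition~\ref{def:executability-posterior-value-formulas} supplying the clauses for the atomic terms indexed by histories. You also spell out details the paper leaves implicit, namely coding constants and kernel values through the fixed presentation of $K$ and choosing fresh bound variables.
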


\begin{proof}
The formula $\operatorname{Val}_t$ is computed by primitive recursion on the formation of $t$ from Definition~\ref{def:graph-formulas-for-terms}. Proposition~\ref{prop:graph-formulas-for-histories-effective} provides the formulas $\operatorname{Post}_\pi$ and $\operatorname{Prob}_\pi$ effectively. Definition~\ref{def:executability-posterior-value-formulas} then gives $\operatorname{Exec}_\pi$, $\operatorname{Coord}_i^\pi$, and $\operatorname{Out}_{D,j}^\pi$ effectively from $\operatorname{Post}_\pi$. These constructions supply the clauses for the atomic terms indexed by histories.
\end{proof}

\begin{lemma}[Correctness of graph formulas for terms]\label{lem:graph-formulas-for-terms-correctness}
Let $l$ be a record marker, let $h \in Q$, and let $t$ be a term of $\mathcal L_0$. Then $\mathbb{R} \models \operatorname{Val}_t(h,y)$ if and only if $y=\llbracket t \rrbracket_{(h,l)}$.
\end{lemma}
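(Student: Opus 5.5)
The plan is a structural induction on the formation of $t$, with $h \in Q$ and $l$ fixed. Since the denotation clauses in Definition~\ref{def:satisfaction-relation} never consult the record component, $\llbracket t\rrbracket_{(h,l)}$ depends only on $h$, so $l$ plays no role beyond fixing notation. For each $t$ the claim is that $\operatorname{Val}_t(h,\cdot)$ defines exactly the singleton $\{\llbracket t\rrbracket_{(h,l)}\}$.

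The atomic cases without histories are immediate. For a constant $a$ we use $\llbracket a\rrbracket=a$, and for $h_i$ we use $y=x_i$. For $q_j^D$ we observe that $\operatorname{Born}_{D,j}(h)$ is, by construction of the terms in Definition~\ref{def:ordered-ring-effective-terms}, literally $\sum_i \alpha_i(h)B^D(j|i)=q_j^D(h)$.

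For $r_\pi$ we invoke part (ii) of Lemma~\ref{lem:graph-formulas-for-histories-correctness} directly. For $h_i^\pi$ we split on executability.
\begin{itemize}
\item If $\operatorname{up}_\pi(h)$ is defined, then part (i) of that lemma says the only $\bar u$ with $\operatorname{Post}_\pi(h,\bar u)$ is $\operatorname{up}_\pi(h)$. Hence $\operatorname{Exec}_\pi(h)$ holds, the second disjunct of $\operatorname{Coord}_i^\pi$ fails, and the first disjunct forces $y=(\operatorname{up}_\pi(h))_i=h_i^\pi(h)$.
\item If $\operatorname{up}_\pi(h)$ is undefined, part (i) gives that no $\bar u$ satisfies $\operatorname{Post}_\pi(h,\bar u)$. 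So the first disjunct fails, $\neg\operatorname{Exec}_\pi(h)$ holds, and $y=0=h_i^\pi(h)$ by the totalization convention of Definition~\ref{def:posterior-quantities}.
\end{itemize}
The case $q_j^{D,\pi}$ is identical, with $u_i$ replaced by $\operatorname{Born}_{D,j}(\bar u)$ and using $\operatorname{Born}_{D,j}(\operatorname{up}_\pi(h))=q_j^D(\operatorname{up}_\pi(h))$.

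For the compound cases $t+u$, $-t$ and $t\cdot u$, the induction hypothesis says the witnesses $y_1,y_2$ are uniquely determined as the denotations of the subterms. The existential formula then holds exactly when $y$ is their sum, negation or product, which matches the semantic clauses.

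The only delicate point, and the step I expect to need the most care, is the uniqueness and existence of the posterior witness in the history-indexed cases. That is exactly what part (i) of Lemma~\ref{lem:graph-formulas-for-histories-correctness} supplies. It is essential that part (i) is stated for admissible $h\in Q$, which is why the lemma restricts to $h\in Q$. The remaining bookkeeping is that Lemma~\ref{lem:definability} is not even needed here, since $\operatorname{Exec}_\pi$ is phrased via $\operatorname{Post}_\pi$ itself.
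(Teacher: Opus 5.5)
Your proof is correct and follows the paper's argument essentially step for step. Both use structural induction on $t$, take the $r_\pi$ case from part (ii) of Lemma~\ref{lem:graph-formulas-for-histories-correctness}, and handle the $h_i^\pi$ and $q_j^{D,\pi}$ cases by splitting on executability and using the existence and uniqueness of the $\operatorname{Post}_\pi$ witness from part (i). You split on whether $\operatorname{up}_\pi(h)$ is defined rather than on $\operatorname{Exec}_\pi(h)$, but part (i) makes these two conditions equivalent, so the difference is immaterial.
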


\begin{proof}
We argue by induction on the formation of $t$.

Cases $t \equiv a$, $t \equiv h_i$, and $t \equiv q_j^D$. These cases are immediate from the definitions.

Case $t \equiv r_\pi$. The claim is part (ii) of Lemma~\ref{lem:graph-formulas-for-histories-correctness}.

Case $t \equiv h_i^\pi$.

Subcase $\operatorname{Exec}_\pi(h)$ holds. Then part (i) of Lemma~\ref{lem:graph-formulas-for-histories-correctness} shows that there is a unique $\bar{u}$ with $\operatorname{Post}_\pi(h,\bar{u})$, namely $\bar{u}=\operatorname{up}_{\pi}(h)$. Thus $\operatorname{Coord}_i^\pi(h,y)$ holds if and only if $y=(\operatorname{up}_{\pi}(h))_i=h_i^\pi(h)$.

Subcase $\operatorname{Exec}_\pi(h)$ fails. Then $\operatorname{Coord}_i^\pi(h,y)$ holds if and only if $y=0$, which agrees with the semantic convention for $h_i^\pi(h)$.

Case $t \equiv q_j^{D,\pi}$. The same argument gives two subcases.

Subcase $\operatorname{Exec}_\pi(h)$ holds. Then $\operatorname{Out}_{D,j}^\pi(h,y)$ holds if and only if $y=\operatorname{Born}_{D,j}(\operatorname{up}_{\pi}(h))=q_j^{D,\pi}(h)$.

Subcase $\operatorname{Exec}_\pi(h)$ fails. Then $\operatorname{Out}_{D,j}^\pi(h,y)$ holds if and only if $y=0$, which agrees with the semantic convention for $q_j^{D,\pi}(h)$.

Case $t \equiv u+v$. By definition, $\mathbb{R} \models \operatorname{Val}_{u+v}(h,y)$ if and only if there exist $y_1,y_2$ such that $\operatorname{Val}_{u}(h,y_1)$, $\operatorname{Val}_v(h,y_2)$, and $y=y_1+y_2$. By the induction hypothesis, this is equivalent to $y_1=\llbracket u \rrbracket_{(h,l)}$ and $y_2=\llbracket v \rrbracket_{(h,l)}$, and hence to $y=\llbracket u \rrbracket_{(h,l)}+\llbracket v \rrbracket_{(h,l)}=\llbracket u+v \rrbracket_{(h,l)}$.

Cases $t \equiv -u$ and $t \equiv u \cdot v$. These cases follow in the same way from their defining clauses and the induction hypothesis.

This completes the induction.
\end{proof}

Recall that $\mathrm{Rec}$ denotes the finite set of record markers $\{\bot\} \cup \{(D,j) : D \in \mathcal{D},\ j \in O_D\}$.

The final translation step passes from terms to formulas. Since record markers are finite and external to the ordered field, they are handled by fixing one marker at a time and translating relative to it.

\begin{definition}[First-order translation of the fragment without dynamic operators]\label{def:static-fo-translation}
For each record marker $l \in \mathrm{Rec}$ and each formula $\chi \in \mathcal L_0$, the first-order formula $\operatorname{FO}_l(\chi;\bar{x})$ is defined by recursion on the complexity of $\chi$ as follows.
\begin{itemize}
\item $\operatorname{FO}_l(\mathsf{Last}(D,j);\bar{x}) := (0=0)$, for $l=(D,j)$.
\item $\operatorname{FO}_l(\mathsf{Last}(D,j);\bar{x}) := (0 \neq 0)$ otherwise.
\item $\operatorname{FO}_l(t \ge 0;\bar{x}) := \exists y \bigl(\operatorname{Val}_t(\bar{x},y) \wedge y \ge 0\bigr)$.
\item $\operatorname{FO}_l(\neg \chi;\bar{x}) := \neg \operatorname{FO}_l(\chi;\bar{x})$.
\item $\operatorname{FO}_l(\chi \wedge \psi;\bar{x}) := \operatorname{FO}_l(\chi;\bar{x}) \wedge \operatorname{FO}_l(\psi;\bar{x})$.
\end{itemize}
\end{definition}

For every $h\in Q$, the formula $\operatorname{FO}_l(\chi;\bar{x})$ represents the truth of $\chi$ at the logical state $(h,l)$ under the assignment $\bar{x}=h$. The two clauses for $\mathsf{Last}(D,j)$ are truth constants determined by the fixed record marker $l$, while arithmetic atomic formulas are translated using the formulas $\operatorname{Val}_t$ for their terms.

\begin{proposition}[Effective constructibility of the first-order translation]\label{prop:static-translation-effective}
There is an effective procedure that, given a record marker $l$ and a formula $\chi \in \mathcal L_0$, computes the formula $\operatorname{FO}_l(\chi;\bar{x})$ and the sentence $\forall \bar{x}(\Theta_Q(\bar{x}) \to \operatorname{FO}_l(\chi;\bar{x}))$.
\end{proposition}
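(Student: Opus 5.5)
The plan is to obtain the procedure by primitive recursion on the complexity of $\chi$, following the clauses of Definition~\ref{def:static-fo-translation}. The inputs are finite objects. The record marker $l$ ranges over the finite set $\mathrm{Rec}$, which can be coded effectively together with the finite alphabet of outcome pairs $(D,j)$. Formulas of $\mathcal L_0$ are finite strings over an alphabet whose constant symbols carry codes in the fixed effective presentation of $K$, by clause (ii) of Definition~\ref{def:effective-semialgebraic-datum}. The first step is therefore to fix a Gödel coding of $\Sigma_K$-formulas. In this coding, each constant symbol for an element of $K$ is represented by its code, so the usual syntactic operations (conjunction, negation, existential quantification, substitution of terms) become computable maps on codes.

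Next I would treat the atomic cases. For $\mathsf{Last}(D,j)$, the procedure compares $l$ with $(D,j)$. Equality of markers is decidable because $\mathrm{Rec}$ is finite and explicitly listed, so the output is either $(0=0)$ or $(0\neq0)$. For $(t\ge0)$, I would invoke Proposition~\ref{prop:graph-formulas-for-terms-effective} to compute $\operatorname{Val}_t(\bar x,y)$. The procedure then chooses a variable $y$ not occurring among $\bar x$ and the bound variables of $\operatorname{Val}_t$, which is a computable choice, and forms $\exists y(\operatorname{Val}_t(\bar x,y)\wedge y\ge0)$. The Boolean clauses only apply $\neg$ and $\wedge$ to outputs that are already computed recursively. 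Hence the map $(l,\chi)\mapsto\operatorname{FO}_l(\chi;\bar x)$ is computed by structural recursion. Because parsing of $\chi$ is decidable, this recursion is well-founded and effective.

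For the sentence, I take $\Theta_Q$ to be the fixed quantifier-free formula supplied with $\mathfrak M$. It is a fixed finite object whose constants carry codes, by clause (iii) of Definition~\ref{def:effective-semialgebraic-datum}. Its code is therefore a parameter of the algorithm. The procedure forms $\Theta_Q(\bar x)\to\operatorname{FO}_l(\chi;\bar x)$ using the abbreviation for $\to$. It then prefixes the $d^2$ universal quantifiers $\forall x_i$, for $i\in I$, in a fixed order, which is a computable operation since $d$ is fixed.

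I expect the only delicate point to be bookkeeping rather than mathematics. Variable capture must be avoided when the nested $\operatorname{Val}$, $\operatorname{Post}$ and $\operatorname{Prob}$ formulas are combined, and every coefficient appearing in the output must be a $\Sigma_K$-constant with a computable code. Examples are the rationals $d+1$ and $1/d$, and the kernel values. The second requirement follows from the effectiveness of $K$ together with clauses (i) and (ii) of Definition~\ref{def:effective-semialgebraic-datum}. For the first, I would rename bound variables to fresh indices beyond every index already used.
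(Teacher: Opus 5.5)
Your proposal is correct and follows the same route as the paper: you compute $\operatorname{FO}_l(\chi;\bar{x})$ by primitive recursion on $\chi$, handle arithmetic atoms with Proposition~\ref{prop:graph-formulas-for-terms-effective}, and then effectively form the universal closure with the fixed antecedent $\Theta_Q$. The extra bookkeeping you add (G\"odel coding, codes for the constants, avoiding variable capture) is more detail than the paper's two-line proof gives, but it does not change the argument.
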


\begin{proof}
The formula $\operatorname{FO}_l(\chi;\bar{x})$ is computed by primitive recursion on the complexity of $\chi$, using Proposition~\ref{prop:graph-formulas-for-terms-effective} in the arithmetic atomic case. Its universal closure with antecedent $\Theta_Q$ is then obtained effectively.
\end{proof}

The next lemma relates truth in $\mathcal L_0$ at a logical state to the corresponding first-order formula.

\begin{lemma}[Correctness of the first-order translation]\label{lem:static-translation-correctness}
Let $\chi \in \mathcal L_0$, let $l \in \mathrm{Rec}$, and let $h \in Q$. Then $(h,l) \models \chi$ if and only if $\mathbb{R} \models \operatorname{FO}_l(\chi;h)$.
\end{lemma}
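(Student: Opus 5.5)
The plan is a structural induction on the complexity of $\chi\in\mathcal L_0$, with the record marker $l\in\mathrm{Rec}$ and the admissible prior $h\in Q$ fixed throughout. Since $\mathcal L_0$ has no dynamic operators, the only formation clauses are $\mathsf{Last}(D,j)$, arithmetic atoms $(t\ge 0)$, negation and conjunction. These match the clauses of Definition~\ref{def:static-fo-translation} one for one. No appeal to histories or to the reduction machinery is needed beyond what is already packaged in Lemma~\ref{lem:graph-formulas-for-terms-correctness}.

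\textbf{Record atoms.} For $\chi\equiv\mathsf{Last}(D,j)$, the satisfaction relation gives $(h,l)\models\mathsf{Last}(D,j)$ if and only if $l=(D,j)$. By definition, $\operatorname{FO}_l(\mathsf{Last}(D,j);\bar x)$ is $(0=0)$ exactly when $l=(D,j)$ and is $(0\ne 0)$ otherwise. Hence both sides hold under the same condition, independently of $h$.

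\textbf{Arithmetic atoms.} For $\chi\equiv(t\ge0)$, I will invoke Lemma~\ref{lem:graph-formulas-for-terms-correctness}. It says that, for the fixed $h\in Q$, the formula $\operatorname{Val}_t(h,y)$ holds in $\mathbb R$ exactly when $y=\llbracket t\rrbracket_{(h,l)}$. So $\operatorname{Val}_t(h,\cdot)$ defines a singleton, and therefore
\[
\mathbb R\models\exists y\bigl(\operatorname{Val}_t(h,y)\wedge y\ge0\bigr)
\quad\text{if and only if}\quad
\llbracket t\rrbracket_{(h,l)}\ge0 .
\]
The right-hand side is exactly $(h,l)\models(t\ge0)$. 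This step is where the real content lies: the existence and uniqueness of the witness $y$ is what makes the existential translation faithful. Both facts are already supplied by the earlier lemma, which in turn rests on Lemma~\ref{lem:graph-formulas-for-histories-correctness}. The one point to check is that the hypothesis $h\in Q$ is available, since the graph formulas are only guaranteed correct on admissible inputs; it is assumed in the statement.

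\textbf{Boolean cases.} These are immediate from the induction hypothesis and the classical clauses. For negation, $(h,l)\models\neg\chi$ iff $(h,l)\not\models\chi$ iff $\mathbb R\not\models\operatorname{FO}_l(\chi;h)$ iff $\mathbb R\models\operatorname{FO}_l(\neg\chi;h)$. Conjunction is handled the same way. I expect no genuine obstacle here; the only care needed is to keep $l$ fixed across the induction, since the translation is indexed by $l$.
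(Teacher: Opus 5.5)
Your proposal is correct and follows the paper's proof essentially step for step. Both argue by structural induction on $\chi$ with $l$ and $h$ fixed, read the $\mathsf{Last}$ case off Definition~\ref{def:static-fo-translation}, handle arithmetic atoms via Lemma~\ref{lem:graph-formulas-for-terms-correctness} (your remark that the unique witness $y$ makes the existential translation faithful is what the paper leaves implicit), and dispatch the Boolean cases by the classical clauses.
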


\begin{proof}
We argue by induction on the complexity of $\chi$.

Case $\chi \equiv \mathsf{Last}(D,j)$. The claim is immediate from Definition~\ref{def:static-fo-translation}.

Case $\chi \equiv (t \ge 0)$. By the semantic clause for arithmetic atoms, $(h,l) \models (t \ge 0)$ if and only if $\llbracket t \rrbracket_{(h,l)} \ge 0$. Lemma~\ref{lem:graph-formulas-for-terms-correctness} makes this condition equivalent to the existence of a real number $y$ such that $\operatorname{Val}_t(h,y)$ and $y \ge 0$. By Definition~\ref{def:static-fo-translation}, this is equivalent to $\mathbb{R} \models \operatorname{FO}_l(\chi;h)$.

Cases $\chi \equiv \neg\psi$ and $\chi \equiv \psi \wedge \theta$. These cases follow immediately from the induction hypothesis and the classical clauses in both semantics.
\end{proof}

Consequently, validity in $\mathcal L_0$ is equivalent to a finite family of first-order sentences over real closed fields.

\begin{corollary}[First-order characterization of validity in $\mathcal L_0$]\label{cor:first-order-characterization-static-validity}
For every formula $\chi \in \mathcal L_0$, the following are equivalent.
\begin{enumerate}
\item $\models_{(Q,\mathcal D,B,U)} \chi$.
\item For every record marker $l \in \mathrm{Rec}$, one has $\mathbb{R} \models \forall \bar{x} \bigl(\Theta_Q(\bar{x}) \to \operatorname{FO}_l(\chi;\bar{x})\bigr)$.
\end{enumerate}
\end{corollary}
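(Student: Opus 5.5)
The plan is to unfold the definition of validity and then apply Lemma~\ref{lem:static-translation-correctness} one logical state at a time. By Definition~\ref{def:validity-over-well-formed-core-datum}, statement (i) says that $(h,l)\models\chi$ for every $h\in Q$ and every $l\in\mathrm{Rec}$. For fixed $l$ and $h\in Q$, Lemma~\ref{lem:static-translation-correctness} makes this equivalent to $\mathbb{R}\models\operatorname{FO}_l(\chi;h)$. Hence (i) is equivalent to the following condition: for every $l\in\mathrm{Rec}$ and every $h\in Q$, $\mathbb{R}\models\operatorname{FO}_l(\chi;h)$.

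The remaining step replaces quantification over $h\in Q$ by the relativized first-order quantifier. By Definition~\ref{def:effective-semialgebraic-presentation}, $Q$ is exactly the set of $h\in\mathbb{R}^{d^2}$ with $\mathbb{R}\models\Theta_Q(h)$. Consider $\mathbb{R}\models\forall\bar{x}(\Theta_Q(\bar{x})\to\operatorname{FO}_l(\chi;\bar{x}))$ and evaluate it under an assignment $\bar{x}=h$:

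\begin{itemize}
\item if $h\notin Q$, the antecedent fails and the implication holds vacuously;
\item if $h\in Q$, the implication holds exactly when $\mathbb{R}\models\operatorname{FO}_l(\chi;h)$.
\end{itemize}

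So the universal sentence holds precisely when $\operatorname{FO}_l(\chi;h)$ holds for all $h\in Q$. Combining this with the previous paragraph, for each $l$ separately, gives (i)$\Leftrightarrow$(ii).

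I expect no real obstacle, but one point needs care. Lemma~\ref{lem:static-translation-correctness} and the graph-formula lemmas are stated only for $h\in Q$, and off $Q$ the formula $\operatorname{FO}_l(\chi;\bar{x})$ may behave arbitrarily. The guard $\Theta_Q$ in the sentence is exactly what makes this harmless, so the proof should state explicitly that points outside $Q$ are discharged by the false antecedent. The finiteness of $\mathrm{Rec}$ is not needed for the equivalence itself; it matters only for later effective use.
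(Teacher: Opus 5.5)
Your proof is correct and matches the paper's argument: you unfold validity, apply Lemma~\ref{lem:static-translation-correctness} pointwise for each $h\in Q$ and $l\in\mathrm{Rec}$, and then use the fact that $\Theta_Q$ defines $Q$ to relativize the universal quantifier. Your remark that points outside $Q$ are discharged by the false antecedent spells out the step the paper compresses into ``since $Q$ is defined by $\Theta_Q$.''
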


\begin{proof}
By Definition~\ref{def:validity-over-well-formed-core-datum}, validity over the datum means truth at every state $(h,l)$ with $h \in Q$ and $l \in \mathrm{Rec}$. By Lemma~\ref{lem:static-translation-correctness}, this is equivalent to the truth in $\mathbb{R}$ of $\operatorname{FO}_l(\chi;h)$ for every such pair $(h,l)$. Since $Q$ is defined by $\Theta_Q$, the result follows.
\end{proof}

\subsection{Effective calculus}\label{subsec:effective-calculus}

The effective calculus rests on the transfer of first-order truth between $K$ and $\mathbb{R}$ and on the decidability of the resulting theory. By model completeness of the theory of real closed fields, the inclusion $K\subseteq\mathbb{R}$ is elementary. Effective quantifier elimination, together with the effective presentation of $K$, reduces each sentence over $\Sigma_K$ to a quantifier-free sentence whose truth is decidable from the field operations and order on $K$. Hence the complete first-order theory $\mathrm{Th}_{\Sigma_K}(\mathbb{R})$ is decidable \cite{BasuPollackRoy2006,Marker2002}. Fix once and for all a recursive first-order proof system $\mathsf{RCF}_{K}$ that is sound and complete for that theory.

\begin{definition}[Validity sentences]\label{def:validity-sentences}
For $\chi \in \mathcal L_0$ and a record marker $l \in \mathrm{Rec}$, set $\operatorname{Sent}_l(\chi) := \forall \bar{x}\bigl(\Theta_Q(\bar{x}) \to \operatorname{FO}_l(\chi;\bar{x})\bigr)$.
\end{definition}

The sentence $\operatorname{Sent}_l(\chi)$ expresses validity of $\chi$ at the fixed record marker $l$ over the admissible prior space.

The preceding constructions determine a calculus for the object language. Its definition combines propositional reasoning, reduction from $\mathcal L$ to $\mathcal L_0$, and external first-order proofs over real closed fields.

\begin{definition}[Recursive QBism calculus]\label{def:recursive-qbism-calculus}
A \emph{QBism proof over the fixed effectively semialgebraic datum} is a finite annotated sequence of formulas from $\mathcal L$ in which each line is one of the following.
\begin{enumerate}
\item A substitution instance of a propositional tautology.
\item A reduction biconditional $\varphi \leftrightarrow \operatorname{Red}(\varphi)$.
\item A formula $\chi \in \mathcal L_0$ together with, for every record marker $l \in \mathrm{Rec}$, a finite $\mathsf{RCF}_{K}$ proof of the sentence $\operatorname{Sent}_l(\chi)$.
\item A formula obtained from earlier formulas by modus ponens.
\end{enumerate}
We write $\mathsf{QB}_{\mathfrak M} \vdash \varphi$ if there exists such a proof whose final formula is $\varphi$.
\end{definition}

\medskip

\begin{remark}[Recursive proof checking]
Because $\mathrm{Rec}$ is finite, Propositions~\ref{prop:reduction-effective} and \ref{prop:static-translation-effective} provide effective procedures for constructing every reduction biconditional and every sentence $\operatorname{Sent}_l(\chi)$ required at a proof line. The remaining verification consists of checking propositional instances, applications of modus ponens, and the attached $\mathsf{RCF}_{K}$ proofs, each of which is recursive. Therefore every finite annotated sequence can be checked recursively, so proof checking for $\mathsf{QB}_{\mathfrak M}$ is recursive.
\end{remark}

\subsection{Metatheoretic consequences}\label{subsec:metatheoretic-consequences}

The global reduction theorem moves formulas from $\mathcal L$ into $\mathcal L_0$, and the first-order translation sends $\mathcal L_0$ into formulas over real closed fields. The calculus $\mathsf{QB}_{\mathfrak M}$ incorporates both operations, yielding the soundness and completeness theorem below.

\begin{theorem}[Recursive soundness and completeness]\label{thm:recursive-completeness}
Let $(Q,\mathcal D,B,U)$ be an effectively semialgebraic well-formed core datum over $K$ with fixed defining formula $\Theta_Q$, and let $\mathfrak M:=(Q,\mathcal D,B,U,\Theta_Q,K)$. Then, for every formula $\varphi \in \mathcal L$, one has $\mathsf{QB}_{\mathfrak M} \vdash \varphi$ if and only if $\models_{(Q,\mathcal D,B,U)} \varphi$.
\end{theorem}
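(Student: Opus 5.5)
The proof splits into soundness and completeness, each handled by combining Theorem~\ref{thm:global-reduction} with Corollary~\ref{cor:first-order-characterization-static-validity}.

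\emph{Soundness.} I will show by induction on the length of a $\mathsf{QB}_{\mathfrak M}$ proof that every line is valid over $(Q,\mathcal D,B,U)$.
\begin{itemize}
\item Substitution instances of propositional tautologies are valid, because the satisfaction clauses for $\neg$ and $\wedge$ are classical at each logical state.
\item A reduction biconditional $\varphi\leftrightarrow\operatorname{Red}(\varphi)$ holds at every state $\sigma$ by the first part of Theorem~\ref{thm:global-reduction}, so it is valid.
\item For a line of type (iii), the attached proofs give $\mathsf{RCF}_K\vdash\operatorname{Sent}_l(\chi)$ for each $l\in\mathrm{Rec}$. By soundness of $\mathsf{RCF}_K$ for $\mathrm{Th}_{\Sigma_K}(\mathbb R)$, we get $\mathbb R\models\operatorname{Sent}_l(\chi)$ for every $l$. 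Corollary~\ref{cor:first-order-characterization-static-validity} then yields $\models\chi$.
\item Modus ponens preserves validity, since validity is truth at every state and $\to$ is classical pointwise.
\end{itemize}

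\emph{Completeness.} Suppose $\models_{(Q,\mathcal D,B,U)}\varphi$, and put $\chi:=\operatorname{Red}(\varphi)$.
\begin{enumerate}
\item By Lemma~\ref{lem:static-fragment-reduction}, $\chi\in\mathcal L_0$.
\item By the second part of Theorem~\ref{thm:global-reduction}, $\models\chi$.
\item Corollary~\ref{cor:first-order-characterization-static-validity} gives $\mathbb R\models\operatorname{Sent}_l(\chi)$ for each of the finitely many $l\in\mathrm{Rec}$.
\item Each $\operatorname{Sent}_l(\chi)$ is a sentence over $\Sigma_K$, because all constants in it come from kernel values, constant symbols, and the coefficients of $\Theta_Q$, all of which lie in $K$ by Definition~\ref{def:effective-semialgebraic-datum}. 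Hence $\operatorname{Sent}_l(\chi)\in\mathrm{Th}_{\Sigma_K}(\mathbb R)$, and completeness of $\mathsf{RCF}_K$ supplies a finite proof of it.
\end{enumerate}
The $\mathsf{QB}_{\mathfrak M}$ proof is then assembled as follows:
\begin{itemize}
\item a line $\chi$ of type (iii), carrying the $|\mathrm{Rec}|$ attached proofs;
\item the line $\varphi\leftrightarrow\operatorname{Red}(\varphi)$ of type (ii);
\item the tautology instance $(\varphi\leftrightarrow\chi)\to(\chi\to\varphi)$;
\item two applications of modus ponens, yielding $\varphi$.
\end{itemize}
Since $\leftrightarrow$ is an abbreviation built from $\neg$ and $\wedge$, this tautology instance is a genuine substitution instance in the official syntax.

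\emph{Main point needing care.} The only nontrivial bookkeeping is the check in step 4 that the validity sentences really lie in the signature $\Sigma_K$. Here $\operatorname{Aff}_i$ uses rationals $d+1$ and $1/d$, the Born and update terms use kernel values in $K$, and $\Theta_Q$ is over $K$ by assumption. I also have to make sure that truth in $\mathbb R$ is the notion that $\mathsf{RCF}_K$ captures. This is exactly how $\mathsf{RCF}_K$ was fixed: it is sound and complete for $\mathrm{Th}_{\Sigma_K}(\mathbb R)$. No transfer to $K$ is needed at this stage, though elementarity of $K\subseteq\mathbb R$ would give the same result if one preferred to read the theory in $K$. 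Everything else is a direct appeal to earlier results.
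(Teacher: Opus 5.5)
Your proposal is correct and follows the paper's proof essentially step for step. Soundness is shown by checking that each clause of Definition~\ref{def:recursive-qbism-calculus} preserves validity, using Theorem~\ref{thm:global-reduction}, soundness of $\mathsf{RCF}_K$, and Corollary~\ref{cor:first-order-characterization-static-validity}; completeness introduces $\operatorname{Red}(\varphi)$ via clause (iii), the reduction biconditional via clause (ii), the tautology $(\varphi\leftrightarrow\operatorname{Red}(\varphi))\to(\operatorname{Red}(\varphi)\to\varphi)$, and two applications of modus ponens, with your explicit check that each $\operatorname{Sent}_l(\chi)$ is a $\Sigma_K$-sentence spelling out what the paper leaves implicit.
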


\begin{proof}
We first prove soundness. A line admitted under clause (i) of Definition~\ref{def:recursive-qbism-calculus} is valid because the Boolean clauses of the semantics are classical. A line admitted under clause (ii) is valid by Theorem~\ref{thm:global-reduction}. For clause (iii), let $\chi \in \mathcal L_0$ be accompanied by $\mathsf{RCF}_{K}$ proofs of all sentences $\operatorname{Sent}_l(\chi)$ with $l \in \mathrm{Rec}$. The soundness of $\mathsf{RCF}_{K}$ implies that every such sentence is true in $\mathbb{R}$. Corollary~\ref{cor:first-order-characterization-static-validity} then yields $\models_{(Q,\mathcal D,B,U)} \chi$. Finally, clause (iv) preserves validity because modus ponens does. Therefore every theorem of $\mathsf{QB}_{\mathfrak M}$ is valid over the datum.

We next prove completeness. Assume $\models_{(Q,\mathcal D,B,U)} \varphi$. By Theorem~\ref{thm:global-reduction}, the formula $\operatorname{Red}(\varphi)$ is valid, and Lemma~\ref{lem:static-fragment-reduction} ensures that $\operatorname{Red}(\varphi) \in \mathcal L_0$. By Corollary~\ref{cor:first-order-characterization-static-validity}, the sentence $\operatorname{Sent}_l(\operatorname{Red}(\varphi))$ is true in $\mathbb{R}$ for every record marker $l \in \mathrm{Rec}$. Since $\mathsf{RCF}_{K}$ is complete for $\mathrm{Th}_{\Sigma_K}(\mathbb{R})$, each of these sentences has a proof in $\mathsf{RCF}_{K}$. Clause (iii) of Definition~\ref{def:recursive-qbism-calculus} therefore permits the introduction of $\operatorname{Red}(\varphi)$.

Clause (ii) permits the introduction of the reduction biconditional $\varphi \leftrightarrow \operatorname{Red}(\varphi)$. Clause (i) permits the substitution instance $(\varphi \leftrightarrow \operatorname{Red}(\varphi)) \to (\operatorname{Red}(\varphi) \to \varphi)$ of the propositional tautology $(\alpha \leftrightarrow \beta)\to(\beta\to\alpha)$. Two applications of modus ponens under clause (iv) then yield $\varphi$. Hence $\mathsf{QB}_{\mathfrak M} \vdash \varphi$.
\end{proof}

The theorem yields decidability immediately, because the ambient first-order theory is decidable and all reductions involved are effective. The following corollary states this consequence explicitly.

\begin{corollary}[Decidability of validity]\label{cor:decidability}
Let $(Q,\mathcal D,B,U)$ be an effectively semialgebraic well-formed core datum over $K$ with fixed defining formula $\Theta_Q$, and let $\mathfrak M:=(Q,\mathcal D,B,U,\Theta_Q,K)$. Then the set $\{\varphi \in \mathcal L \mid \mathsf{QB}_{\mathfrak M} \vdash \varphi\}$ is decidable. Equivalently, validity over the datum $(Q,\mathcal D,B,U)$ is decidable.
\end{corollary}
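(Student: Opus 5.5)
To decide validity of a given $\varphi \in \mathcal L$, I would first apply Theorem~\ref{thm:recursive-completeness}. It says that $\mathsf{QB}_{\mathfrak M}\vdash\varphi$ exactly when $\models_{(Q,\mathcal D,B,U)}\varphi$, so the two decision problems in the corollary coincide, and it suffices to decide validity.

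The procedure has three steps.
\begin{enumerate}
\item Compute $\operatorname{Red}(\varphi)$ using Proposition~\ref{prop:reduction-effective}. By Lemma~\ref{lem:static-fragment-reduction}, this formula lies in $\mathcal L_0$.
\item Enumerate the finite set $\mathrm{Rec}$; it is finite because $\mathcal D$ and every $O_D$ are finite. For each $l\in\mathrm{Rec}$, compute the sentence $\operatorname{Sent}_l(\operatorname{Red}(\varphi))$ over $\Sigma_K$ using Proposition~\ref{prop:static-translation-effective}.
\item Decide the truth in $\mathbb R$ of each of these finitely many sentences. This uses the decidability of $\mathrm{Th}_{\Sigma_K}(\mathbb R)$ recorded at the start of Subsection~\ref{subsec:effective-calculus}, which follows from effective quantifier elimination, the effective presentation of $K$, and the elementarity of $K\subseteq\mathbb R$. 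Output ``valid'' exactly when all the sentences are true.
\end{enumerate}

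For correctness, Theorem~\ref{thm:global-reduction} gives that $\varphi$ is valid if and only if $\operatorname{Red}(\varphi)$ is valid. Corollary~\ref{cor:first-order-characterization-static-validity} then turns the latter into the truth of all the sentences $\operatorname{Sent}_l(\operatorname{Red}(\varphi))$. Termination holds because each step is a total computable map and only finitely many sentences are checked.

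The only delicate point is that the decision step really is effective for sentences carrying constants from $K$. Every constant symbol, including the kernel values and the coefficients occurring in $\Theta_Q$, is supplied with a code by Definition~\ref{def:effective-semialgebraic-datum}. Quantifier elimination then yields a variable-free sentence whose atoms compare polynomial expressions in these coded constants. These atoms are decided by the recursive field operations and order on $K$. Since $K\preceq\mathbb R$, the truth value computed in $K$ agrees with truth in $\mathbb R$.

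As an alternative, completeness together with the recursive enumerability of proofs gives semidecidability of theoremhood. Semidecidability of nonvalidity would then follow from enumerating $\mathsf{RCF}_K$-proofs of the negations $\neg\operatorname{Sent}_l(\operatorname{Red}(\varphi))$, which exist because $\mathrm{Th}_{\Sigma_K}(\mathbb R)$ is complete. I would use the direct decision procedure above, however.
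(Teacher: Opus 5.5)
Your proposal is correct and follows essentially the same route as the paper. Both compute $\operatorname{Red}(\varphi)$, form the finitely many sentences $\operatorname{Sent}_l(\operatorname{Red}(\varphi))$ for $l\in\mathrm{Rec}$, and decide them via the decidability of $\mathrm{Th}_{\Sigma_K}(\mathbb{R})$, with correctness from Theorem~\ref{thm:global-reduction} and Corollary~\ref{cor:first-order-characterization-static-validity} and the equivalence with theoremhood from Theorem~\ref{thm:recursive-completeness}.
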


\begin{proof}
Given a formula $\varphi \in \mathcal L$, compute $\operatorname{Red}(\varphi)$ by Proposition~\ref{prop:reduction-effective}. By Lemma~\ref{lem:static-fragment-reduction}, this belongs to $\mathcal L_0$. By Proposition~\ref{prop:static-translation-effective}, for every record marker $l \in \mathrm{Rec}$ one can effectively compute the sentence $\operatorname{Sent}_l(\operatorname{Red}(\varphi))$.

By Corollary~\ref{cor:first-order-characterization-static-validity} and Theorem~\ref{thm:global-reduction}, the formula $\varphi$ is valid over the datum if and only if, for every record marker $l \in \mathrm{Rec}$, the sentence $\operatorname{Sent}_l(\operatorname{Red}(\varphi))$ is true in $\mathbb{R}$. Because $\mathrm{Th}_{\Sigma_K}(\mathbb{R})$ is decidable, one can decide each of these finitely many sentences effectively. Therefore validity of $\varphi$ is decidable. The equivalence with theoremhood in $\mathsf{QB}_{\mathfrak M}$ follows from Theorem~\ref{thm:recursive-completeness}.
\end{proof}

\medskip

\begin{remark}[Scope of the metatheorem]\label{rem:scope-recursive-theorem}
Theorem~\ref{thm:recursive-completeness} gives the main metatheoretic consequence of the preceding reductions. The central structural point is that, for effectively semialgebraic data, validity in the guarded language is reducible to first-order reasoning over real closed fields. Recursive completeness then follows from the completeness of the ambient real closed field proof system.
\end{remark}

\section{Quantum realization}\label{sec:quantum-realization}

\subsection{Quantum data in SIC coordinates}\label{subsec:quantum-realization-sic-reference-data}

In this section we show that, whenever a \emph{symmetric informationally complete (SIC) reference measurement} exists in the chosen dimension, standard quantum theory in finite dimensions provides a realization of the semantic framework developed in Sections~\ref{sec:semantic-data} and~\ref{sec:guarded-dynamic-language}. Under the additional coefficient assumptions stated in Subsection~\ref{subsec:effective-semialgebraic-quantum-data}, the resulting datum also belongs to the effective class studied in Section~\ref{sec:metatheory}.

Fix a dimension $d \ge 2$ in which a SIC exists, together with a Hilbert space $\mathcal H \cong \mathbb C^d$ and projections $\Pi_1,\dots,\Pi_{d^2}$. Assume that each $\Pi_i$ has rank one and that the operators $H_i := \Pi_i/d$ form a \emph{symmetric informationally complete POVM} \cite{RenesBlumeKohoutScottCaves2004,Zauner2011}. Each $H_i$ is positive, and $\sum_{i \in I} H_i = I_{\mathcal H}$. Informational completeness means that the family $\{H_i\}_{i \in I}$ spans the real vector space of Hermitian operators. The projections satisfy $\operatorname{tr}(\Pi_i \Pi_k)=1$ when $i=k$ and $\operatorname{tr}(\Pi_i \Pi_k)=1/(d+1)$ when $i \neq k$. We use the same index set $I = \{1,\dots,d^2\}$ as in the abstract framework.

We also consider a finite family $\mathcal D$ of actual measurements. For each $D \in \mathcal D$, fix a POVM $D = \{D_j\}_{j \in O_D}\!$ on $\mathcal H$ and a \emph{quantum instrument} $\mathcal I^D = \{\mathcal I^D_j\}_{j \in O_D}$ with the same outcome set $O_D$, in the standard sense of quantum measurement theory \cite{DaviesLewis1970,Holevo2012}.

We assume that each map $\mathcal I^D_j$ is completely positive and trace nonincreasing, that $\sum_{j \in O_D} \mathcal I^D_j$ is trace preserving, and that the instrument is associated with the POVM $D$ in the sense that $\operatorname{tr}(\mathcal I^D_j(X))=\operatorname{tr}(X D_j)$ holds for every positive operator $X$. In the present framework, the actual measurement type $D$ carries both a current outcome probability assignment and an update mechanism, and the instrument specifies the latter. The assumptions are stated in this explicit form to make the connection with the abstract kernel conditions transparent.

\medskip

\begin{remark}[Unitary evolution as a singleton outcome]\label{rem:unitary-evolution-singleton}
Unitary time evolution can be represented inside the same formalism. If $V$ is a unitary operator, take an actual measurement with singleton outcome set $O_D=\{0\}$, effect $D_0=I_{\mathcal H}$, and instrument map $\mathcal I^D_0(X)=VXV^*$. Then $B^D(0|i)=1$ for every $i$, the corresponding history step is always executable, and the update sends a state to its unitary image in SIC coordinates. Thus singleton measurements encode this evolution within the same framework.
\end{remark}

The quantum data fixed above determine a candidate for a well-formed core datum. The Born kernels and update kernels will be read off from the POVMs and instruments. The admissible prior space will be obtained by expressing density operators in SIC coordinates. The resulting datum will be denoted by $(Q_{\mathrm{SIC}},\mathcal D,B,U)$.

\subsection{Realization of the kernels}\label{subsec:quantum-realization-kernels}

We begin with the kernels determined by the quantum data.

\begin{definition}[Quantum Born kernels]\label{def:quantum-born-kernels}
For each $D \in \mathcal D$, each $i \in I$, and each $j \in O_D$, set $B^D(j|i) := \operatorname{tr}(D_j \Pi_i)$.
\end{definition}

\begin{definition}[Quantum update kernels]\label{def:quantum-update-kernels}
For each $D \in \mathcal D$, each $j \in O_D$, and each $i,m \in I$, set $U^{D,j}(m|i) := \operatorname{tr}(H_m\,\mathcal I^D_j(\Pi_i))$.
\end{definition}

The value $B^D(j|i)$ is the quantum probability of obtaining the actual outcome $j$ when the input state is the projection $\Pi_i$, which has rank one. The value $U^{D,j}(m|i)$ is the joint weight of obtaining the instrument outcome $j$ and then the SIC reference outcome $m$ from the unnormalized state $\mathcal I^D_j(\Pi_i)$. Summing this joint weight over $m$ gives $B^D(j|i)$, as verified below.

The following proposition verifies that these formulas satisfy the abstract kernel requirements.

\begin{proposition}[Kernel conditions from quantum data]\label{prop:quantum-realization-kernels}
For every $D \in \mathcal D$, the family $B^D(j|i)$ satisfies the Born kernel conditions of Subsection~\ref{subsec:semantic-data-kernels}. For every $D \in \mathcal D$ and every $j \in O_D$, the family $U^{D,j}(m|i)$ also satisfies the required update kernel conditions.
\end{proposition}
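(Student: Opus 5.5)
The plan is to verify the four required conditions directly from the definitions: nonnegativity and normalization for the Born kernels, then nonnegativity and the marginal identity for the update kernels. Each follows from positivity of the relevant operators and the linearity and cyclicity of the trace, together with the structural assumptions on the POVMs and instruments fixed in Subsection~\ref{subsec:quantum-realization-sic-reference-data}.

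For the Born kernels, I would first note that $D_j\ge 0$ and $\Pi_i\ge 0$. This gives $\operatorname{tr}(D_j\Pi_i)=\operatorname{tr}(\Pi_i^{1/2}D_j\Pi_i^{1/2})\ge 0$, because $\Pi_i^{1/2}D_j\Pi_i^{1/2}$ is positive. Since $\Pi_i$ is a projection, one may equally write $\Pi_iD_j\Pi_i$. For normalization I would use $\sum_{j\in O_D}D_j=I_{\mathcal H}$, which is part of $D$ being a POVM. Linearity of the trace then gives $\sum_j B^D(j|i)=\operatorname{tr}(\Pi_i)$, and this equals $1$ because $\Pi_i$ has rank one. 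One could also obtain this from $\operatorname{tr}(\Pi_i\Pi_i)=1$.

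For the update kernels, nonnegativity follows from complete positivity, indeed from positivity alone, of $\mathcal I^D_j$. This makes $\mathcal I^D_j(\Pi_i)\ge 0$, and $H_m=\Pi_m/d\ge0$, so the same square-root argument gives $\operatorname{tr}(H_m\mathcal I^D_j(\Pi_i))\ge0$. For the marginal identity I would sum over $m$ and use $\sum_{m\in I}H_m=I_{\mathcal H}$. This yields $\sum_m U^{D,j}(m|i)=\operatorname{tr}(\mathcal I^D_j(\Pi_i))$. The compatibility assumption $\operatorname{tr}(\mathcal I^D_j(X))=\operatorname{tr}(XD_j)$ for positive $X$ is then applied with $X=\Pi_i$, which gives $\operatorname{tr}(\Pi_iD_j)=B^D(j|i)$.

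No step is a genuine obstacle. The only points needing care are the following:
\begin{itemize}
\item justifying that the trace of a product of two positive operators is nonnegative, which I would settle by the square-root rewriting;
\item noting that the compatibility hypothesis is stated only for positive $X$, so it applies to $\Pi_i$ without any extension to general Hermitian operators;
\item checking that the kernel values are real, which follows because each is the trace of a positive operator.
\end{itemize}
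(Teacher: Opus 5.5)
Your proposal is correct and follows the paper's proof essentially step for step: the square-root argument for nonnegativity, $\sum_j D_j=I_{\mathcal H}$ with $\operatorname{tr}(\Pi_i)=1$ for normalization, and $\sum_m H_m=I_{\mathcal H}$ followed by the compatibility identity at $X=\Pi_i$ for the marginal condition. The only difference is cosmetic: you conjugate by $\Pi_i^{1/2}$ where the paper conjugates by $D_j^{1/2}$ (and by $H_m^{1/2}$ in the update case), which makes no difference to the argument.
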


\begin{proof}
Let $D \in \mathcal D$, $i \in I$, and $j \in O_D$. Since $D_j$ and $\Pi_i$ are positive operators, the operator $D_j^{1/2}\Pi_i D_j^{1/2}$ is positive, so its trace is nonnegative. Therefore $B^D(j|i) = \operatorname{tr}(D_j \Pi_i) \ge 0$.

To prove normalization, we compute $\sum_{j \in O_D} B^D(j|i) = \sum_{j \in O_D} \operatorname{tr}(D_j \Pi_i) = \operatorname{tr}\bigl((\sum_{j \in O_D} D_j)\Pi_i\bigr) = \operatorname{tr}(I_{\mathcal H}\Pi_i) = \operatorname{tr}(\Pi_i) = 1$, because $D$ is a POVM and each $\Pi_i$ has rank one. Hence $j \mapsto B^D(j|i)$ is a probability distribution on $O_D$.

Now let $m \in I$ as well. Since $\mathcal I^D_j$ is completely positive, it sends the positive operator $\Pi_i$ to the positive operator $\mathcal I^D_j(\Pi_i)$. Since $H_m$ is positive, the operator $H_m^{1/2}\mathcal I^D_j(\Pi_i)H_m^{1/2}$ is positive, so its trace is nonnegative. Therefore $U^{D,j}(m|i) = \operatorname{tr}(H_m\,\mathcal I^D_j(\Pi_i)) \ge 0$.

For the normalization identity, $\sum_{m \in I} U^{D,j}(m|i)=\operatorname{tr}\bigl((\sum_{m \in I}H_m)\mathcal I^D_j(\Pi_i)\bigr)$. Since $\sum_{m \in I}H_m=I_{\mathcal H}$, this value is $\operatorname{tr}(\mathcal I^D_j(\Pi_i))$. The instrument outcome map $\mathcal I^D_j$ has effect $D_j$, so $\operatorname{tr}(\mathcal I^D_j(X))=\operatorname{tr}(XD_j)$ for every positive operator $X$. Applying this identity to $X=\Pi_i$ gives $\operatorname{tr}(\mathcal I^D_j(\Pi_i))=\operatorname{tr}(D_j\Pi_i)=B^D(j|i)$. Therefore $\sum_{m \in I}U^{D,j}(m|i)=B^D(j|i)$, as required.
\end{proof}

\subsection{Realization of the admissible prior space}\label{subsec:quantum-admissible-prior-space}

Let $\rho$ be a density operator on $\mathcal H$. We associate to $\rho$ its \emph{SIC coordinate vector} $h(\rho) := (h_i(\rho))_{i \in I}$, where $h_i(\rho) := \operatorname{tr}(\rho H_i)$. Since each $H_i$ is positive and $\sum_{i \in I} H_i = I_{\mathcal H}$, one has $h_i(\rho) \ge 0$ for every $i \in I$ and $\sum_{i \in I} h_i(\rho) = \operatorname{tr}(\rho) = 1$. Hence $h(\rho) \in \Delta(I)$.

\begin{definition}[Quantum admissible prior space]\label{def:quantum-admissible-prior-space}
The set $Q_{\mathrm{SIC}} := \{h(\rho) \mid \rho \text{ is a density operator on } \mathcal H\}$ is the \emph{quantum admissible prior space}.
\end{definition}

Geometrically, $Q_{\mathrm{SIC}}$ is the \emph{SIC image} of the quantum state space under the SIC coordinate map. Because the SIC is informationally complete, the map $\rho \mapsto h(\rho)$ is injective \cite{RenesBlumeKohoutScottCaves2004}. Thus $Q_{\mathrm{SIC}}$ is a faithful probabilistic representation of the usual quantum state space.

We use the standard \emph{SIC reconstruction formula}. If $h = h(\rho)$, then $\rho = \sum_{i \in I} \Bigl((d+1)h_i - (1/d)\Bigr)\Pi_i$. In the present notation, this reads $\rho = \sum_{i \in I} \alpha_i(h)\Pi_i$. This identity underlies the affine form of the Born rule in SIC coordinates \cite{FuchsSchack2013}.

\subsection{Positive semidefiniteness criterion for the SIC image}\label{subsec:quantum-realization-direct-description}

The SIC reconstruction formula allows one to describe the quantum admissible prior space directly inside the simplex. This description clarifies the geometry of the quantum admissible prior space and will also be used in the proof of effective semialgebraicity.

\begin{lemma}[Characterization of the SIC image]\label{lem:sic-image-characterization}
Assume that a SIC exists in dimension $d$, and let $\Pi_1,\dots,\Pi_{d^2}$ and $H_i:=\Pi_i/d$ be fixed as above. For $h=(h_i)_{i \in I} \in \mathbb{R}^{d^2}$, define $\rho(h):=\sum_{i \in I}\bigl((d+1)h_i-(1/d)\bigr)\Pi_i$. Then the following are equivalent.
\begin{enumerate}
\item $h \in Q_{\mathrm{SIC}}$.
\item $h \in \Delta(I)$ and $\rho(h)$ is positive semidefinite.
\end{enumerate}
\end{lemma}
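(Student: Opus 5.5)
We prove the two implications separately. Both rest on two facts. The first is the SIC reconstruction formula of Subsection~\ref{subsec:quantum-admissible-prior-space}. The second is a trace computation showing that the coordinate map $h(\cdot)$ and the affine map $\rho(\cdot)$ are mutually inverse, on the appropriate domains, between Hermitian operators of trace one and vectors of coordinate sum one.

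For (i)$\Rightarrow$(ii), let $h=h(\rho)$ for a density operator $\rho$. We already observed in Subsection~\ref{subsec:quantum-admissible-prior-space} that $h(\rho)\in\Delta(I)$. The reconstruction formula gives $\rho(h)=\sum_i\alpha_i(h)\Pi_i=\rho$, which is positive semidefinite. Nothing more is needed.

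For (ii)$\Rightarrow$(i), let $h\in\Delta(I)$ with $\rho(h)\ge0$.

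First, $\rho(h)$ is Hermitian, because the coefficients are real and each $\Pi_i$ is Hermitian.

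Second, $\operatorname{tr}\rho(h)=\sum_i\alpha_i(h)=(d+1)\sum_i h_i-d^2/d=(d+1)-d=1$, using $\operatorname{tr}\Pi_i=1$. Hence $\rho(h)$ is a density operator.

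Third, we verify $h(\rho(h))=h$ directly rather than appeal to injectivity. For each $k$,
\[
\operatorname{tr}(\rho(h)H_k)=\frac1d\sum_i\alpha_i(h)\operatorname{tr}(\Pi_i\Pi_k)=\frac1d\Bigl(\alpha_k(h)+\frac{1}{d+1}\sum_{i\ne k}\alpha_i(h)\Bigr).
\]
Using $\sum_i\alpha_i(h)=1$, we have $\sum_{i\ne k}\alpha_i(h)=1-\alpha_k(h)$. The bracket therefore equals $\alpha_k(h)\,\frac{d}{d+1}+\frac1{d+1}$. Substituting $\alpha_k(h)=(d+1)h_k-1/d$, the bracket becomes $dh_k-\frac1{d+1}+\frac1{d+1}=dh_k$. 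Dividing by $d$ gives $h_k$. Thus $h=h(\rho(h))\in Q_{\mathrm{SIC}}$.

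The only step requiring care is this last inverse computation. It uses the overlap values $\operatorname{tr}(\Pi_i\Pi_k)$ and the identity $\sum_i\alpha_i(h)=1$, which is valid precisely because $\sum_ih_i=1$. This is where the hypothesis $h\in\Delta(I)$, or at least normalization, enters in (ii). Nonnegativity of the coordinates is in fact implied by positivity of $\rho(h)$, since $h_k=\operatorname{tr}(\rho(h)H_k)$, but we keep $\Delta(I)$ as stated in the lemma.

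The computation also confirms the reconstruction formula itself. Informational completeness makes $h(\cdot)$ injective on Hermitian operators, and we have just shown that $h(\rho(h(\rho)))=h(\rho)$. Hence $\rho(h(\rho))=\rho$, so the argument is self-contained even without citing the formula.
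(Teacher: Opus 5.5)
Your proposal is correct and follows the paper's proof essentially step for step. For (i)$\Rightarrow$(ii) you use the reconstruction formula, and for (ii)$\Rightarrow$(i) you check Hermiticity and the trace, then verify $h(\rho(h))=h$ directly from the SIC overlaps and $\sum_i\alpha_i(h)=1$. Your closing remark is a valid addition: the same computation together with informational completeness recovers the reconstruction formula, which the paper simply cites.
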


\begin{proof}
Assume first that $h \in Q_{\mathrm{SIC}}$. Then $h=h(\rho)$ for some density operator $\rho$. By the SIC reconstruction formula, $\rho=\rho(h)$. Hence $\rho(h)$ is positive semidefinite. Since $h=h(\rho)$ is a probability vector, one also has $h \in \Delta(I)$.

Conversely, assume that $h \in \Delta(I)$ and that $\rho(h)$ is positive semidefinite. Since each $\Pi_i$ is Hermitian and the coefficients $\bigl((d+1)h_i-(1/d)\bigr)$ are real, the operator $\rho(h)$ is Hermitian. Moreover, $\operatorname{tr}(\rho(h))=\sum_{i \in I}\bigl((d+1)h_i-(1/d)\bigr)\operatorname{tr}(\Pi_i)=(d+1)\sum_{i \in I} h_i-|I|/d=(d+1)-d=1$, because $|I|=d^2$ and $\operatorname{tr}(\Pi_i)=1$ for every $i$. Thus $\rho(h)$ is a density operator.

It remains to show that $h=h(\rho(h))$. Let $k \in I$. Then $\operatorname{tr}(\rho(h)H_k)=(1/d)\sum_{i \in I}\bigl((d+1)h_i-(1/d)\bigr)\operatorname{tr}(\Pi_i\Pi_k)$. Writing $\alpha_i=\alpha_i(h)$, using the SIC overlap relations, and separating the $i=k$ term, we obtain $\operatorname{tr}(\rho(h)H_k)=(1/d)\Bigl(\alpha_k+(1/(d+1))\sum_{i \neq k} \alpha_i\Bigr)=(1/d)\Bigl(\alpha_k+(1/(d+1))\bigl(\sum_{i \in I} \alpha_i-\alpha_k\bigr)\Bigr)$. Since $h \in \Delta(I)$, one has $\sum_{i \in I} \alpha_i=(d+1)\sum_{i \in I} h_i-|I|/d=(d+1)-d=1$. Therefore $\operatorname{tr}(\rho(h)H_k)=(1/d)\Bigl(\alpha_k+(1-\alpha_k)/(d+1)\Bigr)=(1/d)\Bigl((d/(d+1))\alpha_k+(1/(d+1))\Bigr)$. Substituting $\alpha_k=(d+1)h_k-1/d$ yields $\operatorname{tr}(\rho(h)H_k)=h_k$. Since this holds for every $k \in I$, we conclude that $h=h(\rho(h))$. Hence $h \in Q_{\mathrm{SIC}}$.
\end{proof}

\subsection{Qplex geometry of the SIC image}\label{subsec:quantum-realization-qplex-geometry}

The SIC image of the quantum state space satisfies the standard qplex geometry conditions. These conditions supplement the dynamic semantics with geometric constraints on the admissible prior space.

\begin{definition}[Qplex geometry conditions]\label{def:qplex-geometry-conditions}
Let $(Q,\mathcal{D},B,U)$ be a well-formed core datum. We say that $Q$ \emph{satisfies the qplex geometry conditions} if it satisfies the consistency bounds $L_d \le \langle h,s\rangle \le U_d$ for all $h,s \in Q$ and the lower polar condition $Q=Q^\sharp$.
\end{definition}

\begin{proposition}[Qplex properties of the SIC image]\label{prop:quantum-realization-qplex-geometry}
Assume that a SIC exists in dimension $d$. Then the SIC image $Q_{\mathrm{SIC}}$ of the quantum state space is nonempty, satisfies the consistency bounds $L_d \le \langle h,s\rangle \le U_d$ for all $h,s \in Q_{\mathrm{SIC}}$, and satisfies the lower polar condition $Q_{\mathrm{SIC}}=Q_{\mathrm{SIC}}^\sharp$ under the normalization adopted in the present paper.
\end{proposition}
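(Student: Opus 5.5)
The plan is to reduce all three claims to one inner-product identity coming from the SIC reconstruction formula, and then use Lemma~\ref{lem:sic-image-characterization} for the reverse inclusion in the polar condition. Nonemptiness is immediate: the maximally mixed state $I_{\mathcal H}/d$ is a density operator, so $h(I_{\mathcal H}/d)\in Q_{\mathrm{SIC}}$. Its coordinates are all $1/d^2$, since $\operatorname{tr}(H_i)/d=1/d^2$.

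\textbf{Key identity.} Let $x\in\Delta(I)$ be arbitrary, set $\rho(x):=\sum_{i\in I}\alpha_i(x)\Pi_i$ as in Lemma~\ref{lem:sic-image-characterization}, and let $\sigma$ be a density operator with $s=h(\sigma)$. Since $\operatorname{tr}(\Pi_i\sigma)=d\,\operatorname{tr}(H_i\sigma)=d s_i$, I would compute
\[
\operatorname{tr}(\rho(x)\sigma)=\sum_{i\in I}\alpha_i(x)\,d\,s_i=d(d+1)\langle x,s\rangle-\sum_{i\in I}s_i=d(d+1)\langle x,s\rangle-1 .
\]
This gives $\langle x,s\rangle=(1+\operatorname{tr}(\rho(x)\sigma))/(d(d+1))$. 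The computation uses only $x\in\Delta(I)$ (so that $\alpha_i(x)$ is defined) and $\sum_i s_i=1$; it does not use positivity of $\rho(x)$. That is exactly what the polar argument needs.

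\textbf{Consistency bounds.} Take $h=h(\rho)$ and $s=h(\sigma)$ in $Q_{\mathrm{SIC}}$. By the reconstruction formula $\rho(h)=\rho$, so $\langle h,s\rangle=(1+\operatorname{tr}(\rho\sigma))/(d(d+1))$. For density operators one has $0\le\operatorname{tr}(\rho\sigma)\le 1$. Nonnegativity holds because $\operatorname{tr}(\sigma^{1/2}\rho\sigma^{1/2})\ge0$. The upper bound holds because $\operatorname{tr}(\rho\sigma)\le\|\rho\|_{\mathrm{op}}\operatorname{tr}\sigma\le1$. Together these give $L_d\le\langle h,s\rangle\le U_d$.

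\textbf{Lower polar condition.} The inclusion $Q_{\mathrm{SIC}}\subseteq Q_{\mathrm{SIC}}^\sharp$ follows from the lower consistency bound, since $Q_{\mathrm{SIC}}\subseteq\Delta(I)$. For the reverse inclusion, let $x\in Q_{\mathrm{SIC}}^\sharp$. Then $x\in\Delta(I)$ and $\langle x,s\rangle\ge L_d$ for every $s\in Q_{\mathrm{SIC}}$. By the key identity, $\operatorname{tr}(\rho(x)\sigma)\ge0$ for every density operator $\sigma$. Taking $\sigma=|\psi\rangle\langle\psi|$ for arbitrary unit vectors $\psi$ gives $\langle\psi,\rho(x)\psi\rangle\ge0$. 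Since $\rho(x)$ is Hermitian, it is therefore positive semidefinite. Lemma~\ref{lem:sic-image-characterization} then yields $x\in Q_{\mathrm{SIC}}$.

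The only step needing care is the reverse polar inclusion. There one must check that the identity is applied to an arbitrary $x\in\Delta(I)$, not only to points already known to lie in $Q_{\mathrm{SIC}}$. As shown above, the derivation requires nothing beyond $x\in\Delta(I)$, so I do not expect a genuine obstacle.
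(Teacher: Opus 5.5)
Your proposal is correct and follows the paper's proof essentially step for step. It uses the same trace identity $\operatorname{tr}(\rho(x)\tau)=d(d+1)\langle x,s\rangle-1$ for arbitrary $x\in\Delta(I)$, the same derivation of the consistency bounds, and the same pure-state argument combined with Lemma~\ref{lem:sic-image-characterization} for the reverse polar inclusion; the only cosmetic difference is that you bound $\operatorname{tr}(\rho\sigma)\le 1$ via the operator norm rather than via Hilbert--Schmidt Cauchy--Schwarz and $\operatorname{tr}(\rho^2)\le 1$, which is equally valid.
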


\begin{proof}
Nonemptiness is immediate, since every density operator yields an element of $Q_{\mathrm{SIC}}$.

The proof uses the following basic trace identity. Let $x \in \Delta(I)$, let $\tau$ be a density operator, and put $s=h(\tau)$. Since $s_i=\operatorname{tr}(\tau H_i)=\operatorname{tr}(\tau\Pi_i)/d$, one has
$\operatorname{tr}(\rho(x)\tau)=\sum_{i \in I}\alpha_i(x)\operatorname{tr}(\Pi_i\tau)=d\sum_{i \in I}\alpha_i(x)s_i=d(d+1)\langle x,s\rangle-1$. The last equality uses $\sum_{i \in I}s_i=1$, since $s\in\Delta(I)$.

Let $h,s \in Q_{\mathrm{SIC}}$. Choose density operators $\rho$ and $\tau$ with $h=h(\rho)$ and $s=h(\tau)$. By the SIC reconstruction formula, $\rho(h)=\rho$. Applying the identity with $x=h$ gives $\operatorname{tr}(\rho\tau)=d(d+1)\langle h,s\rangle-1$. Since $\rho$ and $\tau$ are positive semidefinite, $\operatorname{tr}(\rho\tau)\ge0$. Also, by the Hilbert--Schmidt Cauchy--Schwarz inequality and the inequalities $\operatorname{tr}(\rho^2)\le1$ and $\operatorname{tr}(\tau^2)\le1$ for density operators, one has $\operatorname{tr}(\rho\tau)\le1$. Therefore $0\le d(d+1)\langle h,s\rangle-1\le1$, which is equivalent to $L_d\le\langle h,s\rangle\le U_d$.

It remains to prove the lower polar condition. The inclusion $Q_{\mathrm{SIC}}\subseteq Q_{\mathrm{SIC}}^\sharp$ follows from the lower consistency bound just proved. Conversely, let $x\in Q_{\mathrm{SIC}}^\sharp$. Then $x\in\Delta(I)$ and $L_d\le\langle x,s\rangle$ for every $s\in Q_{\mathrm{SIC}}$. For every density operator $\tau$, putting $s=h(\tau)$ in the trace identity gives $\operatorname{tr}(\rho(x)\tau)\ge0$. In particular, for every unit vector $v$, taking $\tau=|v\rangle\langle v|$ yields $\langle v,\rho(x)v\rangle\ge0$. Hence $\rho(x)$ is positive semidefinite. By Lemma~\ref{lem:sic-image-characterization}, this implies $x\in Q_{\mathrm{SIC}}$. Thus $Q_{\mathrm{SIC}}^\sharp\subseteq Q_{\mathrm{SIC}}$, and consequently $Q_{\mathrm{SIC}}=Q_{\mathrm{SIC}}^\sharp$.
\end{proof}

\medskip

\begin{remark}[Normalization of the qplex geometry conditions]\label{rem:quantum-realization-qplex-normalization}
The proof above fixes the normalization directly from the SIC reconstruction formula. The qplex literature supplies the terminology and the geometric viewpoint, while the argument derives the consistency bounds and lower polar condition for the SIC image in the notation of this paper.
\end{remark}

\subsection{Agreement with quantum instruments}\label{subsec:quantum-instrument-agreement}

We now compare the abstract Born and update formulas with their quantum counterparts. The following propositions show that the abstract kernels reproduce ordinary quantum probabilities and instrument updates in SIC coordinates.

\begin{proposition}[Agreement of the Born probabilities]\label{prop:quantum-realization-born}
Let $\rho$ be a density operator and let $h = h(\rho)$. Then for every $D \in \mathcal D$ and every $j \in O_D$, one has $q_j^D(h) = \operatorname{tr}(\rho D_j)$.
\end{proposition}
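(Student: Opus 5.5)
The plan is a direct computation that combines the SIC reconstruction formula with the definition of the quantum Born kernels. I start from the affine Born expression $q_j^D(h)=\sum_{i\in I}\alpha_i(h)B^D(j|i)$ and substitute Definition~\ref{def:quantum-born-kernels}, $B^D(j|i)=\operatorname{tr}(D_j\Pi_i)$. This gives $q_j^D(h)=\sum_{i\in I}\alpha_i(h)\operatorname{tr}(D_j\Pi_i)$.

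Next I use linearity of the trace to pull the real coefficients $\alpha_i(h)$ inside. This rewrites the sum as $\operatorname{tr}\bigl(D_j\sum_{i\in I}\alpha_i(h)\Pi_i\bigr)$. Since $h=h(\rho)$, the SIC reconstruction formula from Subsection~\ref{subsec:quantum-admissible-prior-space} gives $\sum_{i\in I}\alpha_i(h)\Pi_i=\rho$. Equivalently, $\rho(h)=\rho$ in the notation of Lemma~\ref{lem:sic-image-characterization}. Hence $q_j^D(h)=\operatorname{tr}(D_j\rho)=\operatorname{tr}(\rho D_j)$ by cyclicity of the trace.

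The only point that needs care is the reconstruction formula itself, which the paper cites rather than proves. If I wanted a self-contained argument, I would verify it by informational completeness. The Hermitian operators $\rho$ and $\sigma:=\sum_i\alpha_i(h)\Pi_i$ have the same pairings with every $H_k$. For $\sigma$ this follows from the overlap computation in the proof of Lemma~\ref{lem:sic-image-characterization}, which gives $\operatorname{tr}(\sigma H_k)=h_k$. For $\rho$ it holds by definition, since $h_k=\operatorname{tr}(\rho H_k)$. Because the $H_k$ span the Hermitian operators, the difference $\rho-\sigma$ is orthogonal to every Hermitian operator, itself included, so $\rho=\sigma$.

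Apart from this, the argument is routine algebra. Once the reconstruction step is in place, there is no real obstacle.
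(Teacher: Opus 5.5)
Your proposal is correct and matches the paper's proof: substitute $B^D(j|i)=\operatorname{tr}(D_j\Pi_i)$, use linearity of the trace, and apply the SIC reconstruction formula $\sum_{i\in I}\alpha_i(h)\Pi_i=\rho$. Your optional check of the reconstruction formula is also sound and not circular, and it goes slightly beyond the paper, which only cites that formula. That check uses the overlap computation from the converse direction of Lemma~\ref{lem:sic-image-characterization}, which needs only $\sum_{i\in I}h_i=1$, together with informational completeness.
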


\begin{proof}
By definition of the core Born expression and of the kernel $B^D(j|i)$, we have $q_j^D(h) = \sum_{i \in I} \alpha_i(h) B^D(j|i) = \sum_{i \in I} \alpha_i(h)\operatorname{tr}(D_j \Pi_i)$. By linearity of the trace, this equals $\operatorname{tr}\bigl(D_j \sum_{i \in I} \alpha_i(h)\Pi_i\bigr)$. By the SIC reconstruction formula, $\sum_{i \in I} \alpha_i(h)\Pi_i = \rho$. Therefore $q_j^D(h) = \operatorname{tr}(D_j \rho) = \operatorname{tr}(\rho D_j)$.
\end{proof}

\begin{proposition}[Agreement of the update rule]\label{prop:quantum-realization-update}
Let $\rho$ be a density operator, let $h = h(\rho)$, let $D \in \mathcal D$, and let $j \in O_D$. If $\operatorname{tr}(\rho D_j) > 0$, then the core update $\operatorname{up}_{D,j}(h)$ is defined and satisfies $\operatorname{up}_{D,j}(h) = h(\rho')$, where $\rho' := \mathcal I^D_j(\rho)/\operatorname{tr}(\mathcal I^D_j(\rho))$.
\end{proposition}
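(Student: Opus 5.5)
The plan is a direct computation that mirrors the proof of Proposition~\ref{prop:quantum-realization-born}. It rests on linearity of the instrument map $\mathcal I^D_j$ and on the SIC reconstruction formula $\rho=\sum_{i\in I}\alpha_i(h)\Pi_i$.

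First, definedness. By Proposition~\ref{prop:quantum-realization-born}, $q_j^D(h)=\operatorname{tr}(\rho D_j)>0$, so the denominator in the one-step update is positive and $\operatorname{up}_{D,j}(h)$ is defined. Since $\rho$ is positive, the association of the instrument with the POVM gives
\[
\operatorname{tr}(\mathcal I^D_j(\rho))=\operatorname{tr}(\rho D_j)=q_j^D(h)>0 .
\]
Hence $\rho'$ is well defined. It is also a density operator, because complete positivity makes $\mathcal I^D_j(\rho)$ positive and the normalization gives it trace one.

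Next, the numerator. Fix $m\in I$. Using the definition of $U^{D,j}(m|i)$, then linearity of $\mathcal I^D_j$ and of the trace, we get
\[
\sum_{i\in I}\alpha_i(h)U^{D,j}(m|i)=\sum_{i\in I}\alpha_i(h)\operatorname{tr}\bigl(H_m\mathcal I^D_j(\Pi_i)\bigr)=\operatorname{tr}\Bigl(H_m\,\mathcal I^D_j\Bigl(\sum_{i\in I}\alpha_i(h)\Pi_i\Bigr)\Bigr).
\]
By the SIC reconstruction formula this equals $\operatorname{tr}(H_m\mathcal I^D_j(\rho))$.

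Dividing by $q_j^D(h)=\operatorname{tr}(\mathcal I^D_j(\rho))$ gives
\[
\operatorname{up}_{D,j}(h)_m=\operatorname{tr}(H_m\rho')=h_m(\rho').
\]
Since $m$ is arbitrary, $\operatorname{up}_{D,j}(h)=h(\rho')$.

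The only delicate point is the linearity step. Some coefficients $\alpha_i(h)$ may be negative, so the combination $\sum_i\alpha_i(h)\Pi_i$ is not a positive combination. We therefore need $\mathcal I^D_j$ to be a real-linear map on Hermitian operators, not merely a map defined on positive operators. I would make this explicit by noting that an instrument component is by definition a linear map on operators; complete positivity is a property of such a linear map.

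The trace identity $\operatorname{tr}(\mathcal I^D_j(X))=\operatorname{tr}(XD_j)$ is used only at the positive operator $\rho$ itself, so its restriction to positive $X$ causes no difficulty. As a byproduct, $\operatorname{up}_{D,j}(h)\in Q_{\mathrm{SIC}}$, which is the closure clause of well-formedness for the quantum datum.
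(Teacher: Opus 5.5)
Your proposal is correct and follows essentially the same route as the paper: definedness via Proposition~\ref{prop:quantum-realization-born}; linearity of $\mathcal I^D_j$ together with the SIC reconstruction formula to identify the numerator as $\operatorname{tr}(H_m\,\mathcal I^D_j(\rho))$; and the instrument--effect identity, applied at the positive operator $\rho$, to identify the denominator with $\operatorname{tr}(\mathcal I^D_j(\rho))$. Your explicit remark that linearity is needed because some $\alpha_i(h)$ may be negative makes precise a step the paper uses only implicitly.
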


\begin{proof}
Assume $\operatorname{tr}(\rho D_j) > 0$. By Proposition~\ref{prop:quantum-realization-born}, this is equivalent to $q_j^D(h) > 0$. Hence the core update $\operatorname{up}_{D,j}(h)$ is defined.

Let $m \in I$, and set $X:=\sum_{i \in I}\alpha_i(h)\Pi_i$. By definition of the core update, $\operatorname{up}_{D,j}(h)_m=(\sum_{i \in I}\alpha_i(h)U^{D,j}(m|i))/q_j^D(h)$. Substituting the update kernel and using linearity gives the numerator $\operatorname{tr}(H_m\,\mathcal I^D_j(X))$. The SIC reconstruction formula gives $X=\rho$. Hence $\operatorname{up}_{D,j}(h)_m=\operatorname{tr}(H_m\,\mathcal I^D_j(\rho))/q_j^D(h)$.

By Proposition~\ref{prop:quantum-realization-born}, $q_j^D(h)=\operatorname{tr}(\rho D_j)$. The defining property of the instrument effect gives $\operatorname{tr}(\rho D_j)=\operatorname{tr}(\mathcal I^D_j(\rho))$. It follows that $\operatorname{up}_{D,j}(h)_m=\operatorname{tr}(H_m\rho')=h_m(\rho')$. Since this holds for every $m \in I$, we conclude that $\operatorname{up}_{D,j}(h)=h(\rho')$.
\end{proof}

Proposition~\ref{prop:quantum-realization-update} immediately yields closure of $Q_{\mathrm{SIC}}$ under one-step updates. Whenever $h = h(\rho) \in Q_{\mathrm{SIC}}$ and $q_j^D(h)>0$, the posterior reference prior $\operatorname{up}_{D,j}(h)$ belongs to $Q_{\mathrm{SIC}}$, because it is the SIC coordinate vector of the density operator $\rho'$.

\subsection{Finite-dimensional realization}\label{subsec:quantum-realization-theorem}

The kernel verification, the geometric result, and the comparison propositions combine into the following realization theorem for finite-dimensional quantum mechanics in SIC coordinates.

\begin{theorem}[Quantum realization theorem]\label{thm:quantum-realization}
Assume that a SIC exists in the chosen dimension $d$. Let $\mathcal H$, $\{H_i\}_{i \in I}$, the POVM family $\mathcal D$, and the instrument family $\mathcal I^D$ be as fixed in Subsection~\ref{subsec:quantum-realization-sic-reference-data}. Define $B^D(j|i)$, $U^{D,j}(m|i)$, and $Q_{\mathrm{SIC}}$ as in Subsections~\ref{subsec:quantum-realization-kernels} and \ref{subsec:quantum-admissible-prior-space}. Then $(Q_{\mathrm{SIC}},\mathcal D,B,U)$ is a well-formed core datum. Moreover, $Q_{\mathrm{SIC}}$ satisfies the qplex geometry conditions. Finally, for every density operator $\rho$, every $D \in \mathcal D$, and every $j \in O_D$, the core probability $q_j^D(h(\rho))$ equals the quantum Born probability $\operatorname{tr}(\rho D_j)$, and whenever this quantity is positive the core update $\operatorname{up}_{D,j}(h(\rho))$ coincides with the SIC coordinate vector of the posterior density operator $\mathcal I^D_j(\rho)/\operatorname{tr}(\mathcal I^D_j(\rho))$.
\end{theorem}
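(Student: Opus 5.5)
The proof will assemble the realization theorem from the results already established in this section, checking each clause of Definition~\ref{def:well-formed-core-datum} in turn. First, the kernel hypotheses required by that definition hold: Proposition~\ref{prop:quantum-realization-kernels} shows that $B^D(j|i)=\operatorname{tr}(D_j\Pi_i)$ is a family of probability distributions in $j$, and that $U^{D,j}(m|i)$ is nonnegative with $\sum_m U^{D,j}(m|i)=B^D(j|i)$. This makes $(Q_{\mathrm{SIC}},\mathcal D,B,U)$ a candidate datum in the sense of Subsection~\ref{subsec:semantic-data-kernels}.

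Next the three clauses. For clause (i), $Q_{\mathrm{SIC}}\subseteq\Delta(I)$ was observed when SIC coordinate vectors were introduced, and nonemptiness is part of Proposition~\ref{prop:quantum-realization-qplex-geometry} (any density operator, e.g.\ $I_{\mathcal H}/d$, gives a point). For clause (ii), I take $h\in Q_{\mathrm{SIC}}$ and write $h=h(\rho)$. Proposition~\ref{prop:quantum-realization-born} gives $q_j^D(h)=\operatorname{tr}(\rho D_j)\ge 0$, since $\rho$ and $D_j$ are positive. The values sum to $\operatorname{tr}(\rho)=1$ because $\sum_j D_j=I_{\mathcal H}$; alternatively this follows from the remark after Definition~\ref{def:well-formed-core-datum}. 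For clause (iii), I take $h=h(\rho)\in Q_{\mathrm{SIC}}$ with $q_j^D(h)>0$, so $\operatorname{tr}(\rho D_j)>0$. Proposition~\ref{prop:quantum-realization-update} then gives $\operatorname{up}_{D,j}(h)=h(\rho')$. It remains to check that $\rho'$ is a density operator: $\mathcal I^D_j(\rho)$ is positive by complete positivity, and its trace $\operatorname{tr}(\rho D_j)$ is positive, so normalizing yields a density operator. Hence $\operatorname{up}_{D,j}(h)\in Q_{\mathrm{SIC}}$.

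The qplex assertion is exactly Proposition~\ref{prop:quantum-realization-qplex-geometry}, read against Definition~\ref{def:qplex-geometry-conditions}. The final agreement statements are Propositions~\ref{prop:quantum-realization-born} and~\ref{prop:quantum-realization-update}, applied to an arbitrary density operator $\rho$.

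I do not expect any real obstacle, since everything reduces to cited results. The only point needing care is that the update and Born propositions are stated for $h=h(\rho)$, so every element of $Q_{\mathrm{SIC}}$ must be represented in this form. This holds by the definition of $Q_{\mathrm{SIC}}$, with uniqueness of $\rho$ guaranteed by injectivity and the reconstruction formula. I also need the well-definedness of $\rho'$, which is handled above.
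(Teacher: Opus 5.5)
Your proposal is correct and follows the paper's proof essentially step for step: the kernel conditions come from Proposition~\ref{prop:quantum-realization-kernels}, clauses (i)--(iii) of Definition~\ref{def:well-formed-core-datum} from the SIC coordinate map and Propositions~\ref{prop:quantum-realization-born} and~\ref{prop:quantum-realization-update}, the qplex conditions from Proposition~\ref{prop:quantum-realization-qplex-geometry}, and the agreement claims from the two agreement propositions. Your explicit check that $\rho'$ is a density operator spells out a detail the paper leaves implicit; the appeal to uniqueness of $\rho$ is harmless but unnecessary, since $q_j^D$ and $\operatorname{up}_{D,j}$ depend only on $h$.
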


\begin{proof}
By Proposition~\ref{prop:quantum-realization-kernels}, the families $B^D$ and $U^{D,j}$ satisfy the kernel conditions specified in Subsection~\ref{subsec:semantic-data-kernels}. Density operators exist, and every $h(\rho)$ belongs to $\Delta(I)$. Hence $Q_{\mathrm{SIC}}$ is a nonempty subset of $\Delta(I)$. By Proposition~\ref{prop:quantum-realization-born}, for every $h(\rho) \in Q_{\mathrm{SIC}}$ the core Born expressions $q_j^D(h(\rho))$ are genuine probability distributions and coincide with the quantum Born probabilities. Hence the Born admissibility condition built into well-formedness is satisfied. Finally, Proposition~\ref{prop:quantum-realization-update} shows that whenever $q_j^D(h(\rho))>0$, the update $\operatorname{up}_{D,j}(h(\rho))$ after one outcome is defined and belongs again to $Q_{\mathrm{SIC}}$. Therefore the update closure condition is also satisfied. Thus $(Q_{\mathrm{SIC}},\mathcal D,B,U)$ is a well-formed core datum.

By Proposition~\ref{prop:quantum-realization-qplex-geometry}, the admissible prior space $Q_{\mathrm{SIC}}$ satisfies the consistency bounds and the lower polar condition. Therefore, by Definition~\ref{def:qplex-geometry-conditions}, $Q_{\mathrm{SIC}}$ satisfies the qplex geometry conditions.

Propositions~\ref{prop:quantum-realization-born} and \ref{prop:quantum-realization-update} establish the remaining claims.
\end{proof}

\subsection{Effective semialgebraic realization}\label{subsec:effective-semialgebraic-quantum-data}

The quantum realization belongs to the effective class from Section~\ref{sec:metatheory} under the explicit coefficient assumptions stated below.

\begin{theorem}[Effective semialgebraic bridge theorem]\label{thm:effective-bridge}
Assume the hypotheses of Theorem~\ref{thm:quantum-realization}. Let $K \subseteq \mathbb{R}$ be an effectively presented real closed field. Assume, in addition, the following.
\begin{enumerate}
\item For each $i \in I$, the real and imaginary parts of all matrix entries of $\Pi_i$ belong to $K$ and are supplied with codes in the fixed effective presentation of $K$.
\item For each $D \in \mathcal D$ and each $j \in O_D$, the real and imaginary parts of all matrix entries of $D_j$ belong to $K$ and are supplied with codes in the fixed effective presentation of $K$.
\item For each $D \in \mathcal D$ and each $j \in O_D$, the instrument map $\mathcal I_j^D$ is given by a finite \emph{Kraus decomposition} $\mathcal I_j^D(X)=\sum_{\nu=1}^{N_{D,j}} M_{j,\nu}^D X (M_{j,\nu}^D)^*$ such that the real and imaginary parts of all matrix entries of every $M_{j,\nu}^D$ belong to $K$ and are supplied with codes in the fixed effective presentation of $K$.
\item Every constant symbol of the object language names an element of $K$ and is supplied with the code of that element in the fixed effective presentation of $K$.
\end{enumerate}
Then the quantum datum $(Q_{\mathrm{SIC}},\mathcal D,B,U)$ is effectively semialgebraic over $K$.
\end{theorem}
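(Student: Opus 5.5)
The plan is to verify the three clauses of Definition~\ref{def:effective-semialgebraic-datum} in turn. Well-formedness itself is already supplied by Theorem~\ref{thm:quantum-realization}.

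\emph{Clause (ii).} This is exactly hypothesis (iv), so nothing needs to be done.

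\emph{Clause (i).} I will show that every kernel value lies in $K$ and has a computable code. The argument uses only that $K$ is a field with recursive operations. Since $K$ is real closed, $K[\sqrt{-1}]$ is algebraically closed, and its arithmetic is computable on pairs of codes.

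By Definition~\ref{def:quantum-born-kernels}, $B^D(j|i)=\operatorname{tr}(D_j\Pi_i)$. This is a finite sum of products of matrix entries of $D_j$ and $\Pi_i$. Hypotheses (i) and (ii) put the real and imaginary parts of these entries in $K$ with codes. The sum can therefore be computed in $K[\sqrt{-1}]$, and its real part is the kernel value; the trace is real because both operators are Hermitian.

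For the update kernels, Definition~\ref{def:quantum-update-kernels} together with hypothesis (iii) gives
$U^{D,j}(m|i)=\sum_{\nu}\operatorname{tr}\bigl(\Pi_m M^D_{j,\nu}\Pi_i (M^D_{j,\nu})^*\bigr)/d$.
This is again a polynomial expression in coded elements of $K$, and $1/d$ is rational. Hence each value is a coded element of $K$.

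\emph{Clause (iii).} I will produce a quantifier-free formula $\Theta_{Q_{\mathrm{SIC}}}(\bar x)$ over $\Sigma_K$. Lemma~\ref{lem:sic-image-characterization} gives
$Q_{\mathrm{SIC}}=\{h\in\Delta(I)\mid \rho(h)\succeq 0\}$.
The simplex part is $\Theta_\Delta$ from Example~\ref{ex:effective-simplex-presentation}.

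For positivity, I will pass to a real matrix. Write $\rho(h)=A(h)+\sqrt{-1}\,C(h)$, where the real and imaginary parts are $d\times d$ matrices whose entries are affine in $\bar x$ with coefficients in $K$; these coefficients come from $\Pi_i$ via hypothesis (i). Then $\rho(h)$ is positive semidefinite if and only if the real symmetric $2d\times 2d$ matrix
\[
S(h)=\begin{pmatrix} A(h) & -C(h)\\ C(h) & A(h)\end{pmatrix}
\]
is positive semidefinite.

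A real symmetric matrix is positive semidefinite if and only if all of its principal minors (not merely the leading ones) are nonnegative. Each principal minor of $S(h)$ is a polynomial in $\bar x$ with coefficients in $K$, and it is obtained effectively by expanding determinants. The conjunction of the $2^{2d}-1$ inequalities ``minor $\ge 0$'' is quantifier-free.

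I therefore set
$\Theta_{Q_{\mathrm{SIC}}}:=\Theta_\Delta\wedge\bigwedge(\text{minor}\ge0)$.
Every solution of this formula lies in $\Delta(I)$, and by Lemma~\ref{lem:sic-image-characterization} its real solution set is exactly $Q_{\mathrm{SIC}}$. This gives the effective semialgebraic presentation required by Definition~\ref{def:effective-semialgebraic-presentation}.

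\emph{Main obstacle.} The delicate point is the positivity encoding. Two facts must be checked carefully: that the principal-minor criterion is the correct test for positive semidefiniteness, as opposed to positive definiteness, which would only need leading minors; and that the realification $S(h)$ preserves positive semidefiniteness. For the latter, $\langle v,\rho v\rangle$ with $v=a+\sqrt{-1}\,b$ equals the real quadratic form $(a,b)^{T}S(h)(a,b)$. With these two points in place, the remaining verification is only that every coefficient produced along the way is computable from the supplied codes.
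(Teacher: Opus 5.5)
Your proposal is correct and follows the paper's proof essentially step by step. You obtain the kernel values as $K$-field expressions in the coded matrix entries (using $H_m=\Pi_m/d$ and the Kraus form) and take clause (ii) directly from the constant-symbol hypothesis. For clause (iii) you combine $\Theta_\Delta$ with Lemma~\ref{lem:sic-image-characterization}, the real symmetric $2d\times 2d$ realification of $\rho(h)$, and nonnegativity of all principal minors, which are the same ingredients the paper uses.
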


\begin{proof}
By Theorem~\ref{thm:quantum-realization}, $(Q_{\mathrm{SIC}},\mathcal D,B,U)$ is a well-formed core datum. It therefore remains to verify the three clauses of Definition~\ref{def:effective-semialgebraic-datum}.

We begin with clause (i), which requires every kernel value to belong to $K$ and to be supplied with a code in the fixed effective presentation of $K$. Let $D \in \mathcal D$, $j \in O_D$, and $k \in I$. Write $D_j=D_j^{\mathrm{re}}+iD_j^{\mathrm{im}}$ and $\Pi_k=\Pi_k^{\mathrm{re}}+i\Pi_k^{\mathrm{im}}$, where these four real matrices have entries in $K$. Then $B^D(j|k)=\operatorname{tr}(D_j\Pi_k)$ is a real number obtained from the real and imaginary parts of the entries of $D_j$ and $\Pi_k$ by the field operations in $K$. Hence $B^D(j|k) \in K$.

Next let $m \in I$ as well. By assumption, $\mathcal I_j^D(X)=\sum_{\nu=1}^{N_{D,j}} M_{j,\nu}^D X (M_{j,\nu}^D)^*$. Therefore $U^{D,j}(m|k)=\sum_{\nu=1}^{N_{D,j}} \operatorname{tr}(H_m M_{j,\nu}^D \Pi_k (M_{j,\nu}^D)^*)$. Each summand is a real number obtained from the real and imaginary parts of the entries of $H_m$, $M_{j,\nu}^D$, and $\Pi_k$ by the field operations in $K$. Since $H_m=\Pi_m/d$, all these real and imaginary parts lie in $K$. Hence each summand lies in $K$, and so does their finite sum. Thus $U^{D,j}(m|k) \in K$ for all $D,j,m,k$.

Because the matrix entries are supplied with codes and the field operations of $K$ are recursive, codes for all the kernel values $B^D(j|i)$ and $U^{D,j}(m|i)$ can also be computed effectively. Thus clause (i) of Definition~\ref{def:effective-semialgebraic-datum} holds.

Assumption (4) verifies clause (ii) of Definition~\ref{def:effective-semialgebraic-datum}.

It remains to produce a quantifier-free formula defining $Q_{\mathrm{SIC}}$. For $h=(h_k)_{k \in I} \in \mathbb{R}^{d^2}$, write $\rho(h)=R(h)+iS(h)$, where $R(h)$ and $S(h)$ are the real and imaginary parts of $\rho(h)$. Since the coefficients $\bigl((d+1)h_k-(1/d)\bigr)$ are real and the real and imaginary parts of the entries of the $\Pi_k$ lie in $K$, every entry of $R(h)$ and every entry of $S(h)$ is an affine linear expression in the coordinates $h_k$ with coefficients in $K$.

Define the real symmetric $2d \times 2d$ matrix $A(h)$ by $A(h):=\begin{pmatrix} R(h) & -S(h) \\ S(h) & R(h) \end{pmatrix}$. We claim that $\rho(h)$ is positive semidefinite if and only if $A(h)$ is positive semidefinite. Indeed, let $x,y \in \mathbb{R}^d$ and put $v:=x+iy \in \mathbb{C}^d$. Since $\rho(h)$ is Hermitian, $R(h)^T=R(h)$ and $S(h)^T=-S(h)$. Expanding both quadratic forms gives $x^T R(h)x+y^T R(h)y-2x^T S(h)y$, and hence $v^* \rho(h) v = \begin{pmatrix} x \\ y \end{pmatrix}^{\!T} A(h) \begin{pmatrix} x \\ y \end{pmatrix}$. Therefore $\rho(h)$ is positive semidefinite if and only if $A(h)$ is positive semidefinite.

By Lemma~\ref{lem:sic-image-characterization}, one has $h \in Q_{\mathrm{SIC}}$ if and only if $h \in \Delta(I)$ and $\rho(h)$ is positive semidefinite. The equivalence established above allows $\rho(h)$ to be replaced by $A(h)$ in this condition. Since $A(h)$ is a real symmetric matrix, positive semidefiniteness is equivalent to the nonnegativity of all principal minors. Each principal minor of $A(h)$ is a polynomial in the coordinates $h_i$ with coefficients in $K$, because the entries of $A(h)$ are affine linear over $K$.

Accordingly, define $\Theta_{Q_{\mathrm{SIC}}}(\bar{x})$ to be the conjunction of the simplex conditions $x_i \ge 0$ for all $i \in I$, $\sum_{i \in I} x_i = 1$, and the finitely many inequalities asserting that every principal minor of $A(\bar{x})$ is nonnegative. This is a quantifier-free formula over $\Sigma_K$, and by the argument above it defines exactly $Q_{\mathrm{SIC}}$.

Finally, the supplied codes for the entries of the matrices $\Pi_i$, together with the effective field operations of $K$, allow the coefficients of all those polynomials to be computed effectively. Hence the formula $\Theta_{Q_{\mathrm{SIC}}}$ can also be constructed effectively. Therefore $Q_{\mathrm{SIC}}$ admits an effective semialgebraic presentation over $K$.

This verifies clause (iii) and completes the proof.
\end{proof}

\medskip

\begin{remark}[Coefficient fields for algebraic data]\label{rem:coefficient-fields-algebraic-data}
When the SIC projectors, POVM effects, and Kraus operators are given by codes for algebraic real and imaginary parts, one may take $K$ to be the real closed field of real algebraic numbers with its standard effective presentation. More generally, the theorem applies to any effectively presented real closed field that contains the finitely many real coefficients needed to describe the chosen quantum data, together with every real number named by a constant symbol of the language. Codes for all these elements must be supplied in the fixed effective presentation of $K$.
\end{remark}

This theorem places the quantum realization within the scope of the recursive metatheorem from Section~\ref{sec:metatheory}. The corresponding corollary follows.

\begin{corollary}[Recursive metatheory for effective quantum realizations]\label{cor:quantum-realization-recursive-completeness}
Under the assumptions of Theorem~\ref{thm:effective-bridge}, the recursive soundness, completeness, and decidability results of Section~\ref{sec:metatheory} apply to the quantum datum $(Q_{\mathrm{SIC}},\mathcal D,B,U)$. In particular, validity in the full QBist language over that datum is decidable.
\end{corollary}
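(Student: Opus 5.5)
The corollary follows by matching hypotheses: Theorem~\ref{thm:effective-bridge} supplies exactly what the metatheory of Section~\ref{sec:metatheory} needs. First, under the assumptions of Theorem~\ref{thm:effective-bridge}, Theorem~\ref{thm:quantum-realization} gives that $(Q_{\mathrm{SIC}},\mathcal D,B,U)$ is a well-formed core datum. Theorem~\ref{thm:effective-bridge} then gives that it is effectively semialgebraic over $K$. In particular, its proof constructs the quantifier-free formula $\Theta_{Q_{\mathrm{SIC}}}$, built from the simplex conditions together with nonnegativity of all principal minors of the real $2d\times 2d$ matrix $A(\bar x)$. I would fix this formula as the defining formula and set $\mathfrak M:=(Q_{\mathrm{SIC}},\mathcal D,B,U,\Theta_{Q_{\mathrm{SIC}}},K)$. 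This places us in the standing situation fixed at the start of Subsection~\ref{subsec:effectively-semialgebraic-data}.

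Second, I would observe that every result from Proposition~\ref{prop:reduction-effective} through Corollary~\ref{cor:decidability} was proved for an arbitrary effectively semialgebraic well-formed core datum with a fixed defining formula. So they specialize verbatim to $\mathfrak M$. Theorem~\ref{thm:recursive-completeness} yields that $\mathsf{QB}_{\mathfrak M}\vdash\varphi$ if and only if $\models_{(Q_{\mathrm{SIC}},\mathcal D,B,U)}\varphi$, for every $\varphi\in\mathcal L$. Corollary~\ref{cor:decidability} yields decidability of validity over the quantum datum. The decision procedure runs as follows:
\begin{enumerate}
\item compute $\operatorname{Red}(\varphi)$;
\item form the finitely many sentences $\operatorname{Sent}_l(\operatorname{Red}(\varphi))$ for $l\in\mathrm{Rec}$;
\item decide each of them in $\mathrm{Th}_{\Sigma_K}(\mathbb R)$.
\end{enumerate}
Since $\mathcal L$ here is the full language with dynamic operators, this gives the "in particular" clause.

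The only point needing care is the effectivity of the inputs, not any new mathematics. The procedures in Propositions~\ref{prop:graph-formulas-for-histories-effective} and \ref{prop:static-translation-effective} use the kernel values, the constant symbols and $\Theta_{Q_{\mathrm{SIC}}}$ as explicitly coded objects. So I would check that the bridge theorem supplies codes for all of them, not merely membership in $K$. Kernel codes are computed from the coded entries of the $\Pi_i$, the $D_j$ and the Kraus operators. Constant codes come from assumption (4). The coefficients of the principal minors are computed effectively. With that confirmed, the corollary is an immediate instantiation.
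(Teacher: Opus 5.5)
Your proposal is correct and matches the paper's proof. The paper likewise invokes Theorem~\ref{thm:effective-bridge} to get that $(Q_{\mathrm{SIC}},\mathcal D,B,U)$ is effectively semialgebraic over $K$, and then applies Theorem~\ref{thm:recursive-completeness} and Corollary~\ref{cor:decidability}; your extra check that codes, not just membership in $K$, are supplied for the kernels, the constants, and $\Theta_{Q_{\mathrm{SIC}}}$ is exactly what the bridge theorem's proof provides.
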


\begin{proof}
By Theorem~\ref{thm:effective-bridge}, the datum $(Q_{\mathrm{SIC}},\mathcal D,B,U)$ is effectively semialgebraic over $K$. Therefore Theorem~\ref{thm:recursive-completeness} and Corollary~\ref{cor:decidability} apply.
\end{proof}

\subsection{A qubit realization}\label{subsec:qubit-realization}

The preceding framework becomes more concrete in the smallest nontrivial quantum case, namely $d=2$. The first example treats a single projective measurement. The second introduces an incompatible projective measurement to exhibit order dependence.

\medskip

\begin{example}[Guarded histories in a qubit SIC realization]\label{ex:qubit-sic-history}
Set $d=2$. Let $\vec{\sigma}=(\sigma_x,\sigma_y,\sigma_z)$ be the Pauli triple. Let $n_1=(1/\sqrt{3})(1,1,1)$, $n_2=(1/\sqrt{3})(1,-1,-1)$, $n_3=(1/\sqrt{3})(-1,1,-1)$, and $n_4=(1/\sqrt{3})(-1,-1,1)$. Define $\Pi_i:=(1/2)(I_{\mathcal H}+n_i\cdot\vec{\sigma})$ for $i=1,2,3,4$, and let $H_i:=(1/2)\Pi_i$. Each $\Pi_i$ is a projection with rank one. Then $\{H_i\}_{i=1}^4$ is a qubit SIC POVM.

First take the actual measurement $D=\{P_0,P_1\}$, where $P_0:=|0\rangle\langle 0|$ and $P_1:=|1\rangle\langle 1|$. The corresponding Born kernel is $B^D(j|i):=\operatorname{tr}(P_j\Pi_i)$. Since the $z$ coordinates of $n_1,n_4$ are $1/\sqrt{3}$ and those of $n_2,n_3$ are $-1/\sqrt{3}$, it follows that $B^D(0|1)=B^D(0|4)=(1/2)(1+1/\sqrt{3})$ and $B^D(0|2)=B^D(0|3)=(1/2)(1-1/\sqrt{3})$. Also, $B^D(1|i)=1-B^D(0|i)$ for each $i$.

Consider the maximally mixed state $\rho_*:=(1/2)I_{\mathcal H}$. Its SIC coordinates are $h^*:=h(\rho_*)=(1/4,1/4,1/4,1/4)$. Since $d=2$, the affine coefficients are $\alpha_i(h^*)=(d+1)h_i^*-1/d=3\cdot1/4-1/2=1/4$. Therefore $q_0^D(h^*)=\sum_{i=1}^4 \alpha_i(h^*)\,B^D(0|i)=1/2$ and $q_1^D(h^*)=1/2$.

Choose the L\"uders instrument $\mathcal I^D_j(X):=P_jXP_j$ for $j=0,1$. Then the update kernel is $U^{D,j}(m|i):=\operatorname{tr}(H_m\,\mathcal I^D_j(\Pi_i))$. Because $P_j\Pi_iP_j=\operatorname{tr}(P_j\Pi_i)\,P_j=B^D(j|i)\,P_j$, we obtain $U^{D,j}(m|i)=(1/2)B^D(j|i)B^D(j|m)$.

The posterior reference prior after outcome $0$ satisfies $\operatorname{up}_{D,0}(h^*)_m=(1/2)B^D(0|m)$. Its first and fourth coordinates are $(1/4)(1+1/\sqrt{3})$, while its second and third coordinates are $(1/4)(1-1/\sqrt{3})$.

Similarly, the first and fourth coordinates of $\operatorname{up}_{D,1}(h^*)$ are $(1/4)(1-1/\sqrt{3})$, while its second and third coordinates are $(1/4)(1+1/\sqrt{3})$. These vectors are the SIC coordinate representations of the posterior density operators $P_0$ and $P_1$.

Now let $\pi:=\varepsilon;(D,0)$. Then $r_\pi(h^*)=q_0^D(h^*)=1/2$. After this history, the posterior density operator is $P_0$, so a second performance of the same measurement satisfies $q_0^{D,\pi}(h^*)=1$ and $q_1^{D,\pi}(h^*)=0$. Accordingly, $r_{\pi;(D,0)}(h^*)=r_\pi(h^*)\,q_0^{D,\pi}(h^*)=1/2$ and $r_{\pi;(D,1)}(h^*)=r_\pi(h^*)\,q_1^{D,\pi}(h^*)=0$. By Lemma~\ref{lem:definability}, $\operatorname{up}_{\pi;(D,1)}(h^*)$ is therefore undefined, while $\operatorname{up}_{\pi;(D,0)}(h^*)$ is defined.

Let $s_*:=(h^*,\bot)$. Since $r_{\pi;(D,1)}(h^*)=0$, one has $s_* \models [\pi;(D,1)]\varphi$ for every formula $\varphi$. Also, $s_* \models \langle \pi;(D,0)\rangle \top$ because $r_{\pi;(D,0)}(h^*)>0$. Moreover, $s_* \models [D:0]\mathsf{Last}(D,0)$, which makes the record component of the semantics explicit.

The guarded formulas from Subsection~\ref{subsec:guarded-examples} become concrete in this example. Since a first outcome $0$ is executable and produces the posterior density operator $P_0$, one has $s_* \models \langle D:0\rangle(q_0^D-(3/4)\ge0)$ and $s_* \models [D:0]\neg\langle D:1\rangle\top$. The first formula states that outcome $0$ is currently possible and that, after it occurs, the agent assigns probability at least $3/4$ to the same outcome in an immediate repetition. The second excludes outcome $1$ from such a repetition.

Moreover, $s_* \not\models \langle D:1\rangle(q_0^D-(3/4)\ge0)$. Here $D:1$ is executable, since $q_1^D(h^*)=1/2$. The failure comes from the posterior commitment after outcome $1$, where the repeated probability of outcome $0$ is $0$.

The same calculation also gives a concrete instance of a mixed reference and actual outcome assertion. Since $\operatorname{up}_{D,0}(h^*)_1=(1/4)(1+1/\sqrt{3})>1/4=h_1^*$ and $q_1^D(h^*)=1/2>0=q_1^{D,\varepsilon;(D,0)}(h^*)$, one has $s_* \models \langle D:0\rangle\top\wedge(h_1^{\varepsilon;(D,0)}-h_1>0)\wedge(q_1^D-q_1^{D,\varepsilon;(D,0)}>0)$. This formula says that the executable outcome $(D,0)$ increases the probability assigned to the first SIC reference outcome while decreasing, in fact eliminating, the posterior probability of the alternative outcome $(D,1)$.

\end{example}

\medskip

\begin{example}[Measurement order in a qubit SIC realization]\label{ex:qubit-order-dependence}
Retain the qubit SIC realization, the maximally mixed prior $h^*$, the logical state $s_*$, and the measurement $D$ from Example~\ref{ex:qubit-sic-history}. Consider also the actual measurement $E=\{P_+,P_-\}$. Let $|+\rangle:=(1/\sqrt{2})(|0\rangle+|1\rangle)$ and $|-\rangle:=(1/\sqrt{2})(|0\rangle-|1\rangle)$, and put $P_+:=|+\rangle\langle +|$ and $P_-:=|-\rangle\langle -|$. Equip this measurement with its L\"uders instrument.

From the maximally mixed state one has $q_+^E(h^*)=1/2$. From the state $P_0$, the outcome $+$ of $E$ has probability $1/2$, and from the state $P_+$, the outcome $0$ of $D$ has probability $1/2$. Hence the histories $\pi_{DE}:=\varepsilon;(D,0);(E,+)$ and $\pi_{ED}:=\varepsilon;(E,+);(D,0)$ are both executable at $h^*$, with $r_{\pi_{DE}}(h^*)=r_{\pi_{ED}}(h^*)=1/4$.

The two orders nevertheless generate different posterior commitments. The posterior density operator after $\pi_{DE}$ is $P_+$, so $q_0^{D,\pi_{DE}}(h^*)=1/2$. The posterior density operator after $\pi_{ED}$ is $P_0$, so $q_0^{D,\pi_{ED}}(h^*)=1$. Therefore $s_* \models \langle \pi_{DE}\rangle\top\wedge\langle \pi_{ED}\rangle\top\wedge(q_0^{D,\pi_{ED}}-q_0^{D,\pi_{DE}}>0)$. This is a concrete instance of the formula in Example~\ref{ex:guarded-order-dependence}. Both histories are executable, but they induce different posterior probabilities for outcome $0$ in a subsequent performance of $D$.
\end{example}

\section{Concluding remarks}\label{sec:concluding-remarks}

\subsection{Discussion}\label{subsec:discussion}

The well-formed core datum provides a common semantic basis for current probability assignments, executable histories, and posterior commitments. Its assumptions comprise the kernel, probability, and closure conditions required by guarded updating. Stronger geometric and Hilbert space conditions can therefore be added at the realization stage without changing the language or its semantics. This layered formulation also permits comparisons between quantum and nonquantum realizations within the same framework.

The global reduction theorem shows that, over a fixed well-formed core datum, every formula with dynamic operators is equivalent to a formula in the fragment without dynamic operators. For each fixed effectively semialgebraic datum, the first-order translation then yields a recursive calculus that is sound and complete, together with a decision procedure. These results characterize the formulas valid over that datum.

Assuming a SIC in the chosen dimension, the realization establishes that standard finite-dimensional quantum mechanics satisfies the abstract requirements. POVMs determine the Born kernels, quantum instruments determine the update kernels, and the SIC image satisfies the consistency bounds and lower polar condition. The effective bridge theorem gives concrete coefficient field assumptions under which this realization is covered by the recursive metatheory. A remaining structural problem is the classification of abstract well-formed core data admitting Hilbert space realizations, together with the geometric or dynamic principles governing realizability.

The decidability theorem applies to each fixed effectively semialgebraic datum. The dynamic operators are indexed by concrete finite histories, which permits their elimination by the reduction developed above. Extensions involving iteration or nondeterministic program operations would require a separate metatheory. The decision procedure is obtained through first-order reasoning over real closed fields and establishes decidability at the structural level, while questions of computational efficiency remain outside the present analysis.

\subsection{Further directions}\label{subsec:further-directions}

The framework suggests several natural lines of inquiry.

One direction is geometric. Since the SIC realization yields an admissible prior space satisfying the qplex geometry conditions, further work can seek additional constraints characterizing the spaces that arise from Hilbert space quantum theory. Such constraints could concern the global organization of extreme points, the relation between the affine Born map and the geometry of admissible prior spaces, or the connection between state space geometry and update closure.

A second direction is dynamic. The present semantics suggests a more detailed study of the algebra of iterated updates, compatibility conditions between Born kernels and update kernels, and dynamic principles that may distinguish Hilbert space realizations from other realizations satisfying the qplex geometry conditions.

A third direction concerns proof theory. Relevant topics include the classification of concrete QBist realizations satisfying the effective hypotheses, the development of calculi formulated directly in the object language that internalize reasoning over real closed fields, and the proof-theoretic consequences of special QBist constraints.

\section*{Funding}
This work was supported by Japan Society for the Promotion of Science (JSPS) KAKENHI [Grant Number JP24K03372].

\end{document}